\newtheorem{definition}{Definition}[section]
\newtheorem{lemma}[definition]{Lemma}
\newtheorem{proposition}[definition]{Proposition}
\newtheorem{theorem}[definition]{Theorem}
\newtheorem{remark}[definition]{Remark}
\numberwithin{equation}{section}
\def\cO{\mathcal{O}}
\def\cU{\mathcal{U}}
\def\eps{\varepsilon}
\def\bR{\mathbb{R}}
\def\bN{\mathbb{N}}
\def\wt{\widetilde}
\def\tr{\mathrm{tr}}
\title{From Hartree dynamics to the relativistic Vlasov equation}
\author{Elia Dietler, Simone Rademacher, Benjamin Schlein \\
\\
Institute for Mathematics, University of Zurich \\ Winterthurerstrasse 190, 
8057 Zurich, Switzerland}
\begin{document}

 \maketitle

\begin{abstract}
We derive the relativistic Vlasov equation from quantum Hartree dynamics for fermions with relativistic dispersion in the mean-field scaling, which is naturally linked with an effective semiclassic limit. 
Similar results in the non-relativistic setting have been recently obtained in \cite{BPSS}. The new 
challenge that we have to face here, in the relativistic setting, consists in controlling the difference between the quantum kinetic energy and the relativistic transport term appearing in the Vlasov equation. 
\end{abstract}

\section{Introduction}
In this paper, we consider the time evolution of large systems of weakly interacting fermions with a relativistic dispersion law. In particular, we are interested in systems of fermions in the mean field regime. Particles are initially confined in a volume of order one and interact through a potential varying on length scales of order one (so that each particles effectively interact with all other particles). 

Because of the Pauli principle, the kinetic energy of the particles is typically of order $N^{4/3}$. To make sure that the competition between kinetic and potential energy is non-trivial, the coupling constant in front of the interaction potential should be small, proportional to $N^{-2/3}$. Writing $\eps = N^{-1/3}$ and appropriately scaling the mass of the fermions, we end up with the many-body Schr\"odinger equation
\begin{equation}\label{eq:schrN} i\partial_t \psi_{N,t}  = \left[ \sum_{j=1}^N \sqrt{1 - \eps^2 \Delta_{x_j}} + \frac{1}{N} \sum_{i<j} V(x_i -x_j) \right] \psi_{N,t} \end{equation}
for the $N$-particle wave function $\psi_{N,t} \in L^2_a (\bR^{3N})$, the subspace of $L^2 (\bR^{3N})$ consisting of functions that are antisymmetric with respect to permutations. Remark that the mean-field regime is linked with a semiclassical limit, with $\eps = N^{-1/3}$ playing the role of Planck's constant (the notation $\eps = N^{-1/3}$ will be used throughout the rest of the paper). 

In \cite{BPS2} it was shown, extending results obtained in the non-relativistic setting in \cite{EESY,BPS1} and later also in \cite{BJPSS,PP,BBPPT}, that the reduced density $\gamma^{(1)}_{N,t} = N \tr_{2,\dots ,N} 
|\psi_{N,t} \rangle \langle \psi_{N,t} |$ associated with the solution of (\ref{eq:schrN}) can be approximated, for initial data close to a Slater determinant with reduced density $\omega_{N}$ satisfying certain natural semiclassical commutator bounds, by the solution of the relativistic Hartree equation 
\begin{equation}\label{eq:HF} i\eps \partial_t \omega_{N,t} = \left[ \sqrt{1-\eps^2 \Delta}  + (V * \rho_t) , \omega_{N,t} \right] \end{equation}
with initial data $\omega_{N,0} = \omega_N$. Here we defined $\rho_t (x) = N^{-1} \omega_{N,t} (x;x)$. At time $t=0$, $\omega_{N,0}$ is the orthogonal projection into the $N$-dimensional subspace of $L^2 (\bR^3)$ spanned by the orbitals building the initial Slater determinant. It is then simple to check that, for all $t \in \bR$, $\omega_{N,t}$ is an orthogonal projection with rank $N$ and therefore the reduced density of a new, evolved, Slater determinant. We normalize here reduced densities so that $\tr \, \gamma_{N,t}^{(1)} = \tr \, \omega_{N,t} = N$ for all $t \in \bR$. 

Through $\eps = N^{-1/3}$, the Hartree equation (\ref{eq:HF}) still depends on $N$. To understand what happens in the limit as $N \to \infty$, we introduce the Wigner transform of $\omega_{N,t}$, defined as a function of the position $x \in \bR^3$ and of the velocity $v \in \bR^3$ through  
\begin{equation}\label{eq:wign} W_{N,t} (x;v) = \frac{\eps^3}{(2\pi)^3} \int dy \, \omega_{N,t} \left( x+ \frac{\eps y}{2} ; x - \frac{\eps y}{2}\right) e^{-i y \cdot v} \end{equation}
and normalized so that 
\[ \int dx dv \, W_{N,t} (x;v) = \eps^3 \, \tr \, \omega_{N,t} = 1.\] 
Observe that, with the Wigner transform $W_{N,t}$ one can reconstruct the reduced density $\omega_{N,t}$ by Weyl quantization, i.e. we find 
\begin{equation}\label{eq:weyl1} \omega_{N,t} (x;y) = N \int dv \, W_{N,t} \left( \frac{x+y}{2} , v \right) e^{i v \cdot \frac{(x-y)}{\eps}} \end{equation} 
Notice moreover that $W_{N,t}$ is defined so that $\int W_{N,t} (x;v) dv = N^{-1} \omega_{N,t} (x;x)$ is the density of particles localized close to $x$ and that, similarly, $\int W_{N,t} (x;v) dx$ is the density of particles with velocity close to $v$. On the other hand, since $W_{N,t}$ is typically not positive, it cannot be interpreted as a density of particles on phase space (this can be seen as an expression of Heisenberg's uncertainty principle). 

With (\ref{eq:HF}), we can derive an equation for the evolution of the Wigner transform $W_{N,t}$. We find:
\begin{equation}\label{eq:idW}
\begin{split}
i \varepsilon \partial_t W_{N,t}(x,v) = \; & \frac{\varepsilon^3}{(2 \pi)^3} \int d y \;  i \varepsilon \partial_t \omega_{N,t} \left( x+ \varepsilon y/2, x- \varepsilon y/2\right) e^{- i v \cdot y}
\\ = \; &\frac{\varepsilon^3}{(2 \pi)^3} \int d y \; \left[ \sqrt{1-\eps^2 \Delta} , \omega_{N,t} \right] \left( x+ \varepsilon y/2, x- \varepsilon y/2\right) e^{- i v \cdot y} \\
&+ \frac{\varepsilon^3}{(2 \pi)^3} \int d y \; \left[ (V*\rho_t) (x+ \eps y/2) - (V*\rho_t) (x-\eps y/2) \right] \\ &\hspace{5cm} \times \omega_{N,t} \left( x+ \varepsilon y/2, x- \varepsilon y/2\right) e^{- i v \cdot y} 
\end{split} 
\end{equation}

It is convenient to express the first term on the r.h.s. of (\ref{eq:idW}) in momentum space. Denoting by
\[ \widehat{\omega}_{N,t} (p;q) = \frac{1}{(2\pi)^3} \int \omega_{N,t} (x;y) e^{-ix \cdot p} e^{i y \cdot q} dx dy \]
the integral kernel of the operator $\omega_{N,t}$ in momentum space, we can write
\[ \begin{split} 
W_{N,t} (x;v) = \; &\frac{\eps^3}{(2\pi)^3} \int dy \; \omega_{N,t} (x+\eps y/2 ; x-\eps y/2) e^{-i y \cdot v} \\
= \; &\frac{\eps^3}{(2\pi)^3} \int dy  \frac{1}{(2\pi)^3} \int dp dq \; \widehat{\omega}_{N,t} (p;q) e^{ip \cdot (x+\eps y/2)} e^{-iq \cdot (x-\eps y/2)} e^{-iy \cdot v} \\ 
= \; &\frac{\eps^3}{(2\pi)^3} \int dp dq \; \widehat{\omega}_{N,t} (p;q) e^{i (p-q) \cdot x} \delta (v - \eps (p+q)/2) \\
= \; & \int dP \;  \widehat{\omega}_{N,t} \left( \frac{v}{\eps} + \frac{P}{2} ; \frac{v}{\eps} - \frac{P}{2} \right) e^{i P \cdot x} \end{split} \]
Hence, we find
\[ \begin{split} 
\frac{\eps^3}{(2\pi)^3} \int dy \, &\left[ \sqrt{1-\Delta} , \omega_{N,t} \right] (x+\eps y/2 ; x-\eps y/2) e^{-i v\cdot y} \\ =\; & \frac{\eps^3}{(2\pi)^6} \int dy dp dq \left( \sqrt{1+\eps^2 p^2} - \sqrt{1+\eps^2 q^2} \right) \widehat{\omega}_{N,t} (p;q) e^{ip \cdot (x+\eps y/2)} e^{-iq \cdot (x-\eps y/2)} e^{-iv\cdot y} \\
= \; & \frac{\eps^3}{(2\pi)^6} \int dp dq \left( \sqrt{1+\eps^2 p^2} - \sqrt{1+\eps^2 q^2} \right) \widehat{\omega}_{N,t} (p;q) e^{ix \cdot (p-q)} e^{-i y  \cdot (v - \eps (p+q)/2)} \\
=\;& \frac{1}{(2\pi)^3} \int dP \, \left( \sqrt{1+ \eps^2 (P/2 + v/\eps)^2} - \sqrt{1+\eps^2 (-P/2 + v/\eps)^2} \right) \widehat{\omega}_{N,t} (P/2 + v/\eps ; - P/2 + v /\eps) e^{iP \cdot x} \end{split} \]
Since
\[ \sqrt{1+\eps^2 (P/2 - v/\eps)^2} - \sqrt{1 + \eps^2 (P/2 + v/\eps)^2} = \frac{\eps P \cdot v}{\sqrt{1+v^2}} + \cO (\eps^2) \]
we expect that
\[ \begin{split} \frac{\eps^3}{(2\pi)^3} \int dy &\left[ \sqrt{1-\eps^2 \Delta} , \omega_{N,t} \right] (x+\eps y/2 ; x-\eps y/2) e^{iy \cdot v} \\ \simeq \; & \eps \int dP \frac{P \cdot v}{\sqrt{1+v^2}} \widehat{\omega}_{N,t} (P/2 + v/\eps ; -P/2 + v/\eps ) e^{i P \cdot x} \\ = \; &-i \eps \frac{v}{\sqrt{1+v^2}} \cdot \nabla_x W_{N,t} (x,v) \end{split} \]
in the limit $N \to \infty$ (and thus $\eps \to 0$). 
Similarly (but staying this time in position space), we may expect that the second term on the r.h.s. of (\ref{eq:idW}) can be approximated by
\[ \begin{split} \frac{\eps^3}{(2\pi)^3} \int dy & \,\left[ (V*\rho_t) (x+\eps y/2) - (V*\rho_t) (x-\eps y/2) \right] \omega_{N,t} (x+\eps y /2 ; x - \eps y/2) e^{-i v\cdot y} \\ & \simeq \frac{\eps^3}{(2\pi)^3} \int dy \nabla (V*\rho_t) (x)  \, \eps y \, \omega_{N,t} (x+\eps y/2 ; x-\eps y/2) e^{-i v\cdot y} = i \eps \nabla (V*\rho_t) (x) \cdot \nabla_v W_{N,t} (x,v) \end{split} \]
This heuristic computation suggests that, in the limit of large $N$, the Wigner transform $W_{N,t}$ associated with the solution of the relativistic Hartree equation (\ref{eq:HF}) converges to a limit $W_t$ satisfying the relativistic Vlasov equation
\begin{equation}\label{eq:rel-vlas} \partial_t W_t + \frac{v}{\sqrt{1+v^2}} \cdot \nabla_x W_t + \nabla (V*\rho_t) \cdot \nabla_v W_t = 0 \end{equation}
with $\rho_t (x) = \int W_t (x,v) dv$. 

The goal of this paper consists in deriving rigorous bounds establishing the convergence of the Hartree dynamics governed by (\ref{eq:HF}) towards the relativistic Vlasov equation (\ref{eq:rel-vlas}). Similar results have been recently obtained in \cite{BPSS} in the non-relativistic setting. Previous works concerning convergence towards the non-relativistic Vlasov equation include \cite{LP,AKN,AKN2,APPP,PP09} and \cite{NS,Sp}, where convergence was established directly starting from the many-body quantum 
evolution. The new challenge that we have to face here, with respect to the non-relativistic case, is the fact that the quantum kinetic energy gives a contribution to the evolution of the Wigner transform (the first term on the r.h.s. of (\ref{eq:idW})) that only approaches the transport term in the Vlasov equation (\ref{eq:rel-vlas}) in the limit $\eps \to 0$ (in the non-relativistic case, the correspondence between the commutator with the Laplace operator at the quantum level and the transport term in the Vlasov equation is exact, for all $\eps > 0$).

%The paper is structured as follows. In Section \ref{sec:main} we collect our main results, stating %convergence towards the relativistic Vlasov equation (\ref{eq:rel-vlas}) with respect to different topologies %and under different regularity assumptions on the initial data. Sections \ref{sec:tr}-\ref{sec:semi} contain %the proof of the results of Section \ref{sec:main}. Finally, in Section \ref{sec:aux} we prove some auxiliary %results, needed to show our main theorems.

\section{Main results}
\label{sec:main}

To state our results, we define some useful norms. For a function $f$ defined on the phase space $(x,v) \in \bR^3 \times \bR^3$ and for $s \in \bN$, we define the Sobolev norm 
\[ \| f \|^2_{H^s}  = \sum_{|\beta| \leq s} \int |\nabla^\beta f (x,v)|^2 dx dv \]
where $\beta = (\beta_1, \dots , \beta_6) \in \bN^6$ is a multi-index and $|\beta| = \sum_{j=1}^6 \beta_j$. For 
$a \in \bN$, we will also use the weighted norms 
\[ \| f \|^2_{H^s_a} = \sum_{|\beta| \leq s} \int (1+x^2 +v^2)^a \, |\nabla^\beta f (x,v)|^2 \]

In our first theorem, we show convergence in trace norm, under strong assumptions on the regularity of the initial data. Recall that $\eps = N^{-1/3}$ throughout the paper. 
\begin{theorem}
\label{thm:trace}
 Let $V \in W^{2,\infty}\left(\mathbb{R}^3\right)$. Let $\omega_N$ be a sequence of reduced densities on $L^2\left(\mathbb{R}^3\right)$, with $\tr \, \omega_N = N, \; 0 \leq \omega_N\leq 1$ and with Wigner transform $W_N$ satisfying $\|W_N\|_{H^6_2}\leq C$, uniformly in $N$. We denote by $\omega_{N,t}$ the solution of the relativistic Hartree equation
 \begin{equation}
 \label{eq:Hartree}
 i \varepsilon \, \partial_t \omega_{N,t}= \left[ \sqrt{1-\varepsilon^2\Delta} + \left( V * \rho_t\right), \omega_{N,t} \right] 
 \end{equation}
 with $\rho_t(x) = N^{-1} \omega_{N,t}(x;x)$ and initial data $\omega_N$. On the other hand, we denote by $\widetilde{W}_{N,t}$ the solution of the Vlasov equation 
\begin{align}
\label{eq:Vlasov}
\partial_t\widetilde{W}_{N,t}+  \frac{v}{\sqrt{1+v^2}} \cdot \nabla_x \widetilde{W}_{N,t}= \nabla\left( V * \widetilde{\rho}_t\right) \cdot \nabla_v \widetilde{W}_{N,t}
\end{align} 
with $\widetilde{\rho}_t(x)=\int d v \; \widetilde{W}_{N,t}(x,v)$ and with initial data $\widetilde{W}_{N,0}= W_N$. Moreover, let $\widetilde{w}_{N,t}$ be the Weyl quantization of $\widetilde{W}_{N,t}$ defined as in \eqref{eq:weyl1}. Then, there exists a constant $C > 0$ (depending on $\|V\|_{W^{2,\infty}}$ and on $\sup_N \| W_N \|_{H^2_4}$, but not on the higher Sobolev norms of $W_N$) such that
\begin{equation}
\label{eq:trace}
\tr \, |\omega_{N,t}-\widetilde{\omega}_{N,t}| \leq C N \eps  \exp( C \exp(C|t|)) \left[ 1+ \sum_{k=1}^4 \varepsilon^k \sup_{N} \| W_N \|_{H_2^{k+2}} \right].
\end{equation}
\end{theorem}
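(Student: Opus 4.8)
The plan is to run a Grönwall argument on the trace-norm distance between the two dynamics, with the key technical input being a good control on how fast the quantum relativistic kinetic term $\sqrt{1-\eps^2\Delta}$ differs from its Vlasov transport counterpart $v/\sqrt{1+v^2}\cdot\nabla_x$. First I would introduce the Weyl quantization $\wt\omega_{N,t}$ of $\wt W_{N,t}$ and compute, using the Vlasov equation \eqref{eq:Vlasov} together with the Weyl calculus, an effective equation of the form $i\eps\partial_t\wt\omega_{N,t}=[\sqrt{1-\eps^2\Delta}+(V*\wt\rho_t),\wt\omega_{N,t}]+\eps\,\mathcal E_{N,t}$, where the error operator $\mathcal E_{N,t}$ collects (i) the semiclassical remainder in replacing the commutator with $(V*\wt\rho_t)$ by $i\eps\nabla(V*\wt\rho_t)\cdot\nabla_v$ acting on the Wigner function, and (ii) the remainder in replacing $i\eps\,[\sqrt{1-\eps^2\Delta},\cdot\,]$ by the relativistic transport term. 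Writing $d_{N,t}=\omega_{N,t}-\wt\omega_{N,t}$, subtracting the two equations, and using the unitarity of the free relativistic propagator $\mathcal U_N(t;0)=e^{-i(t/\eps)\sqrt{1-\eps^2\Delta}}$ to conjugate away the kinetic part, one obtains
\[ i\eps\,\partial_t\big(\mathcal U_N(t;0)^*\,d_{N,t}\,\mathcal U_N(t;0)\big)=\mathcal U_N(t;0)^*\Big([(V*\rho_t),\omega_{N,t}]-[(V*\wt\rho_t),\wt\omega_{N,t}]-\eps\,\mathcal E_{N,t}\Big)\mathcal U_N(t;0). \]
Splitting the potential commutator as $[(V*\rho_t),d_{N,t}]+[(V*(\rho_t-\wt\rho_t)),\wt\omega_{N,t}]$, taking trace norms, and using $|\rho_t-\wt\rho_t|\le N^{-1}\tr|d_{N,t}|$ together with $\|V\|_{W^{1,\infty}}$, the first two terms are bounded by $C\|V\|_{W^{2,\infty}}\,\tr|d_{N,t}|$ (the second one also uses $\tr|\wt\omega_{N,t}|=N$ and a derivative falling on $V$), which feeds the Grönwall loop.

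The heart of the matter is bounding $\eps\,\tr|\mathcal E_{N,t}|$, and this is where I expect the main obstacle. For the potential remainder (ii-type with $V$), a second-order Taylor expansion of $V*\wt\rho_t$ around the midpoint produces an error kernel controlled, after integrating against the Wigner function and passing through the correspondence \eqref{eq:weyl1}, by $\eps^2\|V\|_{W^{2,\infty}}\big(\|\wt W_{N,t}\|_{L^1}+\text{moments}\big)$, so this contributes at order $\eps$ to \eqref{eq:trace}. The genuinely new term is the kinetic remainder: expanding $\sqrt{1+\eps^2(v/\eps\pm P/2)^2}$ as in the introduction, the difference between $\eps^{-1}\big(\sqrt{1+\eps^2 p^2}-\sqrt{1+\eps^2 q^2}\big)$ and the linearization $P\cdot v/\sqrt{1+v^2}$ (with $p=v/\eps+P/2$, $q=v/\eps-P/2$) is $\cO(\eps)$, but only pointwise in $P,v$ — to turn this into a trace-norm bound one must control enough derivatives of $\wt W_{N,t}$, which is precisely why the $H^6_2$ (and the weighted $H^{k+2}_2$) norms enter. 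Concretely I would write the kinetic error as a pseudodifferential operator whose symbol is $r_\eps(x,v,P)\,\widehat{\wt W}_{N,t}(\cdots)$ with $r_\eps=\cO(\eps\langle P\rangle^3\langle v\rangle^{-1})$ by Taylor's theorem with integral remainder, and estimate its trace norm via a standard Kato–Seiler–Simon / Schatten bound, which costs $\|\langle P\rangle^{3}\langle v\rangle^{2}\wt W_{N,t}\|_{L^2}\lesssim\|\wt W_{N,t}\|_{H^3_2}$ after symmetrizing and absorbing the $N$-dependence from the normalization of \eqref{eq:weyl1} (the factor $N$ in \eqref{eq:trace} comes from exactly this). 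The successive powers $\eps^k\|W_N\|_{H^{k+2}_2}$, $k=1,\dots,4$, arise by carrying the Taylor expansion to higher order, each additional order gaining one power of $\eps$ at the price of two extra spatial derivatives on the Wigner function.

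Finally, to close Grönwall one needs the regularity of $\wt W_{N,t}$ to be \emph{propagated} along the Vlasov flow: I would invoke the standard well-posedness theory for the relativistic Vlasov–Poisson-type equation \eqref{eq:Vlasov} with bounded force $\nabla(V*\wt\rho_t)$ (here $V\in W^{2,\infty}$ makes the force field Lipschitz, so the characteristics flow is well defined) to get, for each $s\le 6$ and each weight $a$, a bound $\|\wt W_{N,t}\|_{H^s_a}\le \exp(\exp(C|t|))\,\|W_N\|_{H^s_a}$, with the double-exponential coming from the fact that differentiating the equation $s$ times produces a term $\nabla^{s}(V*\wt\rho_t)\sim\nabla^{s-1}\wt\rho_t\sim\nabla^{s-1}\wt W_{N,t}$, i.e. the top-order estimate is only closed after iterating through all lower orders. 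Plugging these propagation bounds into the Grönwall inequality $\partial_t\tr|d_{N,t}|\le \tfrac{C}{\eps}\|V\|_{W^{2,\infty}}\tr|d_{N,t}|+CN\big[1+\sum_{k=1}^4\eps^k\|\wt W_{N,t}\|_{H^{k+2}_2}\big]$ and integrating yields \eqref{eq:trace}; the leftover factor $\eps$ in front of $N$ survives because the source term carries $\eps\cdot\eps^{-1}=1$ times one genuine power of $\eps$ from the kinetic/potential remainders, while the $\exp(C\exp(C|t|))$ absorbs both the Grönwall constant $e^{C|t|/\eps}\cdot\eps$—no, more carefully, one conjugates so that the $\eps^{-1}$ is harmless—and the Vlasov regularity growth. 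The only real subtlety beyond bookkeeping is making the pseudodifferential/Schatten estimate for the kinetic remainder uniform in $\eps$ with the stated loss of derivatives, which I flag as the crux of the argument.
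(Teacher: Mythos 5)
Your overall strategy---Gr\"onwall on the trace norm, isolating the relativistic kinetic remainder, converting to a Hilbert--Schmidt estimate, and propagating Vlasov regularity---matches the paper's. However, there is a genuine gap in the choice of interaction picture, which you yourself half-notice at the end but do not resolve. You propose to conjugate by the \emph{free} relativistic propagator $\mathcal{U}_N(t;0)=e^{-i(t/\eps)\sqrt{1-\eps^2\Delta}}$. With that choice the term $[(V*\rho_t),\,d_{N,t}]$ survives in the source, and after the $1/\eps$ coming from $i\eps\partial_t$ the Gr\"onwall rate scales like $\|V\|_\infty/\eps$, producing a factor $e^{C|t|/\eps}$ that destroys the bound as $\eps\to0$. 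What actually works---and what the paper does---is to conjugate by the full \emph{interacting} Hartree flow $\mathcal{U}(t;0)$ generated by $h_H(t)=\sqrt{1-\eps^2\Delta}+(V*\rho_t)$. Then $[h_H(t),\,d_{N,t}]$ is absorbed identically and one is left with only three sources: the kinetic remainder $[\sqrt{1-\eps^2\Delta},\widetilde\omega_{N,t}]-A_{N,t}$, the density-difference commutator $[V*(\rho_t-\widetilde\rho_t),\widetilde\omega_{N,t}]$, and the potential Taylor remainder $C_{N,t}$. Crucially, the density-difference commutator is \emph{not} merely bounded by $\|V\|_\infty\,\tr|d_{N,t}|$: its $1/\eps$ is compensated because $[e^{ik\cdot x},\widetilde\omega_{N,t}]$ is itself of size $N\eps$ by the semiclassical commutator structure of the Weyl quantization (this is the content of the BPSS steps the paper invokes). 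Without this commutator gain your plan, even after switching to the interacting picture, would still lose a power of $\eps$ at that spot.

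Two further mismatches of mechanism, not fatal but worth noting. First, the hierarchy $\sum_{k=1}^4\eps^k\|W_N\|_{H^{k+2}_2}$ in \eqref{eq:trace} does not arise from iterating the Taylor expansion of the kinetic symbol to successively higher order, as you suggest. The paper does a single second-order Taylor expansion of $F(p;q)=\sqrt{1+\eps^2p^2}-\sqrt{1+\eps^2q^2}-\tfrac{\eps^2(p-q)\cdot(p+q)}{2\sqrt{1+\frac{\eps^2}{4}(p+q)^2}}$ and then bounds the trace norm via $\tr\,|X|\le C\sqrt N\,\|(1-\eps^2\Delta)(1+x^2)X\|_{\mathrm{HS}}$. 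Passing $(1+\eps^2p^2)(1-\Delta_p)$ through $F(p;q)\,\widehat{\widetilde\omega}_{N,s}(p;q)$ by Leibniz produces three terms, and each of the pointwise bounds on $F,\nabla_pF,\Delta_pF$ contains a leading $\eps^2$ piece plus tails like $\eps^4|p-q|^4,\eps^5|p-q|^4,\eps^6|p-q|^4$; it is the bookkeeping of these tails in the $L^2$ estimate on the Wigner function side that generates the $\eps^k\|W_N\|_{H^{k+2}_2}$ sum. Second, the relativistic Vlasov propagation estimate (Prop.\ 6.4 in the paper) gives a \emph{single} exponential $\|W_t\|_{H^k_2}\le Ce^{C|t|}\|W_0\|_{H^k_2}$, not a double exponential; the outer $\exp(C\exp(C|t|))$ in the final bound is produced by applying Gr\"onwall to a rate that itself grows like $e^{C|t|}$.
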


\begin{remark} In the non-relativistic setting, the well-posedness of the Vlasov equation has been established in \cite{D}. As shown in \cite[Appendix A]{BPSS}, the proof of \cite{D} can be generalized to complex valued initial data. It is also easy to check that the same arguments can be used for the relativistic Vlasov equation (\ref{eq:Vlasov}).
\end{remark}

In the next theorem, we relax partly the regularity assumptions on the initial data. In contrast to Theorem \ref{thm:trace}, we only obtain bounds for the Hilbert-Schmidt norm of the difference 
$\omega_{N,t}-\widetilde{\omega}_{N,t}$. 
\begin{theorem} \label{thm:HS1}
Let $V \in L^1\left( \mathbb{R}^3\right)$ be such that
\begin{align}
\label{eq:assthm2}
\int d p \; | \widehat{V}(p)| (1+|p|^3) < \infty.
\end{align}
Let $\omega_N$ be a sequence of reduced densities on $L^2\left( \mathbb{R}^3 \right)$, with $\tr \, \omega_N=N, \; 0 \leq \omega_N \leq 1$ and with Wigner transform $W_N$ satisfying $\|W_N\|_{H^2_2} \leq C$, uniformly in $N$. As in Theorem \ref{thm:trace}, we denote by $\omega_{N,t}$ the solution of the relativistic Hartree equation \eqref{eq:Hartree} with initial data $\omega_N$ and by $\widetilde{\omega}_{N,t}$ the Weyl quantization of the solution $\widetilde{W}_{N,t}$ of the Vlasov equation \ref{eq:Vlasov} with initial data $\widetilde{W}_{N,0}= W_N$. Then, there exists a constant $C>0$ depending only on $\sup_{N}\|W_N\|_{H^2_2}$ and on the integral \eqref{eq:assthm2}, such that
\begin{align*}
\| \omega_{N,t}- \widetilde{\omega}_{N,t}\|_{\mathrm{HS}}\leq C \sqrt{N}\varepsilon \exp(C \exp(C|t|)).
\end{align*}
\end{theorem}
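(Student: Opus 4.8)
\emph{Proof strategy.} The plan is to run a Gronwall estimate for $\|D_{N,t}\|_{\mathrm{HS}}^2$ with $D_{N,t}:=\omega_{N,t}-\widetilde\omega_{N,t}$. Since $\omega_N$ is self-adjoint its Wigner transform $W_N$ is real valued, hence $\widetilde W_{N,t}$ stays real (the relativistic Vlasov flow preserves this) and $\widetilde\omega_{N,t}$ is self-adjoint. The first step is to apply the Weyl quantization \eqref{eq:weyl1} to \eqref{eq:Vlasov} and compute the equation solved by $\widetilde\omega_{N,t}$: writing $H_0:=\sqrt{1-\eps^2\Delta}$ and $\widetilde\rho_t(x)=\int\widetilde W_{N,t}(x,v)\,dv$, one gets
\[ i\eps\,\partial_t\widetilde\omega_{N,t}=\big[H_0+(V*\widetilde\rho_t),\widetilde\omega_{N,t}\big]+\mathcal{E}_{N,t}, \]
where the \emph{consistency error} $\mathcal{E}_{N,t}=\mathcal{E}_{N,t}^{\mathrm{kin}}+\mathcal{E}_{N,t}^{\mathrm{pot}}$ measures by how much the quantized Vlasov solution fails to solve \eqref{eq:Hartree}. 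Subtracting this from \eqref{eq:Hartree} and adding and subtracting $[V*\rho_t,\widetilde\omega_{N,t}]$,
\[ i\eps\,\partial_t D_{N,t}=\big[H_0+(V*\rho_t),D_{N,t}\big]+\big[V*(\rho_t-\widetilde\rho_t),\widetilde\omega_{N,t}\big]-\mathcal{E}_{N,t}. \]
Differentiating $\|D_{N,t}\|_{\mathrm{HS}}^2=\tr\,D_{N,t}^*D_{N,t}$, the commutator with the self-adjoint operator $H_0+(V*\rho_t)$ drops out (its trace against $D_{N,t}^*$ is real), and since $\widetilde\omega_{N,t}=\omega_{N,t}-D_{N,t}$ and $V*(\rho_t-\widetilde\rho_t)$ is a real multiplication operator, $\mathrm{Im}\,\tr\big(D_{N,t}^*[V*(\rho_t-\widetilde\rho_t),D_{N,t}]\big)=0$. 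Hence
\[ \tfrac{d}{dt}\|D_{N,t}\|_{\mathrm{HS}}^2=\tfrac{2}{\eps}\,\mathrm{Im}\,\tr\Big(D_{N,t}^*\big([V*(\rho_t-\widetilde\rho_t),\omega_{N,t}]-\mathcal{E}_{N,t}\big)\Big), \]
and it remains to bound the two contributions by the square of $\|D_{N,t}\|_{\mathrm{HS}}$ plus a source term of size $\eps^2N$.

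\emph{The consistency error (the relativistic point).} In momentum variables $\mathcal{E}_{N,t}^{\mathrm{kin}}$ has integral kernel $m(\xi,\eta)\,\widehat{\widetilde\omega}_{N,t}(\xi;\eta)$, where, with $g(t)=\sqrt{1+t^2}$, $s=\eps(\xi+\eta)/2$ and $d=\eps(\xi-\eta)$, one has $m=g(s+d/2)-g(s-d/2)-d\,g'(s)$. Because all derivatives of $g$ of order $\geq1$ are bounded, the quadratic term in $d$ cancels and $|m|\leq C|d|^3\leq C\eps^3|\xi-\eta|^3$, while trivially $|m|\leq2\|g'\|_\infty|d|=2\eps|\xi-\eta|$; interpolating gives the crucial bound $|m(\xi,\eta)|\leq C\eps^2(\xi-\eta)^2$, which costs only two $x$-derivatives. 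Since multiplying the momentum kernel by $(\xi-\eta)$ is the adjoint action of $-i\nabla$ and this, acting on the Weyl quantization of $\widetilde W_{N,t}$, produces the Weyl quantization of $-i\nabla_x\widetilde W_{N,t}$, we obtain $\|\mathcal{E}_{N,t}^{\mathrm{kin}}\|_{\mathrm{HS}}\leq C\eps^2\sqrt N\,\|\widetilde W_{N,t}\|_{H^2}$. For $\mathcal{E}_{N,t}^{\mathrm{pot}}$ I would proceed as in \cite{BPSS}, expanding $[V*\widetilde\rho_t,\widetilde\omega_{N,t}]$ around its Poisson-bracket approximation; the same interpolation (between the Lipschitz bound on $V*\widetilde\rho_t$ and the third-order Taylor remainder) trades the unavailable third derivative of $V*\widetilde\rho_t$ for $\|\nabla(V*\widetilde\rho_t)\|_\infty^{1/2}\|\nabla^3(V*\widetilde\rho_t)\|_\infty^{1/2}$, and both factors are bounded by the integral \eqref{eq:assthm2} together with $\|\widetilde\rho_t\|_{L^1}\lesssim\|\widetilde W_{N,t}\|_{H^0_2}$. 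Thus $\|\mathcal{E}_{N,t}\|_{\mathrm{HS}}\leq C\eps^2\sqrt N\,(1+\|\widetilde W_{N,t}\|_{H^2_2})$, so this term contributes only $\leq C\eps\sqrt N\,(\cdots)$ to $\frac{d}{dt}\|D_{N,t}\|_{\mathrm{HS}}$.

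\emph{The mismatch term.} Here $\rho_t-\widetilde\rho_t=N^{-1}D_{N,t}(x;x)$, so this term is genuinely quadratic in $D_{N,t}$. Expanding $V*(\rho_t-\widetilde\rho_t)$ in Fourier modes $e^{ip\cdot x}$ and using $\|[e^{ip\cdot x},\omega_{N,t}]\|_{\mathrm{HS}}\leq|p|\,\|[x,\omega_{N,t}]\|_{\mathrm{HS}}$, one must control $|\widehat{(\rho_t-\widetilde\rho_t)}(p)|\lesssim N^{-1}|\tr(e^{-ip\cdot x}D_{N,t})|$. The trick is to insert $H_0^{-2}=(1-\eps^2\Delta)^{-1}$: since $\|e^{-ip\cdot x}H_0^{-2}\|_{\mathrm{HS}}^2=\eps^{-3}\int(1+u^2)^{-2}\,du=cN$ uniformly in $p$, we get $|\tr(e^{-ip\cdot x}D_{N,t})|\leq\sqrt{cN}\,\|H_0^2D_{N,t}\|_{\mathrm{HS}}$; combined with the semiclassical commutator bound $\|[x,\omega_{N,t}]\|_{\mathrm{HS}}\leq C\eps\sqrt N\,\mathrm{e}^{c|t|}$ (propagated along the Hartree flow, in the spirit of \cite{BPS2}) and $\int|p|\,|\widehat V(p)|\,dp<\infty$, the mismatch term contributes at most $C\,\mathrm{e}^{c|t|}\,\|D_{N,t}\|_{\mathrm{HS}}\,\|H_0^2D_{N,t}\|_{\mathrm{HS}}$ to $\frac{d}{dt}\|D_{N,t}\|_{\mathrm{HS}}^2$. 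To close this I would estimate $\|H_0^2D_{N,t}\|_{\mathrm{HS}}$ simultaneously: $H_0^2$ commutes with $H_0$ and $[H_0^2,V*\rho_t]$ is lower order, so $H_0^2D_{N,t}$ satisfies essentially the same equation, with consistency error $H_0^2\mathcal{E}_{N,t}$ of the same size $\eps^2\sqrt N$ once the extra velocity weight is placed on $\widetilde W_{N,t}$ — which is exactly why the weighted norm $\|W_N\|_{H^2_2}$ (rather than $\|W_N\|_{H^2}$) is assumed. One then has a closed coupled Gronwall inequality $\frac{d}{dt}\big(\|D_{N,t}\|_{\mathrm{HS}}^2+\|H_0^2D_{N,t}\|_{\mathrm{HS}}^2\big)\leq A(t)\big(\|D_{N,t}\|_{\mathrm{HS}}^2+\|H_0^2D_{N,t}\|_{\mathrm{HS}}^2\big)+B(t)\,\eps^2N$, where $A,B$ depend on the weighted Sobolev norms of $\widetilde W_{N,t}$; these grow like $\exp(C\exp(C|t|))$ by the (relativistic adaptation of the) Vlasov well-posedness theory of \cite{D} quoted after Theorem \ref{thm:trace}. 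Since $D_{N,0}=0$, integrating yields $\|D_{N,t}\|_{\mathrm{HS}}\leq C\sqrt N\,\eps\,\exp(C\exp(C|t|))$.

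\emph{Main obstacle.} The genuinely new difficulty with respect to \cite{BPSS} is $\mathcal{E}_{N,t}^{\mathrm{kin}}$: the non-relativistic commutator $[\Delta,\widetilde\omega_{N,t}]$ reproduces the transport term exactly, whereas here one must quantify the $\eps\to0$ expansion of $\sqrt{1-\eps^2\Delta}$. Achieving this with only $H^2$-type regularity, so that the hypotheses of Theorem \ref{thm:HS1} suffice, forces both the interpolation between the crude Lipschitz bound and the third-order Taylor remainder and the propagation of the \emph{velocity-weighted} $H^2$ norm — of $\widetilde W_{N,t}$ and, through the auxiliary quantity $H_0^2D_{N,t}$ introduced to handle the density mismatch, effectively of $D_{N,t}$ as well; this coupling is the delicate part of the bookkeeping.
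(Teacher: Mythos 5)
Your structural setup is sound, and several of the pieces match what the paper does: the consistency operator for the kinetic part has momentum kernel $F(p;q)\widehat{\widetilde\omega}_{N,s}(p;q)$ with $|F(p;q)|\leq C\eps^2|p-q|^2$ (the paper's bound derived from (\ref{eq:Fpq2})), giving $\|\mathcal{E}^{\mathrm{kin}}_{N,t}\|_{\mathrm{HS}}\leq C\eps^2\sqrt{N}\|\widetilde W_{N,t}\|_{H^2}$; the self-adjointness cancellation in $\frac{d}{dt}\|D_{N,t}\|_{\mathrm{HS}}^2$ is correct; and $\|e^{-ipx}(1-\eps^2\Delta)^{-1}\|_{\mathrm{HS}}^2=c\eps^{-3}=cN$ is the same computation the paper uses in the trace-norm proof.

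However, there is a genuine gap in the density-mismatch step. To bound $|\widehat{\rho}_t(p)-\widehat{\widetilde\rho}_t(p)|$ you introduce the auxiliary quantity $\|H_0^2 D_{N,t}\|_{\mathrm{HS}}$, and to close the coupled Gronwall you must differentiate $\|H_0^2 D_{N,t}\|_{\mathrm{HS}}^2$ and bound the resulting source terms. Two of those source terms are not controlled under the assumption $\|W_N\|_{H^2_2}\leq C$. First, for $H_0^2\mathcal{E}^{\mathrm{kin}}_{N,t}$: its momentum kernel is $(1+\eps^2p^2)F(p;q)\widehat{\widetilde\omega}_{N,s}(p;q)$, and from (\ref{eq:F-bds}) one has $(1+\eps^2p^2)|F(p;q)|\leq C\eps^2|p-q|^2(1+\eps^2|p+q|^2)^{1/2}+C\eps^4|p-q|^4$. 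In Weyl variables the first summand is fine (it uses $\|\widetilde W_{N,s}\|_{H^2_1}$), but the second summand is four $x$-derivatives on $\widetilde W_{N,s}$, so $\|H_0^2\mathcal{E}^{\mathrm{kin}}_{N,t}\|_{\mathrm{HS}}$ contains a contribution $\sim\eps^4\sqrt{N}\,\|\widetilde W_{N,s}\|_{H^4}$, which is not controlled by $\sup_N\|W_N\|_{H^2_2}$; the $\eps^4$ prefactor does not save you because $\|W_N\|_{H^4}$ is simply unconstrained and could grow with $N$. Your remark that "the extra velocity weight is placed on $\widetilde W_{N,t}$" is exactly where the error lies: the Weyl symbol of $H_0^2=1-\eps^2\Delta$ is not just $1+v^2$ but $1+v^2 + i\eps v\cdot\nabla_x - \tfrac{\eps^2}{4}\Delta_x$, and the $\Delta_x$ piece, combined with the $|p-q|^4$ tail of $(1+\eps^2p^2)F$, escalates to $H^4$ regularity of $W_N$. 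Second, the same problem recurs in $H_0^2[V*(\rho_t-\widetilde\rho_t),\widetilde\omega_{N,t}]$: expanding in Fourier modes one needs $\|H_0^2[e^{ip\cdot x},\widetilde\omega_{N,t}]\|_{\mathrm{HS}}$, and writing $[e^{ip\cdot x},\widetilde\omega_{N,t}]$ in Weyl form as the quantization of $i\eps\nabla_v\widetilde W_{N,t}$ (up to a bounded conjugation), the $H_0^2$ prefactor again produces terms requiring $\|\widetilde W_{N,t}\|_{H^3_1}$ and beyond.

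This is precisely the obstruction the paper's proof is designed to avoid, and the architecture is genuinely different from yours. Instead of a single direct Gronwall for $D_{N,t}$, the paper first regularizes the initial data, $W_N^k=W_N*g_k$ with a Gaussian at scale $k^{-1/2}$, so that all the higher Sobolev norms are available: $\|W_N^k\|_{H^j_2}\leq Ck^{(j-2)/2}\|W_N\|_{H^2_2}$. With the regularized data one can invoke Theorem \ref{thm:trace} in trace norm; that trace bound is exactly what controls the density mismatch $\rho^k_s-\widetilde\rho^k_s$ in the Hilbert--Schmidt Gronwall for $\omega^k_{N,t}-\widetilde\omega^k_{N,t}$, with no need for the $H_0^2$ coupling (see (\ref{eq:hs_1})--(\ref{eq:hs_bound1})). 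Then the regularization is removed in two separate, low-regularity comparisons: $\omega_{N,t}$ vs.\ $\omega_{N,t}^k$ via a Gronwall estimate on $\sup_p\frac{1}{1+|p|}|\tr e^{ip\cdot x}(\omega_{N,t}-\omega^k_{N,t})|$ (Lemma \ref{lemma:hs_lemma}, which only uses commutator bounds of Lemma \ref{lemma:smallr}), and $\widetilde\omega_{N,t}$ vs.\ $\widetilde\omega_{N,t}^k$ via classical stability of the characteristics of the Vlasov flow ((\ref{eq:cases})--(\ref{eq:step3})). Choosing $k=\eps^{-2}$ at the end balances the $(\eps\sqrt{k})^4$ loss against the $k^{-1/2}$ gain. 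I recommend you adopt this three-step regularize--compare--remove structure rather than attempting to close a direct coupled Gronwall, which cannot be done under $H^2_2$ alone.
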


{F}rom a slightly different perspective, it is also possible to assume convergence of the Wigner transform $W_N$ of the initial reduced density $\omega_N$ towards a classical probability density $W_0$ and to compare then the Wigner transform $W_{N,t}$ of the solution of the Hartree equation (\ref{eq:Hartree}) with initial data $\omega_N$ with the solution $W_t$ of the Vlasov equation (\ref{eq:Vlasov}), with initial data $W_0$. In this case, the rate of the convergence in the Hilbert-Schmidt norm is established in the next theorem, under the same regularity assumptions appearing in Theorem \ref{thm:HS1} above. 
\begin{theorem}\label{thm:HS2}
Let  $V \in L^1\left( \mathbb{R}^3\right)$ be such that \eqref{eq:assthm2} holds true. Let $\omega_N$ be a sequence of reduced densities on $L^2 \left( \mathbb{R}^3\right)$, with $\tr \; \omega_N=N, \; 0 \leq \omega_N \leq 1$ and with Wigner transform $W_N$ satisfying $\| W_N \|_{H^2_2}\leq C$, uniformly in $N$. Furthermore, let $W_0$ be a probability density on $\mathbb{R}^3 \times \mathbb{R}^3$ with $\|W_0\|_{H^2_2}< \infty$ and such that
\begin{align*}
\| W_N -W_0\|_1 \leq C \kappa_{N,1}, \quad \text{and} \quad  \|W_N-W_0\|_2 \leq C \kappa_{N,2}
\end{align*}
for sequences $\kappa_{N,1}, \kappa_{N,2} \geq 0$ with $\kappa_{N,j} \rightarrow 0$ as $N \rightarrow \infty$ for $j=1,2$. Let $\omega_{N,t}$ denote the solution of the relativistic Hartree equation \eqref{eq:Hartree} with initial data $\omega_{N}$ and let $W_{N,t}$ be its Wigner transform. On the other hand, let $W_t$ denote the solution of the Vlasov equation \[ \partial_t W_{t} +  \frac{v}{\sqrt{1+v^2}} \cdot \nabla_x W_{t}= \nabla\left( V * \rho_t\right) \cdot \nabla_v W_{t}, \] with $\rho_t (x) = \int  W_t (x,v) dv$ and with initial data $W_{t=0} = W_0$. Then we have
\begin{align}
\|W_{N,t}-W_t\| \leq C \varepsilon \exp(\exp(C|t|))+C (\kappa_{N,1}+\kappa_{N,2})\exp(C|t|).
\end{align}
\end{theorem}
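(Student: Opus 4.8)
\emph{Proof strategy.} The plan is to interpolate between the Hartree and the Vlasov dynamics through the Vlasov flow issued from $W_N$. Let $\widetilde{W}_{N,t}$ denote the solution of the relativistic Vlasov equation \eqref{eq:Vlasov} with $\widetilde{W}_{N,0}=W_N$, exactly as in Theorems \ref{thm:trace} and \ref{thm:HS1}, and let $\widetilde{\omega}_{N,t}$ be its Weyl quantization. Writing $\|\cdot\|$ for the $L^2(\bR^3\times\bR^3)$ norm, the triangle inequality gives
\[ \| W_{N,t} - W_t \| \leq \| W_{N,t} - \widetilde{W}_{N,t} \| + \| \widetilde{W}_{N,t} - W_t \|, \]
and I would estimate the two terms separately.

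For the first term I would use that the $L^2$ norm of a Wigner transform equals, up to a power of $\eps$, the Hilbert--Schmidt norm of the associated operator: computing directly from \eqref{eq:wign} and using Plancherel in the variable $y$, one finds $\|W\|^2 = \eps^3 (2\pi)^{-3}\,\|\omega\|_{\mathrm{HS}}^2$ whenever $W$ is the Wigner transform of $\omega$. Now $W_{N,t}$ is by definition the Wigner transform of $\omega_{N,t}$, and since \eqref{eq:weyl1} and \eqref{eq:wign} are inverse to each other, $\widetilde{W}_{N,t}$ is the Wigner transform of $\widetilde{\omega}_{N,t}$; hence $\|W_{N,t}-\widetilde{W}_{N,t}\| = \eps^{3/2}(2\pi)^{-3/2}\,\|\omega_{N,t}-\widetilde{\omega}_{N,t}\|_{\mathrm{HS}}$. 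The hypotheses of Theorem \ref{thm:HS1} coincide with those of the present statement, so that theorem applies; together with $\sqrt{N}=\eps^{-3/2}$ it gives $\|W_{N,t}-\widetilde{W}_{N,t}\| \leq C\eps\exp(C\exp(C|t|))$, which is the first term on the right-hand side of the claimed bound.

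The second term is a stability estimate for the relativistic Vlasov equation: $\widetilde{W}_{N,t}$ and $W_t$ solve the same equation, with initial data $W_N$ and $W_0$ that are close in $L^1\cap L^2$. Setting $g_t=\widetilde{W}_{N,t}-W_t$, one checks that $g_t$ solves \eqref{eq:Vlasov} with force field $\nabla(V*\widetilde{\rho}_t)$ and the extra source $\nabla(V*(\widetilde{\rho}_t-\rho_t))\cdot\nabla_v W_t$, with initial datum $g_0=W_N-W_0$. Testing this equation with $g_t$ in $L^2$, the transport term $\tfrac{v}{\sqrt{1+v^2}}\cdot\nabla_x g_t$ and the self-consistent force term $\nabla(V*\widetilde{\rho}_t)\cdot\nabla_v g_t$ drop out, each being divergence free as it is independent of the variable conjugate to the one it differentiates; therefore $\tfrac{d}{dt}\|g_t\|^2$ is bounded by $\|\nabla(V*(\widetilde{\rho}_t-\rho_t))\|_\infty\,\|g_t\|\,\|\nabla_v W_t\|$, and, since the same transport preserves the $L^1$ norm, $\tfrac{d}{dt}\|g_t\|_1 \le \|\nabla(V*(\widetilde{\rho}_t-\rho_t))\|_\infty\,\|\nabla_v W_t\|_1$. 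I would then estimate $\|\nabla(V*(\widetilde{\rho}_t-\rho_t))\|_\infty \le \big(\int |p|\,|\widehat{V}(p)|\,dp\big)\,\|\widehat{\widetilde{\rho}_t-\rho_t}\|_\infty \le C\,\|\widetilde{\rho}_t-\rho_t\|_1 \le C\,\|g_t\|_1$, the integral being finite by \eqref{eq:assthm2}. What remains is to propagate $\|\nabla_v W_t\|$ and $\|\nabla_v W_t\|_1$ from $\|W_0\|_{H^2_2}<\infty$: this follows by representing $W_t$ along the volume-preserving characteristic flow of \eqref{eq:Vlasov}, whose first-variation equation has coefficients that stay bounded on finite time intervals — here one uses that $v\mapsto v/\sqrt{1+v^2}$ has bounded derivatives and that $\nabla^2(V*\rho_t)$ is bounded, thanks to $\|\rho_t\|_1=1$ and $\int|p|^2|\widehat{V}(p)|\,dp<\infty$ — together with the fact that $\|W_0\|_{H^2_2}$ controls $\|\nabla W_0\|$ and $\|\nabla W_0\|_1$. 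Inserting these bounds into the two differential inequalities and applying Gronwall's lemma yields $\|g_t\|_1+\|g_t\| \le C\big(\|W_N-W_0\|_1+\|W_N-W_0\|_2\big)\exp(C|t|) \le C(\kappa_{N,1}+\kappa_{N,2})\exp(C|t|)$, the time dependence coming from the propagation bounds. Adding the two contributions gives the theorem.

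I expect the Vlasov stability estimate to be the main obstacle. Unlike in a pure energy estimate, the initial error $W_N-W_0$ is small only in $L^1\cap L^2$ — in no weighted or higher-order Sobolev norm — so the Gronwall argument must rely on the rough transport and energy identities above and must control the forcing $\nabla(V*(\widetilde{\rho}_t-\rho_t))$ in $L^\infty$ purely through $\|g_t\|_1$; this is precisely where assumption \eqref{eq:assthm2} is used, and why a careful propagation of $\|\nabla_v W_t\|$ and $\|\nabla_v W_t\|_1$ along the Vlasov characteristics is required. Here the relativistic dispersion actually helps rather than hinders: the velocity field $v/\sqrt{1+v^2}$ is globally bounded, so the characteristic flow and the phase-space moments of $W_t$ are more readily controlled than in the non-relativistic case.
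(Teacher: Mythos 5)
Your overall decomposition is the one the paper uses: you interpolate through $\widetilde{W}_{N,t}$ (the Vlasov flow started from $W_N$), apply the triangle inequality, and bound the first term by Theorem \ref{thm:HS1} after converting between $L^2$ of Wigner transforms and Hilbert--Schmidt norms of the associated operators via $\|W\|_2^2 = \eps^3(2\pi)^{-3}\|\omega\|_{\mathrm{HS}}^2$ and $\sqrt{N}=\eps^{-3/2}$. That part is identical to the paper.

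Where you genuinely diverge is in the Vlasov stability estimate for $\|\widetilde{W}_{N,t}-W_t\|_2$. The paper handles this by invoking the characteristic-flow argument from the proof of \eqref{eq:step3}: it introduces Newton's equations \eqref{eq:cases}, Gronwall-propagates the flow maps $(X_t,V_t)$, $(X_t^k,V_t^k)$ and their first derivatives, and then estimates $\widetilde{W}_{N,t}-W_t$ by writing each solution as its initial datum composed with the (inverse) flow, so that the $L^1$ and $L^2$ differences are controlled by $\|W_N-W_0\|_1$, $\|W_N-W_0\|_2$ and the $L^\infty$ difference of the flows. You instead take the PDE for $g_t=\widetilde{W}_{N,t}-W_t$, observe that the transport term $\tfrac{v}{\sqrt{1+v^2}}\cdot\nabla_x$ and the self-consistent force term are divergence-free in the variable they differentiate, and derive the two coupled differential inequalities $\tfrac{d}{dt}\|g_t\|_2\lesssim \|g_t\|_1\|\nabla_v W_t\|_2$ and $\tfrac{d}{dt}\|g_t\|_1\lesssim\|g_t\|_1\|\nabla_v W_t\|_1$, closing them with $\|\nabla(V*(\widetilde{\rho}_t-\rho_t))\|_\infty\le \bigl(\int|p||\widehat V(p)|\,dp\bigr)\|g_t\|_1$. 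This is a direct $L^1\cap L^2$ energy argument, and it is a legitimate alternative route: it reaches the same $\exp(C|t|)$ Gronwall bound without having to compare two flows pointwise. The trade-off is that the characteristic method used in the paper is the one that was already set up for the regularization step of Theorem \ref{thm:HS1} (eq. \eqref{eq:step3}), so the paper can simply reuse it, whereas your energy argument is a separate computation.

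One thing worth flagging: both routes ultimately require propagating $\|\nabla_v W_t\|_1$ and $\|\nabla_v W_t\|_2$ (equivalently $\|\nabla W_N^k\|_1$ in the paper's version), and at time zero these have to come out of the hypothesis $\|W_0\|_{H^2_2}<\infty$. You state this as a consequence of $H^2_2$, but note that a naive Cauchy--Schwarz $\|\nabla W_0\|_1\le \|(1+x^2+v^2)^{-1}\|_{L^2(\bR^6)}\,\|W_0\|_{H^1_2}$ fails because $(1+x^2+v^2)^{-2}$ is not integrable on $\bR^6$; one would need weight index $a>3$. This is not a defect of your argument specifically --- the paper's own chain $\|W_N-W_N^k\|_1\le\|W_N-W_N^k\|_{H^0_2}$ used to establish \eqref{eq:step3} has exactly the same feature --- but you may want to say explicitly that you are inheriting whatever convention or weight makes this embedding valid, rather than presenting it as obvious. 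Your closing remark that the relativistic velocity field $v/\sqrt{1+v^2}$ is globally bounded, which makes the characteristic flow and the propagation of $\|\nabla_v W_t\|$ easier than in the non-relativistic case, is correct and is indeed the observation underlying Proposition \ref{prop:regest}.
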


In the next two theorems we relax the regularity conditions of the initial data. As a consequence, the next two theorems can be applied for initial data approximating ground states of confined systems. The price we have to pay to extend the class of admissible data is that we only show convergence after  testing against a semi classical observable. 
\begin{theorem} \label{thm:semi}
Let $V \in L^1 \left( \mathbb{R}^3\right) $ such that 
\begin{equation}
\label{eq:assthm4}
\int d p \; | \widehat{V} (p) | (1+|p|^4) < \infty .
\end{equation}
Let $\omega_N$ be a sequence of reduced densities on $L^2 \left( \mathbb{R}^3 \right) $, with $\tr \, \omega_N = N$, $0 \leq \omega_N \leq 1$, such that
\begin{equation}\label{eq:comms}
\tr \, | [x,\omega_N]| \leq C N \varepsilon, \qquad \tr \, |[\varepsilon\nabla, \omega_N]| \leq C N \varepsilon.
\end{equation}
We assume that the Wigner transform $W_N$ of $\omega_N$ is such that 
\[ \| W_N \|_{W^{1,1}} = \sum_{|\beta| \leq 1} \int d x \; d v \; | \nabla^\beta W_N(x,v) | \leq C \]
uniformly in $N$. Let $\omega_{N,t}$ be the solution of the Hartree equation \eqref{eq:Hartree} with initial data $\omega_N$. On the other hand, let $\widetilde{\omega}_{N,t}$ be the Weyl quantization of the solution $\widetilde{W}_{N,t}$ of the Vlasov equation \eqref{eq:Vlasov} with initial data $W_N$. Then, there exists a constant $C>0$, such that
\[ 
| \tr \, e^{i p \cdot x + q \cdot \varepsilon \nabla} (\omega_{N,t}-\widetilde{\omega}_{N,t}) | \leq CN\varepsilon (1+|p|+|q|)^2 \exp ( \exp (C |t|))
\] 
for all $p,q \in \mathbb{R}^3, \; t \in \mathbb{R}$.
 \end{theorem}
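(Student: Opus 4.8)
\emph{Proof idea.} The plan is to estimate $h_{N,t}(p,q) := \tr\, e^{ip\cdot x+q\cdot\eps\nabla}(\omega_{N,t}-\wt\omega_{N,t})$ by a Gr\"onwall argument over the family of pairs $(p,q)$, exploiting that testing a Weyl quantization against $e^{ip\cdot x+q\cdot\eps\nabla}$ returns (up to normalization) the Fourier transform of its Wigner symbol. Set $d_{N,t}=\omega_{N,t}-\wt\omega_{N,t}$; since Weyl quantization inverts the Wigner transform, $d_{N,0}=0$. Differentiating and using that $\wt\omega_{N,t}$ is the Weyl quantization of the Vlasov solution $\wt W_{N,t}$, one obtains
\[ i\eps\,\partial_t d_{N,t} = \big[\sqrt{1-\eps^2\Delta}+V*\rho_t,\, d_{N,t}\big] + \big[V*(\rho_t-\wt\rho_t),\,\wt\omega_{N,t}\big] + \mathcal{E}^{\mathrm{kin}}_{N,t} + \mathcal{E}^{\mathrm{int}}_{N,t}, \]
where $\mathcal{E}^{\mathrm{kin}}_{N,t}=[\sqrt{1-\eps^2\Delta},\wt\omega_{N,t}]+i\eps\,\mathrm{Op}\big(\tfrac{v}{\sqrt{1+v^2}}\cdot\nabla_x\wt W_{N,t}\big)$ and $\mathcal{E}^{\mathrm{int}}_{N,t}=[V*\wt\rho_t,\wt\omega_{N,t}]-i\eps\,\mathrm{Op}\big(\nabla(V*\wt\rho_t)\cdot\nabla_v\wt W_{N,t}\big)$ measure the failure of $\wt\omega_{N,t}$ to solve the Hartree equation. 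Duhamel's formula with the unitary Hartree propagator $\mathcal{U}_N(t;s)$ (generated by $\eps^{-1}(\sqrt{1-\eps^2\Delta}+V*\rho_t)$) and $d_{N,0}=0$ then expresses $d_{N,t}$ as $-\tfrac{i}{\eps}\int_0^t \mathcal{U}_N(t;s)\big(\cdots\big)\mathcal{U}_N(t;s)^*\,ds$ with the three source terms inside.

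Testing against $e^{ip\cdot x+q\cdot\eps\nabla}$ and cyclically shifting the propagators, the needed input is the behaviour of $\mathcal{U}_N(t;s)^*e^{ip\cdot x+q\cdot\eps\nabla}\mathcal{U}_N(t;s)$. I would first propagate the semiclassical structure along the Hartree flow: since $[x,\sqrt{1-\eps^2\Delta}]$ equals $\eps$ times a bounded function of $\eps\nabla$, and $[\eps\nabla,V*\rho_t]=\eps\nabla(V*\rho_t)$ is a bounded multiplication operator (finite by \eqref{eq:assthm4}), the quantities $\tr\,|[x,\omega_{N,t}]|$ and $\tr\,|[\eps\nabla,\omega_{N,t}]|$ satisfy a coupled linear differential inequality and hence stay of order $N\eps$ up to a constant growing in $t$. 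As in \cite{BPS2,BPSS}, this controls $\mathcal{U}_N(t;s)^*e^{ip\cdot x+q\cdot\eps\nabla}\mathcal{U}_N(t;s)$: up to an error that is small (in the relevant sense) and with coefficients growing in $|t-s|$, it acts like the Weyl operator attached to the characteristic flow, whose frequencies $p_{t,s},q_{t,s}$ satisfy $|p_{t,s}|+|q_{t,s}|\le e^{C|t-s|}(1+|p|+|q|)$. The mean-field term $[V*(\rho_s-\wt\rho_s),\wt\omega_{N,s}]$ then produces the self-consistent contribution: writing $V*(\rho_s-\wt\rho_s)$ in plane waves, using $\widehat{\rho_s-\wt\rho_s}(k)=N^{-1}\tr\,e^{-ik\cdot x}d_{N,s}$ and $\|[e^{ik'\cdot x+\ell\cdot\eps\nabla},e^{ik\cdot x}]\|_{\mathrm{op}}\le\eps|\ell||k|$, it is bounded (after the factor $\eps^{-1}$) by $C(1+|p|+|q|)e^{C|t-s|}\int|\widehat V(k)||k|\,|h_{N,s}(-k,0)|\,dk$, which feeds the Gr\"onwall loop for $\sup_{p,q}|h_{N,t}(p,q)|/(N(1+|p|+|q|)^2)$. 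Here I also use that $\wt W_{N,t}$ stays in $W^{1,1}$ (with norm growing in $t$) and that $\nabla(V*\wt\rho_t)$ has bounded derivatives up to third order, both consequences of $W^{1,1}$-well-posedness of the relativistic Vlasov equation together with \eqref{eq:assthm4}.

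The core of the argument — and the genuinely new relativistic point — is the estimate of the two error operators, which must beat the naive size by an extra power of $\eps$. For $\mathcal{E}^{\mathrm{kin}}_{N,s}$, passing to Weyl variables its Wigner symbol is
\[ W_{\mathcal{E}^{\mathrm{kin}}_{N,s}}(x,v) = c\!\int dP\; b(v,\eps P)\,\big(\mathcal{F}_x\wt W_{N,s}\big)(P,v)\,e^{iP\cdot x}, \qquad b(v,\xi)=\sqrt{1+(v+\tfrac{\xi}{2})^2}-\sqrt{1+(v-\tfrac{\xi}{2})^2}-\frac{v\cdot\xi}{\sqrt{1+v^2}}, \]
where $\mathcal{F}_x$ is the Fourier transform in $x$. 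Since $\sqrt{1+v^2}$ is smooth and $b(v,\cdot)$ is odd, one has simultaneously $|b(v,\eps P)|\le C\eps|P|$ and $|b(v,\eps P)|\le C\eps^3|P|^3$. Pairing with the (approximate) conjugated Weyl operator selects the frequency $P=\pm p_{t,s}$ and produces the factor $\int dv\,|(\mathcal{F}_x\wt W_{N,s})(p_{t,s},v)|\le\min\big(\|\wt W_{N,s}\|_1,\,|p_{t,s}|^{-1}\|\nabla_x\wt W_{N,s}\|_1\big)$ — this is where the single available $x$-derivative of the Vlasov solution enters. Multiplying the two minima and distinguishing the cases according to whether $\eps|p_{t,s}|$ and $|p_{t,s}|$ are large or small gives $|\tr\,e^{ip\cdot x+q\cdot\eps\nabla}\mathcal{U}_N(t;s)\mathcal{E}^{\mathrm{kin}}_{N,s}\mathcal{U}_N(t;s)^*|\le CN\eps^2(1+|p|+|q|)^2 e^{C|t|}$; the extra $\eps$, against $\eps^{-1}$ from Duhamel, is exactly what keeps the final bound of order $\eps$. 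The interaction error $\mathcal{E}^{\mathrm{int}}_{N,s}$ is handled the same way, Taylor expanding $V*\wt\rho_s$ to first order and bounding the remainder via $\|\nabla^3(V*\wt\rho_s)\|_\infty\le C\int|\widehat V(k)||k|^3\,dk<\infty$ and one $v$-derivative of $\wt W_{N,s}$, again of order $N\eps^2(1+|p|+|q|)^2 e^{C|t|}$; the corrections from replacing the conjugated Weyl operator by the exact one have to be kept at the level of Wigner symbols rather than paired crudely in trace norm, since only $W^{1,1}$ regularity of $\wt W_{N,s}$ is available. Assembling all contributions yields a Gr\"onwall inequality whose inhomogeneity and kernel carry exponential-in-time factors, and closing it gives $|\tr\,e^{ip\cdot x+q\cdot\eps\nabla}(\omega_{N,t}-\wt\omega_{N,t})|\le CN\eps(1+|p|+|q|)^2\exp(\exp(C|t|))$. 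I expect the main obstacle to be precisely the kinetic error $\mathcal{E}^{\mathrm{kin}}_{N,s}$: extracting an $\eps^2$ gain from the Taylor remainder of the relativistic dispersion while only one derivative of $\wt W_{N,s}$ is available, and doing so uniformly along the (exponentially growing) characteristic flow, is the relativistic difficulty with no counterpart in the non-relativistic treatment of \cite{BPSS}.
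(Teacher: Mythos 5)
Your overall skeleton matches the paper's: Duhamel with the Hartree propagator, three source terms, plane-wave expansion of the mean-field term feeding a Gr\"onwall loop in $\sup_{p,q}(1+|p|+|q|)^{-2}|h_{N,t}(p,q)|$, a separate propagation of the semiclassical commutator bounds, and the recognition that the kinetic error is the genuinely new relativistic obstruction that must come out two powers of $\eps$ better than naive. You also correctly identify the dual bounds on the symbol $b(v,\xi)$ (linear growth and cubic vanishing at $\xi=0$). However, the way you propose to convert those symbol bounds into a bound on $\tr\,e^{ip\cdot x+q\cdot\eps\nabla}\,\mathcal{U}(t;s)\,\mathcal{E}^{\mathrm{kin}}_{N,s}\,\mathcal{U}^*(t;s)$ is a genuine gap. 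Your argument needs the conjugated observable $\mathcal{U}^*(t;s)e^{ip\cdot x+q\cdot\eps\nabla}\mathcal{U}(t;s)$ to act, up to controlled remainder, as a Weyl quantization of a pure exponential so that pairing with the Wigner symbol of $\mathcal{E}^{\mathrm{kin}}_{N,s}$ ``selects the frequency $P=\pm p_{t,s}$.'' That is an Egorov-type statement with quantified remainder, and you neither prove it nor indicate how to control the correction terms with only $W^{1,1}$ regularity of $\wt W_{N,s}$; you yourself flag it as the hard part and leave it unresolved. Without it, the interpolation between $|b|\lesssim\eps|P|$ and $|b|\lesssim\eps^3|P|^3$ against $\min(\|\wt W_{N,s}\|_1,|P|^{-1}\|\nabla_x\wt W_{N,s}\|_1)$ does not close, because there is no mechanism fixing $P$.

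The paper avoids this entirely and never uses a phase-space Egorov theorem. Instead it applies Taylor's formula with second-order integral remainder to $\sqrt{1+\eps^2|\cdot|^2}$, producing a representation of the kinetic error in which the Hessian-type kernel $f_{ij}(\xi)$ appears; it then computes the distributional Fourier transform $\hat f_{ij}$ and proves the integrable decay $|\hat f_{ij}(k)|\le C|k|^{-5/2}(1+|k|^{3/2})^{-1}$, which lets the error be written as an integral, over translations $e^{\pm\frac{1\pm r}{2}k\cdot\eps\nabla}$, of double commutators $[\eps\nabla_i,[\eps\nabla_j,\cdot]]$. The crucial Lemma on double commutators of $\mathcal{U}^*(t;s)e^{ip\cdot x+q\cdot\eps\nabla}\mathcal{U}(t;s)$ with $x_i$ and $\eps\nabla_j$ then supplies the factor $\eps^2(|p|+|q|)(1+|p|+|q|)e^{C|t-s|}$ through a purely operator-level Gr\"onwall argument, each commutator with $\eps\nabla$ or $x$ delivering one $\eps$. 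This is the analogue, in the relativistic case, of the fact that $[-\eps^2\Delta,\cdot]$ is already a sum of double commutators $[\eps\nabla,[\eps\nabla,\cdot]]$; the Fourier transform of $f_{ij}$ is exactly what lets one recover that structure for $\sqrt{1-\eps^2\Delta}$. You should replace your symbol-pairing/Egorov step by this double-commutator reduction (or prove an equivalent remainder estimate). A secondary omission: the paper first proves the theorem under the extra hypothesis $\|W_N\|_{H^4_2}\le C$ and only then removes it by an approximation argument, and it is precisely there (not in the Egorov step you invoke) that the trace-norm commutator hypotheses on $\omega_N$ are used.
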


The trace of the difference $\omega_{N,t}- \widetilde{\omega}_{N,t}$ tested against an observable of the form $\exp (ip \cdot x + \eps \nabla \cdot q)$, as appearing in the theorem above, can be expressed in terms of  Wigner transforms. For $\omega_N$ an arbitrary fermionic density, we have 
\[ \begin{split} 
\tr \, e^{i p \cdot x + q \cdot \varepsilon \nabla} \omega_{N} =& \int d x \; e^{i q \cdot p}e^{i p \cdot x} \omega_{N}(x-\varepsilon q ;x)\\
 =& N \int d x d v \; W_N(x,v) e^{i p \cdot x} e^{i q \cdot v} = N \widehat{W}_N(p,q), \end{split} \] where $\widehat{W}_N(p,q)$ denotes the Fourier transformation of $W_N(x,y)$.
\begin{theorem}
\label{thm:semi2}
Let $V \in L^1\left( \mathbb{R}^3 \right)$ satisfy \eqref{eq:assthm4}. Let $\omega_N$ be a sequence of reduced densities on $L^2\left(\mathbb{R}^3\right)$, with $\tr\, \omega_N=N, \; 0 \leq \omega_N \leq 1$ and such that (\ref{eq:comms}) holds true. We assume that the Wigner transform $W_N$ of $\omega_N$ is such that $\|W_N\|_{W^{1,1}}\leq C$ uniformly in $N$. Furthermore, let $W_0 \in W^{1,1}\left( \mathbb{R}^3 \times \mathbb{R}^3 \right)$ be a probability density, such that
\[ 
 \| W_N -W_0\|_1 \leq \kappa_N
\]
 for a sequence $\kappa_N$ with $\kappa_N \rightarrow 0$ as $N \rightarrow \infty$. Let $\omega_{N,t}$ be the solution of the Hartree equation \eqref{eq:Hartree} with initial data $\omega_N$ and let $W_{N,t}$ be the Wigner transform of $\omega_{N,t}$. On the other hand, let $W_t$ denote the solution of initial data $W_0$. Then we have
\[ 
 \sup_{p,q} \frac{1}{(1+|p|+|q|)^2} \left\vert \widehat{W}_{N,t}(p,q) - \widehat{W}_t(p,q)\right\vert \leq C (\varepsilon+ \kappa_N) \exp( \exp ( C|t|))
 \]
 \end{theorem}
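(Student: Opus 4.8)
We argue by comparison with the intermediate object $\widetilde W_{N,t}$, the solution of the Vlasov equation \eqref{eq:Vlasov} with initial data $W_N$ (the same object appearing in Theorem~\ref{thm:semi}), writing
\[ \widehat W_{N,t}(p,q)-\widehat W_t(p,q)=\big(\widehat W_{N,t}(p,q)-\widehat{\widetilde W}_{N,t}(p,q)\big)+\big(\widehat{\widetilde W}_{N,t}(p,q)-\widehat W_t(p,q)\big). \]
The first bracket is a purely "quantum versus classical" error and is controlled directly by Theorem~\ref{thm:semi}. Indeed, the hypotheses of Theorem~\ref{thm:semi2} — condition \eqref{eq:assthm4}, the commutator bounds \eqref{eq:comms}, and $\|W_N\|_{W^{1,1}}\le C$ — are exactly those of Theorem~\ref{thm:semi}, and the identity $\tr\, e^{ip\cdot x+q\cdot\eps\nabla}\omega=N\widehat W(p,q)$ (valid for any fermionic density with Wigner transform $W$, as recorded just before this theorem) applied to $\omega_{N,t}$ and to $\widetilde\omega_{N,t}$ gives, after dividing the conclusion of Theorem~\ref{thm:semi} by $N$,
\[ |\widehat W_{N,t}(p,q)-\widehat{\widetilde W}_{N,t}(p,q)|\le C\eps\,(1+|p|+|q|)^2\exp(\exp(C|t|)). \]

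For the second bracket we need an $L^1$-stability estimate for the relativistic Vlasov dynamics with respect to the initial datum. Let $\Phi_t$ be the characteristic flow of \eqref{eq:Vlasov} with the self-consistent density $\widetilde\rho_t=\int\widetilde W_{N,t}\,dv$, and $\Psi_t$ the flow with density $\rho_t=\int W_t\,dv$; since the transporting vector field $\big(v/\sqrt{1+v^2},\,-\nabla(V*\rho)\big)$ is divergence free, both flows preserve Lebesgue measure, so $\widetilde W_{N,t}=W_N\circ\Phi_t^{-1}$ and $W_t=W_0\circ\Psi_t^{-1}$. Splitting
\[ \widetilde W_{N,t}-W_t=(W_N-W_0)\circ\Phi_t^{-1}+\big(W_0\circ\Phi_t^{-1}-W_0\circ\Psi_t^{-1}\big)=:(W_N-W_0)\circ\Phi_t^{-1}+h_t, \]
we bound $|\widehat{\widetilde W}_{N,t}(p,q)-\widehat W_t(p,q)|\le\|W_N-W_0\|_1+\|h_t\|_1\le\kappa_N+\|h_t\|_1$, uniformly in $p,q$, using that $\|g\|_1$ is invariant under composition with a measure-preserving map and that $|\widehat g(p,q)|\le\|g\|_1$. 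Everything thus reduces to estimating $\|h_t\|_1$.

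The function $h_t$ solves the \emph{linear} transport equation associated with $\Psi_t$, with zero initial datum and source term $\nabla\big(V*(\widetilde\rho_t-\rho_t)\big)\cdot\nabla_v(W_0\circ\Phi_t^{-1})$. By Duhamel and the $L^1$-isometry property of the divergence-free transport flow,
\[ \|h_t\|_1\le\int_0^t\big\|\nabla\big(V*(\widetilde\rho_s-\rho_s)\big)\big\|_\infty\,\big\|\nabla_v(W_0\circ\Phi_s^{-1})\big\|_1\,ds. \]
Assumption \eqref{eq:assthm4} yields $\|\nabla V\|_\infty,\|D^2V\|_\infty<\infty$ (only $\int|\widehat V|(1+|p|^2)\,dp<\infty$ is used here), hence $\|\nabla(V*g)\|_\infty\le\|\nabla V\|_\infty\|g\|_1$ and the mean-field force is globally Lipschitz with a constant uniform in $s$; Gronwall then gives $\|D\Phi_s^{\pm1}\|_\infty\le e^{C|s|}$, so $\|\nabla_v(W_0\circ\Phi_s^{-1})\|_1\le e^{C|s|}\|\nabla W_0\|_1$, which is finite since $W_0\in W^{1,1}$. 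Combining this with $\|\widetilde\rho_s-\rho_s\|_1\le\|\widetilde W_{N,s}-W_s\|_1\le\kappa_N+\|h_s\|_1$ produces the closed inequality $\|h_t\|_1\le C e^{C|t|}\int_0^t(\kappa_N+\|h_s\|_1)\,ds$, whose resolution gives $\|h_t\|_1\le C\kappa_N\exp(\exp(C|t|))$ — the double exponential arising precisely from the $e^{C|s|}$ weight inside the integral. Feeding the two estimates back and using $(1+|p|+|q|)^2\ge1$ yields the claimed bound.

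The main obstacle is the last paragraph: closing the Gronwall argument for $\|h_t\|_1$ requires propagating the $W^{1,1}$ regularity of $W_0$ along the nonlinear Vlasov flow, which rests on the $C^{1,1}$ regularity of the self-consistent force $\nabla(V*\rho_t)$ and therefore on the moment condition contained in \eqref{eq:assthm4}. The one point demanding care is that all the constants entering (the flow-derivative bounds, the propagation of $\|\nabla W_0\|_1$, the Lipschitz constants of the force) are uniform in $N$ and in $s\in[0,t]$ — which holds because only $\|W_0\|_{W^{1,1}}$, the conserved total mass, and moments of $\widehat V$ are involved, none of which depend on $N$.
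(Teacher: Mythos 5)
Your proof is correct and follows the same route as the paper: triangle inequality through the intermediate object $\widetilde W_{N,t}$ (the Vlasov solution with the quantum datum $W_N$), Theorem~\ref{thm:semi} transferred to Fourier transforms via the identity $\tr\, e^{ip\cdot x+q\cdot\eps\nabla}\omega=N\widehat W(p,q)$ to control the quantum--classical error, and an $L^1$-stability estimate for the relativistic Vlasov flow for the classical remainder. The paper disposes of the stability estimate by citing \cite{BPSS} (their (4.23)--(4.39)), whereas you spell it out via the splitting $(W_N-W_0)\circ\Phi_t^{-1}+h_t$ and a Duhamel/Gronwall bound on $h_t$; this is the same characteristic-flow-plus-Gronwall mechanism in a slightly different packaging, and it correctly reproduces the double-exponential coming from the $e^{C|s|}$ propagation of $\|\nabla W_0\|_1$.
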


\section{Trace norm convergence for regular data}
 \label{sec:trace}

In this section we prove Theorem \ref{thm:trace}. The proof is an adaptation of the proof of \cite[Theorem 2.1]{BPSS}. First of all, we observe that the Weyl quantization (\ref{eq:weyl1}) associated with the solution $\wt{W}_{N,t}$ of the relativistic Vlasov equation (\ref{eq:Vlasov}) satisfies the equation
\[ i\eps \partial_t \wt{\omega}_{N,t} = A_{N,t} + B_{N,t} \]
where the operator $A_{N,t}$ has the momentum space kernel
\[ \widehat{A}_{N,t} (p;q) = \frac{\eps^2 (p-q) \cdot (p+q)}{2 \sqrt{1+\frac{\eps^2}{4} (p-q)^2}} \, \widehat{\wt{\omega}}_{N,t} (p;q) \]
while the operator $B_{N,t}$ has the position space kernel 
\[ B_{N,t} (x;y)  = (x-y) \cdot \nabla (V * \wt{\rho}_t) \left( \frac{x+y}{2} \right) \, \wt{\omega}_{N,t} (x;y) \]

To compare the solution $\omega_{N,t}$ of the relativistic Hartree equation (\ref{eq:Hartree}) with $\wt{\omega}_{N,t}$, it is convenient to switch to the interaction picture. To this end, we define the time-dependent relativistic Hartree Hamiltonian 
\[ 
h_H(t) = \sqrt{1- \varepsilon^2 \Delta} + \left(V * \rho_t\right)(x) ,
\]
and the corresponding two-parameter group of unitary transformations $\cU (t;s)$ satisfying 
\begin{equation}\label{eq:Uts} i\eps \, \partial_t \, \cU (t;s) = h_H (t) \, \cU (t;s) \end{equation}
with $\cU (s;s) = 1$ for all $s \in \bR$. Under the assumptions on the interaction $V$, the existence of the dynamics $\cU (t;s)$ can be shown by constructing first the dynamics in the interaction picture (the time-dependent generator $\exp (it \sqrt{1-\eps^2 \Delta}) \left(V * \rho_t\right)(x) \exp (-it \sqrt{1-\eps^2 \Delta})$ of the evolution $\mathcal{W} (t;s) = e^{it \sqrt{1-\eps^2 \Delta}} \cU (t;s) e^{-is \sqrt{1-\eps^2 \Delta}}$ is bounded and depends continuously on $t$; standard results guarantee the existence of $\mathcal{W} (t;s)$ and thus of $\cU (t;s) = e^{-it \sqrt{1-\eps^2 \Delta}}  \mathcal{W} (t;s) e^{is \sqrt{1-\eps^2 \Delta}}$). Then, we compute 
\begin{equation}\label{eq:tr0} \begin{split} i \eps \,\partial_t \, \cU^* (t;0) (\omega_{N,t} - \wt{\omega}_{N,t}) \, \cU (t;0) = \; & - \mathcal{U}^*(t;0) \left[ h_H(t), \omega_{N,t} - \widetilde{\omega}_{N,t} \right] \mathcal{U}(t;0)\\
	&+ \mathcal{U}^*(t;0) \left( \left[ h_H(t), \omega_{N,t}\right] - A_{N,t} - B_{N,t} \right) \mathcal{U}(t;0)\\
	=\; & \mathcal{U}^*(t;0) \left( \left[ \sqrt{1-\varepsilon^2 \Delta}  , \widetilde{\omega}_{N,t} \right] - A_{N,t} \right) \mathcal{U}(t;0)\\
	&+ \mathcal{U}^*(t;0) \left[ V * (\rho_t - \wt{\rho}_t) , \widetilde{\omega}_{N,t} \right] \cU (t;0) + \cU^* (t;0) \, C_{N,t} \, \cU (t;0) \end{split} \end{equation}
where the operator $C_{N,t}$ has the position space kernel
\[ C_{N,t} (x;y) = \left[ ( V * \widetilde{\rho}_t) (x) - (V * \widetilde{\rho}_t)(y) - \nabla (V * \widetilde{\rho}_t)\left( \frac{x+y}{2}\right) \cdot (x-y) \right] \widetilde{\omega}_{N,t}(x;y) 
\]
Since $\omega_{N,0} = \widetilde{\omega}_{N,0}= \omega_N$, we obtain integrating over time and taking the trace norm
\begin{equation}
\begin{split}
\tr \, | \omega_{N,t}- \widetilde{\omega}_{N,t}| = \; & \frac{1}{ \varepsilon} \int_0^t d s \; \tr \, |[ \sqrt{1-\varepsilon^2 \Delta} ,\widetilde{\omega}_{N,s} ] - A_{N,s} | \\
	&+\frac{1}{ \varepsilon} \int_0^t d s \; \tr \, |[ V * (\rho_s-\widetilde{\rho}_s), \widetilde{\omega}_{N,s} ] | 
	+\frac{1}{\varepsilon} \int_0^t d s \; \tr \, |C_{N,s} | \label{eq:tr_1}
\end{split}
\end{equation}
The two terms on the second line can be estimated exactly as in the non-relativistic setting. Proceeding as in \cite[(3.6)-(3.13)]{BPSS} and as in \cite[(3.14)-(3.16)]{BPSS} and using Proposition \ref{prop:regest} below to 
propagate the regularity of the solution of the relativistic Vlasov equation, we find 
\begin{equation} 
\label{eq:tr_bound1}
\begin{split}
\frac{1}{\varepsilon} \int_0^t d s \;  \tr \, | [ V * ( \widetilde{\rho}_s - \rho), \widetilde{\omega}_{N,s}]|  \leq \; &C \int_0^t d s \; e^{C |s|} \tr \, | \omega_{N,s}- \widetilde{\omega}_{N,s}| \\
	&+ C e^{C |t|} N \varepsilon \left[ \| W_N\|_{H^2_2}+ \varepsilon \| W_N \|_{H^3_2}+ \varepsilon^2 \| W_N\|_{H_2^4}\right], 
\end{split}
\end{equation}
and, similarly, 
\begin{align}
\tr \, |C_{N,s}| \leq C e^{C |s|} N \varepsilon^2 \left[ \|W_N\|_{H^2_2} + \varepsilon \|W_N\|_{H_2^3}+ \varepsilon^2  \| W_N\|_{H^4_2} + \varepsilon^3  \|W_N\|_{H_2^5}\right]. \label{eq:tr_bound2}
\end{align}
for a constant $C >0$ depending on $\| W_N \|_{H_2^2}$ but not on higher Sobolev norms of $W_N$. 

Next, we bound the first term on the r.h.s. of (\ref{eq:tr_1}). 
We notice that 
\begin{equation}\label{eq:D0} \begin{split} 
 \tr \; |[ \sqrt{1-\varepsilon^2 \Delta} \, , & \, \widetilde{\omega}_{N,s} ] - A_{N,s}| \leq C \sqrt{N} \| D_{N,s} \|_\text{HS} \end{split} \end{equation}
where we defined the operator 
\begin{equation}\label{eq:Ds} D_{N,s} = (1-\eps^2 \Delta) (1+x^2) \left( [ \sqrt{1-\eps^2 \Delta}, \wt{\omega}_{N,s} ] - A_{N,s} \right) \end{equation}
and we used the fact that
\[ \| ( x^2 +1)^{-1} (1 - \varepsilon^2 \Delta )^{-1} \|_{\mathrm{HS}} \leq C \sqrt{N} \, . \]

To estimate the Hilbert-Schmidt norm of (\ref{eq:Ds}), we write its momentum space kernel as
\[ \widehat{D}_{N,s} (p;q) = (1 + \eps^2 p^2) (1- \Delta_p) \, F (p;q) \, \widehat{\wt{\omega}}_{N,s} (p;q) \]
with 
\begin{equation}\label{eq:Fpq} F(p;q) = \sqrt{1+ \eps^2 p^2} - \sqrt{1+\eps^2 q^2} - \frac{\eps^2}{2} \frac{(p-q) \cdot (p+q)}{\sqrt{1 + \frac{\eps^2}{4} (p+q)^2}} \end{equation}
We decompose 
\begin{equation}\label{eq:Dpq} \begin{split}  \widehat{D}_{N,s} (p;q) = & \; (1+\eps^2 p^2)  \left[ F(p;q) - \Delta_p F (p;q) \right] \,  \widehat{\wt{\omega}}_{N,s} (p;q) \\ &- 2 (1+\eps^2 p^2) \nabla_p F(p;q) \cdot \nabla_p \widehat{\wt{\omega}}_{N,s} (p;q) \\ &-(1+\eps^2 p^2) F(p;q) (\Delta_p \widehat{\wt{\omega}}_{N,s} (p;q))  =: \sum_{j=1,2,3} \widehat{D}_{N,s}^{(j)} (p;q)  \end{split} \end{equation}

For any twice continuously differentiable function $f$ and for any $p,q \in \bR^3$, we can write 
\begin{equation*}
\begin{split} 
f(p)= & \; f \left( \frac{p+q}{2}+ \frac{p-q}{2}\right) \\
=& \; f\left(\frac{p+q}{2}\right) + \int_0^1 d\kappa \; \frac{p-q}{2} \cdot (\nabla f) \left(  \frac{p+q}{2}+ \kappa \frac{p-q}{2}\right) \\
	=& \; f\left(\frac{p+q}{2}\right) + \frac{p-q}{2} \cdot (\nabla f) \left(  \frac{p+q}{2}\right) \\
	&+ \sum_{i,j=1}^3 \int_0^1 d\kappa \; \int_0^\kappa dr \; ( \partial_{ij} f ) \left( \frac{p+q}{2} + r  \frac{p-q}{2} \right) \frac{(p-q)_i (p-q)_j}{4}\label{eq:id2}
\end{split} \end{equation*}
Choosing $f (p) = \sqrt{1+\eps^2 p^2}$ and then applying the same formula with $p$ and $q$ switched, we find
\begin{equation}\label{eq:Fpq2} \begin{split} 
F(p;q) = \; &\frac{1}{4} \int_0^1 d\kappa \int_0^\kappa dr \\ 
&\times \left\{ \eps^2 (p-q)^2 \, \left[ \frac{1}{\left(1 + \eps^2 \left( \frac{p+q}{2} + r \frac{p-q}{2} \right)^2 \right)^{1/2}} - \frac{1}{\left(1 + \eps^2 \left( \frac{p+q}{2}  - r \frac{p-q}{2} \right)^2 \right)^{1/2}} \right]  \right. \\
&\hspace{3cm}  \left. + \eps^4 \left[ \frac{\left( (p-q) \cdot \left( \frac{p+q}{2} + r \frac{p-q}{2} \right) \right)^2}{\left(1+ \eps^2 \left( \frac{p+q}{2} + r \frac{p-q}{2} \right)^2 \right)^{3/2}} - 
 \frac{\left( (p-q) \cdot \left(\frac{p+q}{2} - r \frac{p-q}{2} \right) \right)^2}{\left(1+ \eps^2 \left( \frac{p+q}{2} - r \frac{p-q}{2} \right)^2 \right)^{3/2}} \right]  \right\} \end{split} \end{equation}
From this expression, we obtain the bounds
\begin{equation}\label{eq:F-bds} \begin{split} 
(1 + \eps^2 p^2) |F (p;q)| & \leq C \eps^2 |p-q|^2 (1 + \eps^2 |p+q|^2)^{1/2} + C \eps^4 |p-q|^4   \\
(1+ \eps^2 p^2) |\nabla_p F(p;q)| & \leq C \eps^2 |p-q| (1 + \eps^2 |p+q|^2)^{1/2} + C \eps^5 |p-q|^4 \\
(1+ \eps^2 p^2) |\Delta_p F(p;q)| & \leq C \eps^2 (1+ \eps^2 |p+q|^2)^{1/2} + C \eps^6 |p-q|^4 
\end{split} \end{equation}

Using (\ref{eq:F-bds}), we can estimate
\begin{equation}\label{eq:D11} \begin{split} \| D^{(1)}_{N,s} \|_\text{HS}^2 = \; &\int dp dq \, | \widehat{D}_{N,s}^{(1)} (p;q)|^2 \\ \leq \; &C \eps^4 \int dp dq \left| (1 + (p-q)^2) (1 +\eps^2 (p+q)^2 )^{1/2} \, \widehat{\wt{\omega}}_{N,s} (p;q) \right|^2 \\ &+ C \eps^8 \int dp dq \, \left| (p-q)^4 \, \widehat{\wt{\omega}}_{N,s} (p;q) \right|^2 \end{split} \end{equation}
We compute
\[ \begin{split} (p-q)^4 \, \widehat{\wt{\omega}}_{N,s} (p;q) &= \frac{(p-q)^4}{(2\pi)^3} \int dx dy \, \wt{\omega}_{N,s} (x;y) e^{-i x \cdot p} e^{i y \cdot q} 
\\ 
&= \frac{N (p-q)^4}{(2\pi)^3} \int dx dy dv \, \wt{W}_{N,s} \left( \frac{x+y}{2} , v \right) e^{i v \cdot \frac{x-y}{\eps}}  e^{-i x \cdot p} e^{i y \cdot q} \\
&= (p-q)^4 \int dR \; \wt{W}_{N,s} \left(R , \frac{\eps (p+q)}{2} \right) e^{-i R \cdot (p-q)} \\
&=  \int dR \,  (\Delta^2_R \wt{W}_{N,s}) \left( R, \frac{\eps (p+q)}{2} \right) e^{-i R \cdot (p-q)} \end{split} \]
and, similarly, 
\[ \begin{split}
(1+ (p-q)^2) &(1 +\eps^2 (p+q)^2 )^{1/2} \,  \widehat{\wt{\omega}}_{N,s} (p;q) \\ &= \int dR \, (1 +\eps^2 (p+q)^2 )^{1/2} [ (1-\Delta_R) \wt{W}_{N,s} ] \left( R , \frac{\eps (p+q)}{2} \right) \, e^{-i R \cdot (p-q)} \end{split} \]
With (\ref{eq:D11}), changing variables to $v = \eps (p+q)/2$ and $w = p-q$, this implies that 
\begin{equation}\label{eq:D1f} \begin{split} 
\| D^{(1)}_{N,s} \|_\text{HS}^2 \leq \; &C N \eps^4 \int dx dv \, (1+ v^2) |(1- \Delta_x) \wt{W}_{N,s} (x,v)|^2  + C N  \eps^8 \int dx dv \, | \Delta_x^2 \wt{W}_{N,s} (x,v)|^2 \\ \leq \; &C N \eps^4 \| \wt{W}_{N,s} \|_{H^2_1}^2 + C N  \eps^8 \|\wt{W}_{N,s} \|_{H^4_0} \end{split} \end{equation}

The Hilbert-Schmidt norm of the 
second term on the r.h.s. of (\ref{eq:Dpq}) can be bounded by 
\[ \begin{split} \| D^{(2)}_{N,s} \|_\text{HS}^2 = \; &\int dp dq \, | \widehat{D}^{(2)}_{N,s} (p;q)|^2 \\ \leq \; &C \eps^4 \sum_{\ell,j =1}^3 \int dp dq \, \left| (1+\eps^2 (p+q)^2)^{1/2} (p-q)_\ell \, \partial_{p_j} \, \widehat{\wt{\omega}}_{N,s} (p;q) \right|^2 \\ &+ C \eps^{10} \sum_{j=1}^3 \int dp dq \, \left| (p-q)^4 \,  \partial_{p_j} \widehat{\wt{\omega}}_{N,s} (p;q) \right|^2 \end{split} \]
We have
\begin{equation}\label{eq:D2-1} \begin{split} 
(p-q)^4 \partial_{p_j} \widehat{\wt{\omega}}_{N,s} (p;q) = \; &\frac{-iN (p-q)^4}{(2\pi)^3} \int dx dy dv  \, x_j \,  \wt{W}_{N,s} \left( \frac{x+y}{2} , v \right) \, e^{-i x \cdot p} e^{i y \cdot q} e^{i v \cdot \frac{x-y}{\eps}}   \\
= \; &\frac{-iN (p-q)^4}{(2\pi)^3} \int dR dr dv \, 
\wt{W}_{N,s} \left( R_j + \frac{r_j}{2} \right) \, \wt{W}_{N,t} \left( R , v \right) \,  e^{-i R \cdot (p-q)} e^{-i r \cdot \left( \frac{p+q}{2} - \frac{v}{\eps} \right)} \\
= \; & -i (p-q)^4 \int dR \, e^{-i R \cdot (p-q)} \\ &\hspace{.3cm} \times  \left[ R_j \,  \wt{W}_{N,s} \left(R, \frac{\eps (p+q)}{2}\right) - \frac{i\eps}{2}   (\partial_{v_j} \wt{W}_{N,s}) \left( R, \frac{\eps (p+q)}{2} \right) \right] \\
= \; & \int dR \, e^{-i R \cdot (p-q)} \\ &\hspace{.3cm} \times \left\{  \Delta_R^2  \left[ -i R_j \wt{W}_{N,s} (R, \eps (p+q)/2) - \frac{\eps}{2}   (\partial_{v_j} \wt{W}_{N,s}) (R, \eps (p+q)/2) \right] \right\} 
\end{split} \end{equation}
where $\nabla_v$ indicates the derivative with respect to the velocity components of $\wt{W}_{N,t}$. Similarly, we find 
\begin{equation}\label{eq:D2-2} \begin{split} 
(1 + \eps^2 (p+q)^2)^{1/2}  &(p-q)_{\ell} \, \partial_{p_j} \widehat{\wt{\omega}}_{N,s} (p;q) \\ = \; &\int dR \, e^{-i R \cdot (p-q)} (1+ \eps^2 (p+q)^2)^{1/2} 
\\ & \hspace{.3cm} \times \left\{ \partial_{R_\ell} \left[ R_j \, \wt{W}_{N,s} \left( R, \frac{\eps (p+q)}{2} \right) - i\frac{\eps}{2} (\partial_{v_j} \wt{W}_{N,s}) \left( R,\frac{\eps (p+q)}{2} \right) \right] \right\} \end{split} \end{equation}
Hence  
\begin{equation}\label{eq:D2f} \begin{split}
\| D^{(2)}_{N,s} \|_\text{HS}^2 \leq \; &C N \eps^4 \sum_{\ell,j =1}^3 \int dx dv \, (1+ v^2) \left| \partial_{x_\ell} \left[ x_j \wt{W}_{N,s} (x,v) - \frac{i\eps}{2} (\partial_{v_j} \wt{W}_{N,s}) (x,v)  \right] \right|^2 
\\ &+ C N \eps^{10} \sum_{j=1}^3 \int dx dv \left| \Delta_x^2 \left[ -i x_j \wt{W}_{N,s} (x,v) - \frac{\eps}{2} (\partial_{v_j} \wt{W}_{N,s}) (x,v) \right] \right|^2 \\
\leq \; &C N \eps^4 \| \wt{W}_{N,s} \|_{H^1_2}^2 + C N \eps^6 \| \wt{W}_{N,s} \|^2_{H^2_1} + C N \eps^{10} \| \wt{W}_{N,s} \|_{H^4_1}^2 + C N \eps^{12} \| \wt{W}_{N,s} \|_{H^5_0} \end{split} \end{equation}

As for the Hilbert-Schmidt of the third term on the r.h.s. of (\ref{eq:Dpq}), (\ref{eq:F-bds}) implies that 
\begin{equation}\label{eq:D30} \begin{split} \| D_{N,s}^{(3)} \|_\text{HS}^2 \leq \; &C \eps^4 \int dp dq \, \left| (1+\eps^2 (p+q)^2)^{1/2}  \Delta_p \, \widehat{\wt{\omega}}_{N,s} (p;q) \right|^2 \\ &+ C \eps^{12} \int dp dq \, \left| (p-q)^4 \Delta_p \,  \widehat{\wt{\omega}}_{N,s} (p;q) \right|^4 \end{split} \end{equation}
Analogously to (\ref{eq:D2-1}) and (\ref{eq:D2-2}), we can write 
\[ \begin{split} 
(p-q)^4 &\Delta_p \widehat{\wt{\omega}}_{N,s} (p;q) \\ = \; &
\int dR \, e^{-i R \cdot (p-q)} \, \\ & \hspace{.2cm} \times  \left\{ \Delta^2_R \left[ R^2 \wt{W}_{N,s} \left(R, \frac{\eps (p+q)}{2} \right) -i \eps R \cdot (\nabla_v W_{N,s}) \left( R , \frac{\eps (p+q)}{2} \right) - \frac{\eps^2}{4} \Delta_v W_{N,s} \left( R, \frac{\eps (p+q)}{2} \right) \right] \right\} 
\end{split} \]
and
\[ \begin{split} 
(1+\eps^2 &(p+q)^2)^{1/2} \, \Delta_p \,  \widehat{\wt{\omega}}_{N,s} (p;q) \\ = \; &
\int dR \, e^{-i R \cdot (p-q)}   \, (1+ \eps^2 (p+q)^2)^{1/2} \\ &\hspace{.2cm} \times  \left[ R^2 \wt{W}_{N,s} \left(R, \frac{\eps (p+q)}{2} \right) -i \eps R \cdot (\nabla_v W_{N,s}) \left( R , \frac{\eps (p+q)}{2} \right) - \frac{\eps^2}{4} \Delta_v W_{N,s} \left( R, \frac{\eps (p+q)}{2} \right) \right]  \end{split} \]
Inserting in (\ref{eq:D30}), we conclude that
\[ \begin{split}  \| D^{(3)}_{N,s} \|_\text{HS}^2 \leq \; &C \eps^4 N \int dx dv \, (1+v^2) \left| x^2 \wt{W}_{N,s} (x,v) - i\eps x \cdot \nabla_v \wt{W}_{N,s} (x,v) - \frac{\eps^2}{4} \Delta_v \wt{W}_{N,s} (x,v) \right|^2 \\ &+ C \eps^{12} N \int dx dv \, \left| \Delta^2_x \left[ x^2 \wt{W}_{N,s} (x,v) - i \eps x \cdot \nabla_v \wt{W}_{N,s} (x,v) - \frac{\eps^2}{4} \Delta \wt{W}_{N,s} (x,v) \right] \right|^2 \\ \leq \; & C N \eps^4 \| \wt{W}_{N,s} \|_{H^0_2}^2 + C N \eps^6 \| \wt{W}_{N,s} \|_{H_2^1}^2 + C N \eps^8 \| \wt{W}_{N,s} \|_{H^2_1}^2 + C N \eps^{12} \| \wt{W}_{N,s} \|^2_{H^4_2} \\ &+ C N \eps^{14} \| \wt{W}_{N,s} \|_{H^5_1} + C \eps^{16} N \| \wt{W}_{N,s} \|_{H^6_0} \end{split} \] 

Combining the last bound with (\ref{eq:D1f}) and (\ref{eq:D2f}) we obtain (estimating for simplicity 
all weights with $(1+x^2 +v^2)^2$) 
\[ \begin{split} 
\| D_{N,s} \|_\text{HS}  \leq \; &C \sqrt{N} \eps^2 \| \wt{W}_{N,s} \|_{H^2_2} + C \sqrt{N}  \eps^4 \|\wt{W}_{N,s} \|_{H^4_2}  + C \sqrt{N} \eps^{6} \| \wt{W}_{N,s} \|_{H^5_2} + C\sqrt{N} \eps^{8}  
\| \wt{W}_{N,s} \|_{H^6_2} \\ 
\leq \; &C \, \sqrt{N} \, \eps^2 \,  e^{K|t|} \left\{ 
\| W_N \|_{H^2_2} + \eps^2 \| W_N \|_{H^4_2} + \eps^4 
\| W_N \|_{H^5_2} + \eps^6 \| W_N \|_{H^6_2} \right\}
\end{split} \]
where we used Proposition \ref{prop:regest} to propagate the regularity of the solution of the relativistic Vlasvo equation. Inserting in (\ref{eq:D0}), we conclude that
\[ 
\tr \; |[ \sqrt{1-\varepsilon^2 \Delta}  \widetilde{\omega}_{N,s} ] - A_{N,s}| \leq C N \eps^2 e^{K|t|} \left\{ 
\| W_N \|_{H^2_2} + \eps^2 \| W_N \|_{H^4_2} + \eps^4 
\| W_N \|_{H^5_2} + \eps^6 \| W_N \|_{H^6_2 } \right\}
 \]
Together with (\ref{eq:tr_bound1}) and (\ref{eq:tr_bound2}), we obtain from (\ref{eq:tr_1}) that 
\[ \tr \, |\omega_{N,t} - \wt{\omega}_{N,t} | \leq C \int_0^t e^{C|s|} \tr \, |\omega_{N,s} - \wt{\omega}_{N,s}| + C N \eps \, e^{K |t|} \, \sum_{j=0}^{4} \eps^j \| W_N \|_{H^{j+2}_2} \]
By Gronwall's Lemma, we arrive at
\[ \tr \, |\omega_{N,t} - \wt{\omega}_{N,t} | \leq C N \eps \, \exp (C \exp (C|t|)) \, \sum_{j=0}^4 \eps^j \| W_N \|_{H^{j+2}_2} \]
for a constant $C > 0$ that may depend on the $H^2_2$ norm of $W_N$, but not on its higher Sobolev norms. This concludes the proof of Theorem \ref{thm:trace}.

\section{Hilbert-Schmidt norm convergence}
 \label{sec:hs}

In this Section, we prove Theorem \ref{thm:HS1} and Theorem \ref{thm:HS2}. To this end, we proceed similarly as in \cite[Section 4]{BPSS}, i.e. we approximate the initial Wigner transform $W_N$ by a regularized data $W_N^k$, satisfying the assumptions of Theorem \ref{thm:trace}. Using the inital data $W_N^k$ and its Weyl quantization $\omega_N^k$, we construct the solution of the relativistic Vlasov equation $\wt{W}_{N,t}^k$ and of the relativistic Hartree equation $\omega_{N,t}^k$. With Theorem \ref{thm:trace}, we compare $\omega_{N,t}^k$ with $\wt{\omega}_{N,t}^k$, the Weyl quantization of $\wt{W}_{N,t}^k$. At the end, we remove the regularization, comparing $\omega_{N,t}^k$ with $\omega_{N,t}$ and, separately, $\wt{\omega}_{N,t}^k$ with $\wt{\omega}_{N,t}$. 

\bigskip

{\bf Regularization of initial data.} For $k >0$, we define 
\[ g_k(x,v):=\frac{k^3}{(2\pi)^3}	e^{-\frac{k}{2}(x^2+v^2)} \]
and  
\[ 
W_N^k(x,v)=\left( W_N* g_k \right)(x,v)=\int d x' \; d v' \; g_k(x-x',v-v')W_N(x',v').
\] 
It is simple to show that $W_N^k$ satisfies 
\begin{equation}\label{eq:reg_estimate} 
\begin{split} 
\|W_N^k\|_{H_2^j} &\leq C \|W_N\|_{H_2^2} \quad \text{if} \; \; j \leq 2, \; \; \text{and } \\ 
\|W_N^k\|_{H_2^j} &\leq C k^{\frac{j-2}{2}} \| W_N\|_{H_2^2} \quad \text{for} \; \; j=3,\dots, 6. 
\end{split} \end{equation}
Furthermore, we notice that 
\begin{align}
\label{eq:sobolevreg}
\| W_N - W_N^k\|_{H^s_a} \leq \frac{C}{\sqrt{k}} \| W_N \|_{H_a^{s+1}}
\end{align}
for $s=0,1$ and $a > 0$ (here we use the convention $H^0=L^2$). We denote by $\omega_N^k$ the Weyl quantization of $W_N^k$. {F}rom (\ref{eq:wign}) and (\ref{eq:weyl1}), we find 
\[ \omega_N^k (x;y) =  \frac{1}{(2 \pi)^3} \int d w \, d z  \; e^{-\frac{z^2}{2}} e^{-\frac{w^2}{2}} \left[ e^{i x \cdot \frac{w}{\varepsilon \sqrt{k}}} e^{\frac{z}{\sqrt{k}}\cdot\nabla} \, \omega_N \, e^{-\frac{z}{\sqrt{k}} \cdot \nabla} e^{-i x \cdot \frac{w}{\varepsilon\sqrt{k}}} \right] (x;y) \]
Thus, $\omega_{N}^k$ is a convex combination of fermionic densities and, hence, a fermionic density itself (i.e. $0 \leq \omega_N^k \leq 1$ and $\tr \, \omega_N^k=N$). From \eqref{eq:sobolevreg}, we find
 \begin{align}
 \| \omega_N - \omega_N^k \|_{\mathrm{HS}} = \sqrt{N} \| W_N - W_N^k \|_{L^2} \leq C \sqrt{\frac{N}{k}} \| W_N\|_{H^1}.\label{eq:hs_initial} 
 \end{align}

\bigskip

{\bf Notation.} We denote by $\omega_{N,t}$ and $\omega_{N,t}^k$ the solution of the Hartree equation with the initial data $\omega_N$ and, respectively $\omega_N^k$. On the other hand, $\widetilde{\omega}_{N,t}$ and $\widetilde{\omega}_{N,t}^k$ denote the Wigner transform of the solutions $\widetilde{W}_{N,t}$ and $\widetilde{W}_{N,t}^k$ of the Vlasov equation with initial data $W_N$ and respectively $W_N^k$.  Since the Vlasov equation preserves $L^p$-norms, we have 
 \begin{align*}
 \|\widetilde{\omega}_{N,t}\|_{\mathrm{HS}} = \sqrt{N} \, \| \widetilde{W}_{N,t}\|_2= \sqrt{N} \, \|W_N\|_2
 \end{align*}
 and similarly
 \begin{align*}
 \| \widetilde{\omega}_{N,t}^k\|_{\mathrm{HS}}= \sqrt{N} \, \| W_N^k \|_2
 \end{align*}
 for all $t \in \mathbb{R}$. 
 To prove Theorem \ref{thm:HS1} we need to compare $\omega_{N,t}$ with $\widetilde{\omega}_{N,t}$. To this end, we will first compare $\omega_{N,t}^k$ and $\widetilde{\omega}_{N,t}^k$. Afterwards, we will compare separately $\omega_{N,t}$ with $\omega_{N,t}^k$ and $\widetilde{\omega}_{N,t}$ with $\widetilde{\omega}_{N,t}^k$.
 
 \bigskip
 
{\bf Comparison of $\omega_{N,t}^k$ with $\widetilde{\omega}_{N,t}^k$. } We show here that 
\begin{equation}\label{eq:step1-HS} \| \omega_{N,t}^k - \wt{\omega}_{N,t}^k \|_\text{HS} \leq C N^{1/2} \eps \exp ( C \exp ( C |t|))  \left[ 1 + (\eps \sqrt{k})^4 \right] \end{equation}
for a constant $C$ depending on $\sup_N \| W_N \|_{H^2_2}$ (but not on higher Sobolev norms). 

To prove (\ref{eq:step1-HS}), we use Theorem \ref{thm:trace}, which implies, with (\ref{eq:reg_estimate}), that 
\begin{equation}\label{eq:step1-tr} \| \omega_{N,t}^k - \wt{\omega}_{N,t}^k \|_\text{tr} \leq C N \eps \exp ( C \exp ( C |t|))  \left[ 1 + (\eps \sqrt{k})^4 \right] \end{equation}
for a constant $C$ depending on $\sup_N \| W_N \|_{H^2_2}$. 

To estimate the difference $\omega_{N,t}^k - \wt{\omega}_{N,t}^k$ in Hilbert-Schmidt norm, we proceed as we did to derive \eqref{eq:tr_1}. We find
\begin{equation}\label{eq:hs_1}
\begin{split} 
\| \omega_{N,t}^k- \widetilde{\omega}_{N,t}^k\|_{\mathrm{HS}}&\leq  \frac{1}{ \varepsilon} \int_0^t d s \; \| [ \sqrt{1-\varepsilon^2 \Delta} ,\widetilde{\omega}_{N,s}^k ] - A^k_{N,s} \|_{\mathrm{HS}} \\
	&+\frac{1}{ \varepsilon} \int_0^t d s \; \| [ V * (\rho_s^k-\widetilde{\rho}_s^k), \widetilde{\omega}_{N,s}^k ] \|_{\mathrm{HS}} 
	+\frac{1}{\varepsilon} \int_0^t d s \; \|C^k_{N,s} \|_{\mathrm{HS}} 
\end{split} \end{equation}
where $A_{N,s}$ is the operator with the momentum space integral kernel 
\[ \widehat{A}_{N,s} (p;q) =  \frac{\eps^2 (p-q) \cdot (p+q)}{2 \sqrt{1- \frac{\eps^2}{4} (p-q)^2}} \, \widehat{\wt{\omega}}^k_{N,t} (p;q) \]
while $C_{N,s}$ is the operator with the position space kernel 
\[ C_{N,s} (x;y) =  \left[ ( V * \widetilde{\rho}^k_t) (x) - (V * \widetilde{\rho}^k_t)(y) - \nabla (V * \widetilde{\rho}^k_t)\left( \frac{x+y}{2}\right) \cdot (x-y) \right] \widetilde{\omega}^k_{N,t}(x;y)  \]

Proceeding as in \cite[(4.5)-(4.9)]{BPSS}, using (\ref{eq:step1-tr}) (to estimate the difference $\rho^k_s - \wt{\rho}^k_s$) and the propagation of regularity for the solution of the relativistic Vlasov equation established in Prop. \ref{prop:regest} below, we show that the last two summands on the r.h.s. of (\ref{eq:hs_1}) are bounded by 
\begin{equation}
\| [ V * (\rho_s^k-\widetilde{\rho}_s^k), \widetilde{\omega}_{N,s}^k ] \|_{\mathrm{HS}}  \leq C \sqrt{N} \varepsilon^2 \exp( C \exp(C |s|)) \left[ 1+ ( \varepsilon \sqrt{k})^4 \right]  \label{eq:hs_bound1}
\end{equation}
and, respectively, by 
 \begin{equation}
 \| C_{N,s} \|_{\mathrm{HS}} \leq C \sqrt{N} \varepsilon^2 e^{C |s|}. \label{eq:hs_bound2}
 \end{equation}
for a constant $C$ depending on $\sup_N \| W_N \|_{H^2_2}$. 

To control the first term in \eqref{eq:hs_1}, we notice that, in momentum space, the operator $[\sqrt{1-\eps^2 \Delta}, \, \wt{\omega}_{N,s}^k ] - A_{N,s}^k$ has the integral kernel $F(p;q) \widehat{\wt{\omega}}_{N,s}^k (p;q)$ with $F$ defined as in (\ref{eq:Fpq}). Using the representation (\ref{eq:Fpq2}), we obtain, similarly as in (\ref{eq:F-bds}), the bound
\[ |F(p;q)| \leq C \eps^2 |p-q|^2 \]
which implies that 
\[ \begin{split} \left\| \left[ \sqrt{1-\eps^2 \Delta} , \, \wt{\omega}_{N,s}^k \right] - A_{N,s}^k \right\|^2_\text{HS} \leq \; &C \int dp dq \, \eps^4 \, |p-q|^4 \, \left|\widehat{\wt{\omega}}_{N,s}^k (p;q) \right|^2 \\ \leq \; & C \left\| \left[ \eps \nabla , \left[ \eps \nabla, \wt{\omega}_{N,s}^k \right] \right] \right\|^2_\text{HS} \\ \leq \; &C \eps^2 N \| \wt{W}_{N,s}^k \|_{H^2}^2 \leq C \eps^2 N \| W_N^k \|_{H^2} e^{C|s|} \leq C \eps^2 N e^{C |s|} \end{split} \]
using again Prop. \ref{prop:regest} and (\ref{eq:reg_estimate}). Inserting (\ref{eq:hs_bound1}), (\ref{eq:hs_bound2}) and the last estimate in the r.h.s. of (\ref{eq:hs_1}), we obtain (\ref{eq:step1-HS}). 

\bigskip
{\bf Comparison of $\omega_{N,t}^k$ with $\omega_{N,t}$. } Here, we show that 
\begin{equation}\label{eq:step2-HS}
\| \omega_{N,t} - \omega_{N,t}^k\|_{\mathrm{HS}} \leq C \sqrt{N}  \exp(C |t|)  \left( \varepsilon + \frac{1}{\sqrt{k}}\right)
\end{equation}
adapting the strategy of \cite[Section 3]{BPSS} to the relativistic setting.

Let $\mathcal{U}(t;s)$ be the two-parameter group of unitary transformations satisfying 
\begin{equation}\label{eq:cU-def} i\eps \partial_t \, \cU (t;s) = h_H (t)  \, \cU (t;s) 
\end{equation}
with $\cU (s;s) = 1$ for all $s \in \bR$ and with the time-dependent generator $h_H (t)= \sqrt{1-\varepsilon^2\Delta}+ (V * \rho_t)$ where $\rho_t(x) = N^{-1}
\omega_{N,t}(x;x)$. We find 
\[ \begin{split} 
\omega_{N,t}- \omega_{N,t}^k = \; & \mathcal{U}(t;0) \left( \omega_{N}-\mathcal{U}^*(t;0) \omega_{N,t}^k \mathcal{U}(t;0) \right) \mathcal{U}^*(t;0)\\
	= \; & \mathcal{U}(t;0) \left( \omega_{N}-\omega_{N}^k\right)\mathcal{U}^*(t;0) + \frac{1}{i \varepsilon} \int_0^t d s \; \mathcal{U}(t;s) \left[ V * ( \rho_s-\rho_s^k), \omega_{N,s}^k\right] \mathcal{U}^*(t;s) \end{split} 
\] 
Taking the Hilbert-Schmidt norm, we obtain
\begin{equation}
\|\omega_{N,t}- \omega_{N,t}^k \|_{\mathrm{HS}}
	\leq \| \omega_{N} - \omega_{N}^k\|_{\mathrm{HS}} + \frac{1}{N\varepsilon} \int_0^t d s \int d p \; | \widehat{V}(p) | \left\vert \tr \, e^{-i p \cdot x}( \omega_{N,s}-\omega_{N,s}^k) \right\vert \; \| [e^{i p \cdot x}, \omega_{N,s}^k]\|_{\mathrm{HS}} \label{eq:hs_com2_1}
\end{equation}
Combining the expression
\[  \left[ e^{i p \cdot x} , \omega_{N,s}^k \right] = \int_0^1 d \lambda \, e^{i \lambda p \cdot x } [i p \cdot x, \omega_{N,s}^k] e^{i (1-\lambda) p \cdot x} \]
with Prop. \ref{prop:regest}, we conclude that 
\begin{equation}\label{eq:hs_com2_11} \| [e^{i x \cdot p}, \omega_{N,s}^k]\|_{\mathrm{HS}} \leq C \sqrt{N} \varepsilon |p| \, e^{C |s|}.
\end{equation}
To bound the absolute value of $\tr \,  e^{-i p \cdot x}( \omega_{N,s}-\omega_{N,s}^k)$, we use the next lemma, that extends \cite[Lemma 4.1]{BPSS} to the relativistic case. 
\begin{lemma}
\label{lemma:hs_lemma}
Under the assumptions of Theorem \ref{thm:HS1}, and with $\omega_{N,t}^k$ as defined below (\ref{eq:hs_initial}), there exists a constant $C>0$, depending on $\sup_N \| W_N \|_{H_2^2}$ but not on higher Sobolev norms, such that
\begin{equation}\label{eq:lm-HS}
\sup_{p \in \mathbb{R}^3} \frac{1}{1+ |p|} \left\vert \tr \, e^{i p \cdot x} ( \omega_{N,t}- \omega_{N,t}^k) \right\vert \leq C N e^{C |t|} \left( \frac{1}{\sqrt{k}} + \varepsilon \right). 
\end{equation}
\end{lemma}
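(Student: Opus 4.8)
The plan is to adapt the proof of \cite[Lemma 4.1]{BPSS}: the relativistic dispersion enters only through the free two-parameter group $\cU(t;s)$ and through the propagation estimate of Proposition~\ref{prop:regest}, so no new relativistic correction term appears here (in contrast with the Hartree--Vlasov comparison), and the argument is a close parallel of the non-relativistic one. Set $\delta_{N,t}:=\omega_{N,t}-\omega_{N,t}^k$. Both $\omega_{N,t}$ and $\omega_{N,t}^k$ solve the relativistic Hartree equation, with self-consistent potentials $V*\rho_t$ and $V*\rho_t^k$, where $\rho_t^k(x)=N^{-1}\omega_{N,t}^k(x;x)$; in particular $\omega_{N,t}=\cU(t;0)\,\omega_N\,\cU^*(t;0)$ with $\cU$ as in \eqref{eq:cU-def}. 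Passing to the interaction picture generated by $h_H(t)$ and using Duhamel,
\[ \delta_{N,t}=\cU(t;0)(\omega_N-\omega_N^k)\cU^*(t;0)+\frac{1}{i\eps}\int_0^t ds\;\cU(t;s)\,[\,V*(\rho_s-\rho_s^k),\,\omega_{N,s}^k\,]\,\cU^*(t;s). \]
I would test this against $e^{ip\cdot x}$ and run a Gronwall estimate for $D_{N,t}:=\sup_{p\in\bR^3}(1+|p|)^{-1}|\tr\,e^{ip\cdot x}\delta_{N,t}|$, which is exactly the left-hand side of \eqref{eq:lm-HS}.

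For the first term, cyclicity of the trace and $\|\cU^*(t;0)\,e^{ip\cdot x}\,\cU(t;0)\|=1$ give $|\tr\,e^{ip\cdot x}\cU(t;0)(\omega_N-\omega_N^k)\cU^*(t;0)|\le\tr\,|\omega_N-\omega_N^k|$. Using the representation of $\omega_N^k$ recorded just above \eqref{eq:hs_initial}, namely $\omega_N^k=\int\mu_k(w,z)\,\mathcal{V}_{w,z}\,\omega_N\,\mathcal{V}_{w,z}^*\,dw\,dz$ with $\mathcal{V}_{w,z}=e^{iw\cdot x/(\eps\sqrt k)}e^{z\cdot\nabla/\sqrt k}$ a phase-space translate of amplitude $(|w|+|z|)/\sqrt k$ and $\mu_k$ a centred Gaussian, one has
\[ \omega_N-\mathcal{V}_{w,z}\,\omega_N\,\mathcal{V}_{w,z}^*=-\frac{1}{\eps\sqrt k}\int_0^1 d\lambda\;\mathcal{V}_{\lambda w,\lambda z}\,[\,iw\cdot x+z\cdot\eps\nabla,\,\omega_N\,]\,\mathcal{V}_{\lambda w,\lambda z}^*, \]
hence $\tr\,|\omega_N-\omega_N^k|\le (C/\sqrt k)(\tr\,|[x,\omega_N]|+\tr\,|[\eps\nabla,\omega_N]|)\le CN/\sqrt k$, where the last bound uses that $\|W_N\|_{H^2_2}\le C$ implies the semiclassical commutator bounds $\tr\,|[x,\omega_N]|,\tr\,|[\eps\nabla,\omega_N]|\le CN\eps$ (proved by weighted Hilbert--Schmidt estimates as in the non-relativistic setting). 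This contribution is thus at most $CN/\sqrt k$, uniformly in $p$ and $t$.

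For the second term, moving the commutator onto $\omega_{N,s}^k$ by cyclicity and using again $\|\cU^*(t;s)\,e^{ip\cdot x}\,\cU(t;s)\|=1$, the contribution is bounded by $\eps^{-1}\int_0^t ds\,\tr\,|[\,V*(\rho_s-\rho_s^k),\omega_{N,s}^k\,]|$. Expanding $V$ in Fourier modes,
\[ \tr\,|[\,V*(\rho_s-\rho_s^k),\omega_{N,s}^k\,]|\le C\int dr\;|\widehat V(r)|\,|\widehat{(\rho_s-\rho_s^k)}(r)|\;\tr\,|[e^{ir\cdot x},\omega_{N,s}^k]|, \]
and here $\tr\,|[e^{ir\cdot x},\omega_{N,s}^k]|\le|r|\,\tr\,|[x,\omega_{N,s}^k]|\le CN\eps\,|r|\,e^{C|s|}$ by Proposition~\ref{prop:regest} (the commutator bounds of $\omega_N^k$ are inherited from those of $\omega_N$ via $\mathcal{V}$-conjugation and then propagated along the relativistic Hartree flow), while $|\widehat{(\rho_s-\rho_s^k)}(r)|\le CN^{-1}|\tr\,e^{-ir\cdot x}\delta_{N,s}|\le CN^{-1}(1+|r|)\,D_{N,s}$ since $\rho_s(x)-\rho_s^k(x)=N^{-1}\delta_{N,s}(x;x)$. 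Combining this with assumption \eqref{eq:assthm2} (note $|r|+|r|^2\le1+|r|^3$) gives $\tr\,|[\,V*(\rho_s-\rho_s^k),\omega_{N,s}^k\,]|\le C\eps\,e^{C|s|}D_{N,s}$, so the second contribution is at most $C\int_0^t e^{C|s|}D_{N,s}\,ds$.

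Putting the two bounds together, dividing by $1+|p|$ and taking the supremum over $p$, I obtain $D_{N,t}\le CN/\sqrt k+C\int_0^t e^{C|s|}D_{N,s}\,ds$, and Gronwall's lemma yields \eqref{eq:lm-HS}. The only genuinely delicate point is the one in the second paragraph above, namely deriving the trace-norm commutator bounds $\tr\,|[x,\omega_N]|,\tr\,|[\eps\nabla,\omega_N]|\le CN\eps$ from the regularity of $W_N$ and propagating them through the (relativistic) Hartree evolution; this is the content of Proposition~\ref{prop:regest} and is handled exactly as in \cite{BPSS}. Everything else is routine manipulation of Fourier modes of $V$ together with the trivial operator-norm bound for $\cU^*e^{ip\cdot x}\cU$.
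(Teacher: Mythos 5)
Your overall Duhamel/Gronwall skeleton matches the paper's, and the treatment of the initial term (moving $e^{ip\cdot x}$ inside the evolution and bounding $\tr|\omega_N-\omega_N^k|$ by $CN/\sqrt k$ via the commutator bounds on $\omega_N$) is sound, modulo the point that these trace-norm commutator bounds must actually be derived from $\|W_N\|_{H^2_2}\le C$, which you assert but do not prove. The genuine problem is in your treatment of the Duhamel integrand.

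You bound the second Duhamel contribution by taking the full trace norm
\[
\Big|\tr\,e^{ip\cdot x}\,\cU(t;s)\,[V*(\rho_s-\rho^k_s),\omega^k_{N,s}]\,\cU^*(t;s)\Big|
\le \tr\big|[V*(\rho_s-\rho^k_s),\omega^k_{N,s}]\big|
\]
and then estimating $\tr|[e^{ir\cdot x},\omega^k_{N,s}]|\le |r|\,\tr|[x,\omega^k_{N,s}]|\le CN\eps\,|r|\,e^{C|s|}$ by propagating the commutator bounds on $\omega^k_N$ along the Hartree flow. This produces a factor $e^{C|s|}$ inside the Gronwall integrand, i.e.\ you arrive at
\[
D_{N,t}\le \frac{CN}{\sqrt k}+C\int_0^t e^{C|s|}\,D_{N,s}\,ds,
\]
and Gronwall then yields $D_{N,t}\le (CN/\sqrt k)\exp\!\big(C\,e^{C|t|}\big)$ --- a \emph{double} exponential, not the single exponential $CNe^{C|t|}(1/\sqrt k+\eps)$ claimed in the lemma. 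The paper avoids this by keeping the commutator on the \emph{conjugated} observable: via cyclicity it writes the integrand as $\tr\,[\,\cU^*(t;s)e^{ip\cdot x}\cU(t;s),\,e^{iq\cdot x}\,]\,\omega^k_{N,s}$ and then applies Lemma~\ref{lemma:smallr}, whose bound carries the factor $e^{C|t-s|}$ (the evolution only acts on the interval $[s,t]$), and which is uniform over all density matrices $\omega$ so no propagated commutator bound on $\omega^k_{N,s}$ is needed at all. With $e^{C|t-s|}$ in place of $e^{C|s|}$, the Gronwall closes with a single exponential. Your proof therefore establishes a strictly weaker statement than Lemma~\ref{lemma:hs_lemma}, and the missing ingredient is precisely the propagation-in-the-conjugated-observable bound of Lemma~\ref{lemma:smallr}.

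Two smaller points. First, the proposition that propagates trace-norm commutator bounds $\tr|[x,\omega_{N,t}]|$, $\tr|[\eps\nabla,\omega_{N,t}]|$ along the Hartree flow is the final (unlabelled) proposition of Section~6, not Proposition~\ref{prop:regest}, which concerns propagation of Sobolev regularity of the Vlasov flow; the parenthetical description you give is of the right statement but the reference is wrong. Second, the paper's bound on the first term is $CN(\eps+1/\sqrt k)$ while you claim $CN/\sqrt k$; your estimate rests on $\tr|[x,\omega_N]|,\tr|[\eps\nabla,\omega_N]|\le CN\eps$, which is \emph{assumed} in Theorem~\ref{thm:semi} via~\eqref{eq:comms} but must be deduced here from $\|W_N\|_{H^2_2}\le C$ --- that deduction is exactly the kind of weighted Hilbert--Schmidt argument used to go from~\eqref{eq:Ds} to~\eqref{eq:D0} and deserves at least a line of justification rather than a bare assertion.
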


Inserting (\ref{eq:hs_com2_11}) and (\ref{eq:lm-HS}) in the r.h.s of (\ref{eq:hs_com2_1}), we obtain (\ref{eq:step2-HS}). 

\begin{proof}[Proof of Lemma \ref{lemma:hs_lemma}.] 
Let $\mathcal{U}(t;s)$ be as defined in (\ref{eq:cU-def}). Then
\begin{align*}
i \varepsilon \partial_t \, \mathcal{U}^*(t;0) \, \omega_{N,t}^k \, \mathcal{U}(t;0)= - \mathcal{U}^*(t;0) \left[ V * ( \rho_t -\rho_t^k), \omega_{N,t}^k\right] \mathcal{U}(t;0),
\end{align*}
leads to
\[ \begin{split} 
\tr \, e^{i p \cdot x} ( \omega_{N,t} - \omega_{N,t}^k) = & \; \tr \, 
\mathcal{U}^*(t;0) e^{i p \cdot x} \mathcal{U}(t;0) \left( \omega_{N} - \mathcal{U}^*(t;0)\omega_{N,t}^k\mathcal{U}(t;0)\right) \\
	= & \;  \tr \, \mathcal{U}^*(t;0) e^{i p \cdot x} \mathcal{U}(t;0) \left( \omega_{N} - \omega_{N}^k\right)- \frac{1}{i \varepsilon}\int_0^t d s \;  \tr \, \mathcal{U}^*(t;s) e^{i p \cdot x} \mathcal{U}(t;s) \left[V * ( \rho_s - \rho_s^k) ,\omega_{N,s}^k \right] \\ 
	= & \;  \tr \, \mathcal{U}^*(t;0) e^{i p \cdot x} \mathcal{U}(t;0) \left( \omega_{N} - \omega_{N}^k\right) \\
	&- \frac{1}{i \varepsilon}\int_0^t d s  \int dq \; \widehat{V}(q) \left( \widehat{\rho}_s(q)-\widehat{\rho}_s^k(q)\right) \, \tr \, \mathcal{U}^*(t;s) e^{i p \cdot x} \mathcal{U}(t;s) \left[e^{i q \cdot x} ,\omega_{N,s}^k \right] 
\end{split} \]
We remark that 
\[
 \widehat{\rho}_s(q)-\widehat{\rho}_s^k(q) = \frac{1}{N} \tr \, e^{-i q \cdot x} \left( \omega_{N,s}- \omega_{N,s}^k \right),
\]
Hence, we obtain
\begin{equation}\label{eq:lm-HS-1} \begin{split} 
\Big| \tr &\; e^{i p \cdot x} ( \omega_{N,t} - \omega_{N,t}^k) \Big|  \\
	\leq \; & \left|   \tr \; \mathcal{U}^*(t;0) e^{i p \cdot x} \mathcal{U}(t;0) \left( \omega_{N} - \omega_{N}^k\right) \right| \\
	&+\frac{1}{N\varepsilon}\int_0^t d s  \int dq \, | \widehat{V}(q)| \left|  \tr \, e^{-i q \cdot x} \left( \omega_{N,s}- \omega_{N,s}^k\right) \right| \; \left| \tr \left[\mathcal{U}^*(t;s) e^{i p \cdot x} \mathcal{U}(t;s), e^{i q \cdot x} \right] \omega_{N,t}^k \right| \end{split} \end{equation}
{F}rom Lemma \ref{lemma:smallr} we have
\[ \sup_{\omega, r} \frac{1}{1+r^2} \left| \tr \, \left[ e^{i r \cdot x} , \cU^* (t;s) \, e^{i p_1 \cdot x + \eps p_2 \cdot \nabla}  \cU (t;s) \right] \omega \right|  \leq C N \eps \,  (|p_1| + |p_2|) \, e^{C |t-s|} \]
where the supremum is taken over all fermionic density matrices $0 \leq \omega \leq 1$, with $\tr \, \omega = N$ and over all $r \in \bR^3$. With the assumption (\ref{eq:assthm2}) on the interaction potential, we conclude from (\ref{eq:lm-HS-1}) that 
\begin{equation}\label{eq:gro-HS} \begin{split} 
\sup_{p \in \bR^3} \frac{1}{1+|p|} \left| \tr \, e^{ip \cdot x} (\omega_{N,t} - \omega_{N,t}^k) \right| \leq \; &\sup_{p \in \bR^3} \frac{1}{1+|p|} \Big| \tr \, \cU^* (t;0)  e^{i p \cdot x} \cU (t;0) \, ( \omega_{N} - \omega_{N}^k) \Big| \\ &+ C \int_0^t e^{C|t-s|} \, \sup_{p \in \bR^3} \frac{1}{1+|p|} \left| \tr \, e^{ip \cdot x} (\omega_{N,s} - \omega_{N,s}^k) \right| \end{split} \end{equation}
The first term on the r.h.s. of the last equation can be bounded as in \cite[(4.17)-(4.21)]{BPSS}. We find:
\[ \sup_{p \in \bR^3} \frac{1}{1+ |p|} \left| \tr \, \cU^* (t;0)  e^{i p \cdot x} \cU (t;0) \, ( \omega_{N} - \omega_{N}^k) \right| \leq CN \left( \eps + \frac{1}{\sqrt{k}} \right) \]
Inserting this bound in (\ref{eq:gro-HS}) and applying Gronwall's lemma, we conclude that 
\[ \sup_{p \in \bR^3} \frac{1}{1+|p|} \left| \tr \, e^{ip \cdot x} (\omega_{N,t} - \omega_{N,t}^k) \right| \leq C N \exp (C|t|) \left( \eps + \frac{1}{\sqrt{k}} \right) \]
as claimed. 
\end{proof}

\bigskip

{\bf Comparison of $\widetilde{\omega}_{N,t}^k$ with $\widetilde{\omega}_{N,t}$.} In this paragraph, we show that 
\begin{equation}\label{eq:step3} \| \wt{\omega}_{N,t} - \wt{\omega}_{N,t}^k \|_\text{HS} = \sqrt{N} \, \| \wt{W}_{N,t} - \wt{W}_{N,t}^k \|_2 \leq C e^{C|t|} \sqrt{\frac{N}{k}} \, \end{equation}

To this end, we denote by $\wt{\rho}_t (x)$ and $\wt{\rho}_t^k (x)$ the position space densities associated with the solutions $\wt{W}_{N,t}$ and $\wt{W}^k_{N,t}$ of the Vlasov equation with initial data $W_N$ and $W_N^k$. We consider the solutions $(X_t (x,v), V_t (x,v))$ and $(X^k_t (x,v), V^k_t (x,v))$ of Newton's equations
\begin{align}\label{eq:cases}
\begin{cases}
\dot{X}_t(x,v) = \frac{V_t(x,v)}{\sqrt{1+V_t^2(x,v)}}\\
\dot{V}_t(x,v) = - \nabla ( V * \widetilde{\rho}_t) ( X_t(x,v))
\end{cases} \qquad \text{and} \qquad 
\begin{cases}
\dot{X}^k_t (x,v) = \frac{V^k_t(x,v)}{\sqrt{1+V^k_t(x,v)^2}}\\
\dot{V}^k_t (x,v) = - \nabla ( V * \widetilde{\rho}^k_t) ( X^k_t(x,v))
\end{cases},
\end{align}
with the initial conditions $X_0 (x,v) = X^k_0 (x,v) = x$ and $V_0 (x,v) = V_0^k (x,v) = v$. {F}rom (\ref{eq:cases}), we find
\begin{equation}\label{eq:XtVt} \begin{split} X_t (x,v) &= x + \int_0^t \frac{V_s(x,v)}{\sqrt{1+V_s^2(x,v)}} ds, \qquad 
V_t (x,v) = v - \int_0^t \nabla ( V * \widetilde{\rho}_s) ( X_s (x,v)) ds \end{split} \end{equation}
which easily imply, with the assumption (\ref{eq:assthm2}) on the interaction potential, that 
\[ \begin{split} |\nabla_x X_t (x,v)| &\leq C + C \int_0^t |\nabla_x V_s (x,v)| ds, \qquad 
|\nabla_x V_t (x,v)| \leq C \int_0^t | \nabla_x X_s (x,v)| ds , \\  |\nabla_v X_t (x,v)| &\leq C \int_0^t |\nabla_v V_s (x,v)| ds, \hspace{1.3cm} 
|\nabla_v V_t (x,v)| \leq C + C \int_0^t |\nabla_v X_s (x,v)| ds , \end{split} \]
and therefore, by Gronwall, that there exists $C > 0$ such that 
\begin{equation}\label{eq:nablaxvXV} |\nabla_x X_t (x,v)| + | \nabla_v X_t (x,v)| + |\nabla_x V_t (x,v)| + |\nabla_v V_t (x,v)| \leq C e^{C|t|} \end{equation}
Similarly, we find 
\begin{equation}\label{eq:nablakxvXV} |\nabla_x X^k_t (x,v)| + | \nabla_v X^k_t (x,v)| + |\nabla_x V^k_t (x,v)| + |\nabla_v V^k_t (x,v)| \leq C e^{C|t|} \end{equation}
uniformly in $k>0$. {F}rom (\ref{eq:XtVt}) and from the similar expression for $X_t^k, V_t^k$, it is also simple to check that 
\[ \begin{split} |X_t (x,v) - X^k_t (x,v)| &\leq C \int_0^t  |V_s (x,v) - V^k_s (x,v)| ds , \\ |V_t (x,v) - V_t^k (x,v)| &\leq C \int_0^t \| \wt{\rho}_s - \wt{\rho}_s^k \|_1 \, ds + C \int_0^t |X_s (x,v) - X_s^k (x,v)| ds \end{split} \] 
Since, by definition of $\wt{\rho}_t$ and $\wt{\rho}_t^k$, $\| \wt{\rho}_s - \wt{\rho}_s^k \|_1 \leq \| \wt{W}_{N,s} - \wt{W}_{N,s}^k \|_1$, we conclude by Gronwall's lemma that
\begin{equation}\label{eq:XminX} |X_t (x,v) - X^k_t (x,v)| + |V_t (x,v) - V_t^k (x,v)| \leq C e^{C|t|} \int_0^t \| \wt{W}_{N,s} - \wt{W}_{N,s}^k \|_1 ds  \, . \end{equation}
We also need estimates on the differences among derivatives of $X_t, V_t$ and of $X_t^k, V_t^k$. Starting again with the expressions (\ref{eq:XtVt}), using the bound
\[ \| \nabla^3 (V * \wt{\rho}_s) \|_\infty \leq \| \nabla^2 V \|_\infty \| \nabla \wt{W}_{N,s} \|_1 \leq C \| \wt{W}_{N,s} \|_{H^1} \leq C e^{C|s|} \| W_N \|_{H^1} \]
(where the last estimate follows from Prop. \ref{prop:regest}) and proceeding similarly as in \cite[(4.30)-(4.32)]{BPSS}, we conclude that 
\begin{equation}\label{eq:nablaxXV} \begin{split} \left|\nabla_x X_t (x,v) - \nabla_x X_t^k (x,v)\right| &+ \left| \nabla_x V_t (x,v) - \nabla_x V_t^k (x,v)\right| \\ &\leq C e^{C|t|} \int_0^t ds \, \| \wt{\rho}_s - \wt{\rho}^k_s \|_1 +  C e^{C|t|} \int_0^t ds \, \int_0^s d\tau \, \| \wt{\rho}_\tau - \wt{\rho}^k_\tau \|_1 \end{split} \end{equation}
and that 
 \begin{equation}\label{eq:nablavXV} \begin{split} \left|\nabla_v X_t (x,v) - \nabla_v X_t^k (x,v)\right| &+ \left| \nabla_v V_t (x,v) - \nabla_v V_t^k (x,v)\right| \\ &\leq C e^{C|t|} \int_0^t ds \, \| \wt{\rho}_s - \wt{\rho}^k_s \|_1 +  C e^{C|t|} \int_0^t ds \, \int_0^s d\tau \, \| \wt{\rho}_\tau - \wt{\rho}^k_\tau \|_1 \end{split} \end{equation}

With the bounds (\ref{eq:nablaxvXV}), (\ref{eq:nablakxvXV}), (\ref{eq:XminX}),  (\ref{eq:nablaxXV}), (\ref{eq:nablavXV}), we can apply the same continuity argument used in \cite{BPSS}[(4.32)-(4.41)]. The bound 
\[ \| W_N - W_N^k \|_1 \leq \| W_N - W_N^k \|_{H^0_2} \leq C / \sqrt{k} \] implies, first of all, that
\[ \| \wt{W}_{N,t} - \wt{W}_{N,t}^k \|_1 \leq C \frac{e^{C|t|}}{\sqrt{k}} \]
Using this estimate and the bound $\| W_N - W_N^k \|_2 \leq C / \sqrt{k}$, we also obtain  
\[ \| \wt{W}_{N,t} - \wt{W}_{N,t}^k \|_2 \leq C \frac{e^{C|t|}}{\sqrt{k}}  \]

\bigskip

{\bf Proof of Theorem \ref{thm:HS1}.} Combining (\ref{eq:step1-HS}), (\ref{eq:step2-HS}) and (\ref{eq:step3}), we obtain 
\[ \begin{split} \| \omega_{N,t} - \wt{\omega}_{N,t} \|_\text{HS} &\leq \| \omega_{N,t} - \omega^k_{N,t} \|_\text{HS} + \| \omega_{N,t}^k - \wt{\omega}_{N,t}^k \|_\text{HS} + \| \wt{\omega}_{N,t}^k - \wt{\omega}_{N,t} \|_\text{HS} \\ & \leq C N^{1/2} \left[ \eps + \frac{1}{\sqrt{k}} \right] \exp (C \exp (C|t|))  \left[ 1 + (\eps \sqrt{k})^4 \right] \end{split}  \]
Choosing in particular $k = \eps^{-2}$ we conclude that
\[   \| \omega_{N,t} - \wt{\omega}_{N,t} \|_\text{HS} \leq C \sqrt{N} \, \eps  \exp (C \exp (C|t|)) \]
as claimed.

\bigskip

{\bf Proof of Theorem \ref{thm:HS2}.} Recall that $W_{N,t}$ denotes the Wigner transform of the solution $\omega_{N,t}$ of Hartree's equation and that $W_N$ denotes the Wigner transform of the initial data $\omega_{N,t=0} = \omega_N$. Furthermore, $W_t$ denotes the solution of the Vlasov equation with initial data $W_0$. 

We define $\wt{W}_{N,t}$ the solution of the Vlasov equation with initial data $W_N$. Then we have
\begin{equation}\label{eq:triangle} 
\| W_{N,t} - W_t \|_2 \leq \| W_{N,t} - \wt{W}_{N,t} \|_2 + 
\| \wt{W}_{N,t} - W_t \|_2 
\end{equation}
Theorem \ref{thm:HS1} implies that 
\[ \| W_{N,t} - \wt{W}_{N,t} \|_2 \leq C \eps \exp ( C \exp (C|t|)) \]
To bound the second term on the r.h.s. of (\ref{eq:triangle}), we have to compare two solutions $\wt{W}_{N,t}$ and $W_t$ of the Vlasov equation associated to the two initial data $W_N$ and $W_0$ in terms of the norms 
$\| W_N - W_0 \|_1 = \kappa_{N,1}$ and $\| W_N - W_0 \|_2 = \kappa_{N,2}$.
This is exactly what we did to prove (\ref{eq:step3}). Following the same strategy (which, as explain above, goes back to \cite{BPSS}), we obtain that
\[  \| \wt{W}_{N,t} - W_t \|_2  \leq C (\kappa_{N,1} + \kappa_{N,2} ) \exp (C |t|) \]
Hence, (\ref{eq:triangle}) implies that 
\[ \|W_{N,t} - W_t \|_2 \leq   C \eps \exp ( C \exp (C|t|)) + C (\kappa_{N,1} + \kappa_{N,2} ) \exp (C |t|)  \]
and concludes the proof of Theorem \ref{thm:HS2}. 

\section{Convergence for expectation of semiclassical observables}

In this section, we prove Theorem \ref{thm:semi} and Theorem \ref{thm:semi2}. 
To show Theorem \ref{thm:semi}, we first make the additional assumption $\| W_N \|_{H^4_2} < C$, uniformly in $N$. 

{F}rom \eqref{eq:tr0} we find 
\begin{equation}\label{eq:semi_0} 
\begin{split} \tr \; e^{i p \cdot x + q \cdot \varepsilon \nabla} ( \omega_{N,t}- \widetilde{\omega}_{N,t}) = & \; \frac{1}{i \varepsilon} \int_0^t ds \; \tr \; e^{i p \cdot x + q \cdot \varepsilon \nabla} \mathcal{U}(t;s) \left( \left[ \sqrt{1-\varepsilon^2\Delta}, \widetilde{\omega}_{N,s} \right] - A_{N,s} \right) \mathcal{U}^*(t;s) \\
	&+ \frac{1}{i \varepsilon} \int_0^t ds \; \tr \; e^{i p \cdot x + q \cdot \varepsilon \nabla} \mathcal{U} (t;s) [V * ( \rho_s-\widetilde{\rho}_s), \widetilde{\omega}_{N,s} ] \mathcal{U}^*(t;s) \\
	&+ \frac{1}{i \varepsilon} \int_0^t ds \; \tr e^{i p \cdot x + q \cdot \varepsilon \nabla} \mathcal{U} (t;s) \, C_{N,s} \mathcal{U}^*(t;s),
\end{split} \end{equation}
where $\cU (t;s)$ is the unitary dynamics defined in (\ref{eq:Uts}), $\rho_s (x) = N^{-1} \omega_{N,s} (x;x)$, $\wt\rho_s (x) = N^{-1} \wt{\omega}_{N,s} (x;x)$, the operator $A_{N,s}$ has the momentum space kernel 
\[ \widehat{A}_{N,s} (p;q) = \frac{\varepsilon^2 (p-q) \cdot (p+q)}{2 \sqrt{1- \frac{\eps^2}{4} (p-q)^2}} \; \widehat{\wt{\omega}}_{N,t} (p;q) \]
and the operator $C_{N,s}$ has the position space kernel
\[ C_{N,s} (x;y) = \left[ (V * \wt{\rho}_t) (x) -(V * \wt{\rho}_t) (y) - \nabla (V* \wt{\rho}_t) \left( \frac{x+y}{2} \right) \cdot (x-y) \right] \wt{\omega}_{N,t} (x;y) \, . \]
We consider, first of all, the contribution arising from the second term on the r.h.s. of (\ref{eq:semi_0}). We write 
\[ \begin{split} 
\tr \; e^{i p \cdot x + q \cdot \varepsilon \nabla} & \mathcal{U}(t;s) [V * (\rho_s-\widetilde{\rho}_s ), \widetilde{\omega}_{N,s}] \, \mathcal{U}^*(t;s) 
\\
=& \int d z \; ( \rho_s(z)-\widetilde{\rho}_s(z)) \; \tr \; e^{i p \cdot x + q \cdot \varepsilon \nabla} \mathcal{U}(t;s) [V(x-z), \widetilde{\omega}_{N,s}]\, \mathcal{U}^*(t;s) \\
=& \frac{1}{N} \int d k \; \widehat{V}(k) \; \tr \; e^{-i k \cdot z} ( \omega_{N,s}-\widetilde{\omega}_{N,s}) \tr \, e^{i p \cdot x + q \cdot \varepsilon \nabla} \mathcal{U}(t;s) [e^{i x \cdot k}, \widetilde{\omega}_{N,s}]\, \mathcal{U}^*(t;s)\\
\end{split}\]
Hence, we obtain (using the assumption (\ref{eq:assthm4}) on the interaction $V$) 
\begin{equation}\label{eq:semi-sec} \begin{split} \Big| \tr \; e^{i p \cdot x + q \cdot \varepsilon \nabla} & \mathcal{U}(t;s) [V * (\rho_s-\widetilde{\rho}_s ), \widetilde{\omega}_{N,s}] \, \mathcal{U}^*(t;s)  \Big| \\ \leq \; & \frac{1}{N} \int dk |\widehat{V} (k)| \Big| \tr\, e^{ik\cdot x} (\omega_{N,s} - \wt{\omega}_{N,s}) \Big| \, \left| \tr \, e^{ip \cdot x + q \cdot \eps \nabla} \, \cU (t;s) [ e^{ik \cdot x} , \wt{\omega}_{N,s} ] \, \cU^* (t;s) \right| \\ \leq \; & \frac{C \, \tr \,  |\, \wt{\omega}_{N,s} \, |}{N} \sup_{k \in \bR^3} \frac{1}{1+k^2} \Big| \tr\, e^{ik \cdot x} (\omega_{N,s} - \wt{\omega}_{N,s}) \Big| \sup_{\omega,k} \frac{1}{1+k^2} \, \left| \tr \, [ e^{ik \cdot x} , \cU^* (t;s) \, e^{ip \cdot x + q \cdot \eps \nabla} \, \cU (t;s)] \omega \right| \end{split} \end{equation}
where the last supremum is taken over $k \in \bR^3$ and over all operators trace class operators $\omega$ on $L^2 (\bR^3)$ with $\tr \, |\omega| \leq 1$. {F}rom Lemma \ref{lemma:smallr}, we find 
\begin{equation}\label{eq:semi-secf} 
\begin{split} 
\Big| \tr \; e^{i p \cdot x + q \cdot \varepsilon \nabla} & \mathcal{U}(t;s) [V * (\rho_s-\widetilde{\rho}_s ), \widetilde{\omega}_{N,s}] \, \mathcal{U}^*(t;s)  \Big| \\ &\leq \frac{C \, \tr \, |\wt{\omega}_{N,s}|}{N} \eps (|p|+|q|) e^{C|t-s|} \, \sup_{k \in \bR^3} \frac{1}{(1+|k|)^2} \left| \tr \, e^{ik \cdot x} (\omega_{N,s} - \wt{\omega}_{N,s}) \right| \end{split} \end{equation}

Next, let us focus on the last term on the r.h.s. of (\ref{eq:semi_0}). Following the same strategy as in \cite{BPSS}[(5.5)-(5.6)], we have
\[  \Big| \, \tr \, e^{ip \cdot x+ q \cdot \eps \nabla} \, \cU (t;s) C_s \, \cU^* (t;s) \Big| \leq C \, \tr \, |\wt{\omega}_{N,s}| |\sup_{\omega, i,j} \left| \tr \, \left[ x_i , \left[ x_j, \cU^* (t;s) \, e^{ip \cdot x + q \cdot \eps \nabla} \cU (t;s) \right] \right] \omega \right| \]
With Lemma \ref{lemma:double}, we conclude that 
\begin{equation}\label{eq:semi-thirf} \Big| \, \tr \, e^{ip \cdot x+ q \cdot \eps \nabla} \, \cU (t;s) C_s \, \cU^* (t;s) \Big| \leq C \eps^2 \, (|p| + |q|)^2 e^{C |t-s|} \, \tr \, |\wt{\omega}_{N,s}| \end{equation}

Finally, we consider the first term on the r.h.s. of (\ref{eq:semi_0}). 
Using (\ref{eq:Fpq2}), we rewrite the momentum space kernel of the operator $[ \sqrt{1- \eps^2 \Delta} , \wt{\omega}_{N,s}] - A_{N,s}$ as follows 
\[  \begin{split} 
\Big( [ \sqrt{1- \eps^2 \Delta} , & \; \wt{\omega}_{N,s}] - A_{N,s} \Big) (p;q) 
\\ = \; & \left( \sqrt{1+\eps^2 p^2} - \sqrt{1+ \eps^2 q^2} - \frac{\eps^2 (p-q) \cdot (p+q)}{\sqrt{1+ \frac{\eps^2}{4} (p-q)^2}} \right) \wt{\omega}_{N,s} (p;q) \\ = \; & \int_0^1 d\kappa \int_0^\kappa dr \left\{ \eps^2 (p-q)^2 \left[ \frac{1}{\left( 1 + \eps^2 \left( \frac{p+q}{2} + r \frac{p-q}{2} \right)^2 \right)^{1/2}} - \frac{1}{\left( 1 + \eps^2 \left( \frac{p+q}{2} - r \frac{p-q}{2} \right)^2 \right)^{1/2}} \right] \right. \\ 
\; & \left. + \eps^4 \left[ \frac{\left( (p-q) \cdot \left( \frac{p+q}{2} + r \frac{p-q}{2} \right) \right)^2}{\left( 1 + \eps^2 \left( \frac{p+q}{2} + r \frac{p-q}{2}\right)^2 \right)^{3/2}} - \frac{\left( (p-q) \cdot \left( \frac{p+q}{2} - r \frac{p-q}{2}\right)\right)^2}{\left( 1 + \eps^2 \left( \frac{p+q}{2} - r \frac{p-q}{2}\right)^2 \right)^{3/2}} \right] \right\} \, \wt{\omega}_{N,s} (p;q) \end{split} \]

We write
\begin{equation}\label{eq:fij-rep}  \begin{split} 
\Big( [ \sqrt{1- \eps^2 \Delta} ,  &\, \wt{\omega}_{N,s}] - A_{N,s} \Big) (p;q) \\ = \; & \int_0^1 d\kappa \int_0^\kappa dr \sum_{i,j = 1}^3 \eps (p_i - q_i) \eps (p_j - q_j) \\ & \hspace{2cm} \times \left[ f_{ij} \left( \eps \left( \frac{p+q}{2} + r \frac{p-q}{2} \right) \right) - f_{ij} \left( \eps \left( \frac{p+q}{2} - r \frac{p-q}{2} \right) \right) \right] \wt{\omega}_{N,s} (p;q) 
\end{split} \end{equation}
with the definition 
\[ f_{ij} (\xi) = \frac{1}{\sqrt{1+\xi^2}} \left[ \delta_{ij} + \frac{\xi_i \xi_j}{1+\xi^2} \right]  \]
for all $i,j = 1,2,3$ and $\xi \in \bR^3$. We consider the (distributional) Fourier transform
\[ \widehat{f}_{ij} (k) = \frac{1}{(2\pi)^3} \int d\xi \, f_{ij} (\xi) e^{-i \xi \cdot k} \]
We notice that 
\[ \begin{split} |k|^4 \widehat{f}_{ij} (k) &= \frac{1}{(2\pi)^3} \int d\xi f_{ij} (\xi) \Delta^2_\xi e^{i k \cdot \xi} = \frac{1}{(2\pi)^3} \int d\xi \Delta^2_\xi f_{ij} (\xi) e^{i k \cdot \xi} \end{split} \]
Since $|e^{ik \cdot \xi}| \leq 1$, we immediately find (using the fact that $\Delta^2 f_{ij}$ is integrable at infinity) that $|k|^4 |\widehat{f}_{ij} (k)| \leq C$. On the other hand, since, for example, \[ |e^{ik \cdot x} - 1- i k \cdot x| \leq C |k|^{3/2} |x|^{3/2}  \]
we also have $|k|^4 |\widehat{f}_{ij} (k)| \leq C |k|^{3/2} $ (because $|\xi|^{3/2} |\Delta^2 f_{ij} (\xi)|$ is also integrable at infinity). Combining these two bounds, we conclude that
\begin{equation}\label{eq:fij-bd} |\widehat{f}_{ij} (k)| \leq \frac{C}{|k|^{5/2} (1 + |k|^{3/2})} \end{equation}
for all $i,j = 1,2,3$. 
  
{F}rom (\ref{eq:fij-rep}), we obtain 
\[ \begin{split} 
\Big( [ \sqrt{1- \eps^2 \Delta} , & \; \wt{\omega}_{N,s}] - A_{N,s} \Big) (p;q) \\ = \; & \sum_{i,j =1}^3 \int_0^1 d\kappa \int_0^\kappa dr \int dk \, \widehat{f}_{ij} (k) \eps (p_i - q_i) \eps (p_j - q_j) e^{i \eps k \cdot \left( \frac{p+q}{2} + r \frac{p-q}{2} \right)} \wt{\omega}_{N,s} (p;q) \\ &- 
\sum_{i,j =1}^3 \int_0^1 d\kappa \int_0^\kappa dr \int dk \, \widehat{f}_{ij} (k) \eps (p_i - q_i) \eps (p_j - q_j) e^{i \eps k \cdot \left( \frac{p+q}{2} - r \frac{p-q}{2} \right)} \wt{\omega}_{N,s} (p;q) \\
=\; & \sum_{i,j =1}^3 \int_0^1 d\kappa \int_0^\kappa dr \int dk \, \widehat{f}_{ij} (k) \, \left[ \eps \nabla_i , \left[ \eps \nabla_j , e^{\frac{1+r}{2} k \cdot \eps \nabla} \wt{\omega}_{N,s} e^{-\frac{1-r}{2} k \cdot \eps \nabla} \right] \right] (p;q) \\ & - \sum_{i,j =1}^3 \int_0^1 d\kappa \int_0^\kappa dr \int dk \, \widehat{f}_{ij} (k) \, \left[ \eps \nabla_i , \left[ \eps \nabla_j , e^{\frac{1-r}{2} k \cdot \eps \nabla} \wt{\omega}_{N,s} e^{-\frac{1+r}{2} k \cdot \eps \nabla} \right] \right] (p;q)
\end{split} \]
Using (\ref{eq:fij-bd}) to integrate over $k$, we conclude that 
\[ \begin{split} \Big| \tr \; e^{ip \cdot x + q \cdot \eps \nabla} \, & \cU (t;s) \left( \left[ \sqrt{1-\eps^2 \Delta} , \wt{\omega}_{N,s} \right] - A_{N,s} \right) \, \cU^* (t;s) \Big| \\ \leq \; & C \; \tr \, | \, \wt{\omega}_{N,s} | \, \sup_{\omega, i,j} \left| \tr \; e^{ip \cdot x + q \cdot \eps \nabla} \, \cU (t;s) \left[ \eps \nabla_i , \left[ \eps \nabla_j , \omega \right] \right] \cU^* (t;s) \right| 
\\ = \; & C \; \tr \, | \, \wt{\omega}_{N,s} | \, \sup_{\omega, i,j} \left| \tr \;  \left[ \eps \nabla_i , \left[ \eps \nabla_j , \cU^* (t;s) \, e^{ip \cdot x + q \cdot \eps \nabla} \, \cU (t;s) \right] \right] \omega \right| 
\end{split} \]
where the supremum is taken over all $i,j = 1,2,3$ and over all trace class operators $\omega$ on $L^2 (\bR^3)$, with $\tr \, |\omega| \leq 1$. Applying Lemma \ref{lemma:double}, we obtain 
\begin{equation}\label{eq:semi-firf} \begin{split} \Big| \tr \; e^{ip \cdot x + q \cdot \eps \nabla} \, & \cU (t;s) \left( \left[ \sqrt{1-\eps^2 \Delta} , \wt{\omega}_{N,s} \right] - A_{N,s} \right) \, \cU^* (t;s) \Big| \leq C \eps^2 (|p|+|q|)^2 e^{C|t-s|}  \, \tr \, |\wt{\omega}_{N,s}| \end{split} \end{equation}

Bounding the r.h.s. of (\ref{eq:semi_0}) with (\ref{eq:semi-secf}), (\ref{eq:semi-thirf}) and (\ref{eq:semi-firf}), we obtain 
\begin{equation}\label{eq:bd-allto} \begin{split} 
\Big| \tr \; e^{ip \cdot x + i q \cdot \eps \nabla} \, \left( \omega_{N,t} - \wt{\omega}_{N,t} \right) \Big| \leq \; &C  \eps (|p| + |q|)^2 \, \int_0^t \, e^{C|t-s|} \, \tr \, |\wt{\omega}_{N,s}| \, ds \\ &+ \frac{C (|p|+|q|)}{N} \int_0^t e^{C|t-s|} \, \tr\, |\wt{\omega}_{N,s}| \, \sup_k  \frac{1}{(1+|k|)^2} \Big| \tr \, e^{ik \cdot x} (\omega_{N,s} - \wt{\omega}_{N,s}) \Big| \end{split} \end{equation}

Proceeding as in \cite{BPSS}[(5.10)-(5.11)], we find
\[ \tr \, |\wt{\omega}_{N,s}| \leq C N e^{C|s|} \sum_{j=0}^4 \eps^j \| W_N \|_{H^j_2} \]
Inserting in (\ref{eq:bd-allto}) and applying Gronwall's lemma, we arrive at
\[ \begin{split} \sup_{p,q \in \bR^3} \frac{1}{(1+|p|+|q|)^2} &\Big| \tr\; e^{ip \cdot x + i q \cdot \eps \nabla} \, \left( \omega_{N,t} - \wt{\omega}_{N,t} \right) \Big| \\ & \hspace{2cm} \leq C N \eps \left[ \sum_{j=0}^4 \eps^j \| W_N \|_{H^j_2} \right]  \exp \left( C \left[ \sum_{j=0}^4 \eps^j \| W_N \|_{H^j_2} \right] \exp (C |t|) \right) \end{split} \]
which proves Theorem \ref{thm:semi}, under the additional assumption that $\| W_N \|_{H^4_2} < \infty$, uniformly in $N$. To relax this condition, we proceed analogously as in \cite{BPSS} (starting from Eq. (5.12)), with a simple approximation argument (it is in this approximation argument that the assumption (\ref{eq:comms}) plays an important role). 

\bigskip

Also the proof of Theorem \ref{thm:semi2} follows the line of the proof of Theorem 2.5 in \cite{BPSS}. Recall that $\widehat{W}_{N,t}$ is the Fourier transform of the Wigner transform $W_{N,t}$ of the solution $\omega_{N,t}$ of Hartree's equation with initial data $\omega_N$. $W_t$, on the other hand, denotes the solution of the relativistic Vlasov equation with initial data $W_0$. We introduce the notation $\wt{W}_{N,t}$ for the solution of Vlasov equation with initial data $W_N$, given by the Wigner transform of $\omega_{N}$. It follows from Theorem \ref{thm:semi} that 
\[\begin{split} \Big| \widehat{W}_{N,t} (p;q) - \widehat{W}_t (p;q) \Big| \leq   \; & \Big| \widehat{W}_{N,t} (p;q) - \widehat{\wt{W}}_{N,t} (p;q) \Big| + \Big| \widehat{\wt{W}}_{N,t} (p;q) - \widehat{W}_t (p;q) \Big| \\  \leq \; &C \eps (1 + |p| + |q|)^2 \exp (C \exp (C|t|))  +  \Big| \widehat{\wt{W}}_{N,t} (p;q) - \widehat{W}_t (p;q) \Big| \end{split} \]
Following the arguments in \cite{BPSS}, in particular, the proof of the stability of Vlasov equation between (4.23) and (4.39), we find that 
\[ \| \wt{W}_{N,t} - W_t \|_1 \leq C \exp (C \exp (C |t|))  \| W_N - W_0 \|_1 \]
for a constant $C > 0$ depending only on $\| W_0 \|_{W^{1,1}}$. This implies that 
\[ \left| \widehat{W}_{N,t} (p;q) - \widehat{W}_t (p;q) \right| \leq C (1 + |p| + |q|)^2 (\eps+ \kappa_N) \exp (C \exp (C|t|)) \]

\section{Auxiliary results} 
We prove the following auxiliary Lemmata as in \cite [Section 6]{BPSS}, whereby the relativistic dynamics affects some changes. 
The first Lemma is an adaption of \cite[Lemma 4.2]{BPSS} to the relativistic dynamics.  
 \begin{lemma}
 \label{lemma:smallr}
Assume that $V \in L^1\left( \mathbb{R}^3\right)$ satisfies
\begin{align}
\int d p \; ( 1+|p|^3)  | \widehat{V}(p) | < \infty. \label{ass:lemma_smallr}
\end{align}
Let $\mathcal{U}(t;s)$ be the unitary evolution generated by the relativistic Hartree Hamiltonian $h_H(t)= \sqrt{1-\varepsilon^2 \Delta} + \left( V * \rho_t \right)$. There exists a constant $C>0$ such that 
\[ \begin{split} 
\sup_{\omega, r \in \bR^3} \frac{1}{1 + r^2} \left| \, \tr \left[ e^{i r \cdot x }, \mathcal{U}^*(t;s) e^{i x \cdot p + \varepsilon \nabla \cdot q} \mathcal{U}(t;s)\right] \omega \right| \leq \; & C \varepsilon \left( |p| + |q| \right) e^{C |t-s|}\\
\sup_{\omega} \left| \, \tr \left[ \eps \nabla , \mathcal{U}^*(t;s) e^{i x \cdot p + \varepsilon \nabla \cdot q} \mathcal{U}(t;s)\right] \omega \right|  \leq \; & C \varepsilon \left( |p| + |q| \right) e^{C |t-s|},
\end{split} \]
where the suprema are taken over all trace class operators $\omega$ with $\tr|\omega| \leq 1$.
\end{lemma}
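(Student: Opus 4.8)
\medskip
\noindent\textbf{Proof plan.} The argument will follow \cite[Section~6]{BPSS}, adjusting for the nonlocal relativistic dispersion. Since $\sup\{|\tr\,B\omega|:\tr|\omega|\le1\}=\|B\|$, it suffices to bound $\|[e^{ir\cdot x},\mathcal{A}(t)]\|$ (uniformly in $r$, after dividing by $1+r^2$) and $\|[\varepsilon\nabla,\mathcal{A}(t)]\|$, where $\mathcal{A}(t):=\mathcal{U}^*(t;s)\,W\,\mathcal{U}(t;s)$ and $W:=e^{ix\cdot p+\varepsilon\nabla\cdot q}$. The Weyl relations $W^{-1}xW=x-\varepsilon q$ and $W^{-1}(\varepsilon\nabla)W=\varepsilon\nabla+i\varepsilon p$ give $[e^{ir\cdot x},W]=(e^{i\varepsilon r\cdot q}-1)e^{ir\cdot x}W$ and $[\varepsilon\nabla,W]=i\varepsilon p\,W$, so at $t=s$ the two norms are $\le\varepsilon|r||q|$ and $=\varepsilon|p|$, already of the asserted form.

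Since $W$ and $\mathcal{U}(t;s)$ are unitary, $i\varepsilon\,\partial_t\mathcal{A}(t)=\mathcal{A}(t)\,\widetilde S_t$ with $\widetilde S_t:=\mathcal{U}^*(t;s)\,S_t\,\mathcal{U}(t;s)$ and
\[
S_t:=h_H(t)-W^{-1}h_H(t)W=\bigl(\sqrt{1-\varepsilon^2\Delta}-W^{-1}\sqrt{1-\varepsilon^2\Delta}\,W\bigr)+\bigl((V*\rho_t)(x)-(V*\rho_t)(x-\varepsilon q)\bigr),
\]
a \emph{bounded, self-adjoint} operator; this is where the relativistic dispersion enters. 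Since $\|\nabla_\xi\sqrt{1+\varepsilon^2\xi^2}\|_\infty=\varepsilon$ (and its higher $\xi$-derivatives are $O(\varepsilon^2)$), the first summand is the Fourier multiplier with symbol $\sqrt{1+\varepsilon^2\xi^2}-\sqrt{1+\varepsilon^2(\xi+p)^2}$, of norm $\le\varepsilon|p|$, with $O(\varepsilon^2|p|)$-small derivatives, vanishing as $p\to0$; and since $\|\widehat{\rho_t}\|_\infty\le1$ one has $\|\nabla^\beta(V*\rho_t)\|_\infty\le\int dk\,|k|^{|\beta|}|\widehat V(k)|$, so the second summand has norm $\le C\varepsilon|q|$, with controlled derivatives, vanishing as $q\to0$. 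Hence $\|S_t\|\le C\varepsilon(|p|+|q|)$ with $C$ depending only on $\int dk\,(1+|k|^3)|\widehat V(k)|$, and $S_t$ is the sum of a bounded function of $\varepsilon\nabla$ and a bounded function of $x$ with well-controlled derivatives. (In the non-relativistic case the kinetic summand is an \emph{unbounded} first-order differential operator; here it is bounded, but nonlocal, which is the source of the difficulty below.)

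The essential device is a cancellation that keeps $|p|+|q|$ out of the exponent. Writing $[e^{ir\cdot x},\mathcal{A}(t)]=\mathcal{A}(t)B_r(t)$ with $B_r(t):=\mathcal{A}(t)^*e^{ir\cdot x}\mathcal{A}(t)-e^{ir\cdot x}$, one computes $i\varepsilon\,\partial_t B_r(t)=[B_r(t),\widetilde S_t]+[e^{ir\cdot x},\widetilde S_t]$; the homogeneous part is an inner derivation, whose flow acts by conjugation with unitaries and is therefore isometric on the operator norm. Since $\|[e^{ir\cdot x},\mathcal{A}(t)]\|=\|B_r(t)\|$, Duhamel gives
\[
\|[e^{ir\cdot x},\mathcal{A}(t)]\|\le\|B_r(s)\|+\frac1\varepsilon\int_s^t\|[e^{ir\cdot x},\widetilde S_\tau]\|\,d\tau,
\]
\emph{without} any exponential factor arising from $\|\widetilde S_\tau\|\sim\varepsilon(|p|+|q|)$. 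Now $[e^{ir\cdot x},\widetilde S_\tau]=\mathcal{U}^*(\tau;s)[R_\tau,S_\tau]\mathcal{U}(\tau;s)$ with $R_\tau:=\mathcal{U}(\tau;s)e^{ir\cdot x}\mathcal{U}(\tau;s)^*$ of norm $1$, and since $S_\tau$ is a function of $\varepsilon\nabla$ of norm $\le\varepsilon|p|$ plus a function of $x$ of norm $\le C\varepsilon|q|$ (each vanishing to first order in $p$, respectively $q$), one has $\|[R_\tau,S_\tau]\|\le C\varepsilon|p|\,\|[\varepsilon\nabla,R_\tau]\|+C\varepsilon|q|\,\|[x,R_\tau]\|$. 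The commutators $[\varepsilon\nabla,R_\tau]$, $[x,R_\tau]$ and finitely many of their iterates satisfy a closed system of differential inequalities (obtained by differentiating in $\tau$ and applying the Jacobi identity), which closes at rate $O(1)$ because: $[x,h_H(t)]=\varepsilon^2\nabla(1-\varepsilon^2\Delta)^{-1/2}$ and $[\varepsilon\nabla,h_H(t)]=\varepsilon(\nabla(V*\rho_t))(x)$ are $O(\varepsilon)$; $\|[\sqrt{1-\varepsilon^2\Delta},Q]\|\le C\|[\varepsilon\nabla,Q]\|$ with $C$ independent of $\varepsilon$ (by functional calculus); and $\|[(V*\rho_t)(x),Q]\|$ is bounded by $\|[x,Q]\|$ plus higher iterates with coefficients $\int dk\,|k|^{j}|\widehat V(k)|$. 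Gronwall yields $\|[\varepsilon\nabla,R_\tau]\|+\|[x,R_\tau]\|\le C\varepsilon|r|e^{C|\tau-s|}$ (and comparable bounds for the iterates), hence $\|[R_\tau,S_\tau]\|\le C\varepsilon^2|r|(|p|+|q|)e^{C|\tau-s|}$; inserting this in the integral inequality and using $|r|/(1+r^2)\le1/2$ proves the first bound. The second is obtained the same way, with the simplification that $\varepsilon\nabla$ commutes with $\sqrt{1-\varepsilon^2\Delta}$, so that $\mathcal{U}(\tau;s)(\varepsilon\nabla)\mathcal{U}(\tau;s)^*$ evolves only through the bounded potential.

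The main obstacle, and the only genuine difference from \cite{BPSS}, is the nonlocality of $\sqrt{1-\varepsilon^2\Delta}$: as $h_H(t)$ is unbounded and not a differential operator, $[h_H(t),[x,R_t]]$ cannot be estimated by $\|[x,R_t]\|$ alone and the commutator expansion of $[\sqrt{1-\varepsilon^2\Delta},Q]$ does not terminate. One must instead use a functional-calculus (Helffer--Sj\"ostrand / resolvent) representation of $\sqrt{1-\varepsilon^2\Delta}=\sqrt{1+(-i\varepsilon\nabla)^2}$ to reduce each such commutator to $\|[\varepsilon\nabla,Q]\|$ with $\varepsilon$-independent constants --- the factor $\varepsilon$ being supplied by every momentum-derivative of $\sqrt{1+\eta^2}$; the second-order commutators generated while closing the tower, evaluated on $e^{ir\cdot x}$, produce the factor $r^2$ that accounts for the quadratic weight $1+r^2$ in the statement. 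Carrying this finite tower of commutators through --- with the moments of $\widehat V$ up to order three furnishing the $k$-weights produced whenever $(V*\rho_t)(x)$ is commuted with $\varepsilon\nabla$ --- is the bulk of the technical work.
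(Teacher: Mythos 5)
Your proposal is correct and reaches the result by a genuinely different route. The paper introduces a twisted propagator $\wt\cU(t;s)$ generated by $e^{ir\cdot x}h_H(t)e^{-ir\cdot x}=h_H(t)+A$, with $A=\sqrt{1+\eps^2(i\nabla-r)^2}-\sqrt{1-\eps^2\Delta}$, differentiates $\tr\,\wt\cU^*(s;0)\bigl[e^{ir\cdot x},\cU^*(t;s)e^{ip\cdot x+\eps\nabla\cdot q}\cU(t;s)\bigr]\cU(s;0)\omega$ in $s$, exploits Jacobi's identity to cancel the $h_H$-derivations, and then closes a Gronwall loop directly between the two quantities of the lemma, the $(1+r^2)$ weight being produced by resolvent differences involving $A$. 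You instead set $B_r(t)=\mathcal A(t)^*e^{ir\cdot x}\mathcal A(t)-e^{ir\cdot x}$, note that the homogeneous part of $i\eps\partial_tB_r=[B_r,\wt S_t]+[e^{ir\cdot x},\wt S_t]$ is an inner derivation and hence norm-isometric, so that Duhamel produces the estimate without any exponential in $\|\wt S_t\|$, and then control the source term $[e^{ir\cdot x},\wt S_\tau]=\cU^*[R_\tau,S_\tau]\cU$ by propagating the auxiliary commutator bounds $\|[\eps\nabla,R_\tau]\|$, $\|[x,R_\tau]\|$ through a separate Gronwall for the $W$-independent object $R_\tau=\cU(\tau;s)e^{ir\cdot x}\cU(\tau;s)^*$ (and $P_\tau=\cU\,\eps\nabla\,\cU^*$ for the second bound). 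Both schemes rest on the same, and only genuinely relativistic, technical ingredient: a resolvent (or Helffer--Sj\"ostrand) representation of functions of $\eps\nabla$ that reduces $\|[\phi(\eps\nabla),Q]\|$ to $\|[\eps\nabla,Q]\|$ with $\eps$-uniform constants. Your decomposition is somewhat more modular --- the bounds for $R_\tau$, $P_\tau$ are established once, independently of the observable --- and costs only one power of $|r|$ from the initial datum $\|[\eps\nabla,R_s]\|=\eps|r|$, whereas the paper's resolvent differences involving $(i\nabla-r)^2$ pay the full $1+r^2$. One small imprecision in your last paragraph: you suggest that ``second-order commutators'' generate the $r^2$; in fact the pair $(\|[\eps\nabla,R_\tau]\|,\|[x,R_\tau]\|)$ already closes under Gronwall at first order, linearly in $|r|$, once the resolvent identity is used for $[\,\cdot\,,\eps\nabla(1-\eps^2\Delta)^{-1/2}]$ --- the quadratic weight in the statement is simply more headroom than your argument needs.
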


\begin{proof} We will apply Gronwall's lemma. First of all, we define a unitary dynamics $\wt{\cU} (t;s)$ (a two-parameter group of unitary transformations) through
\[ i \eps \partial_t \wt{\cU} (t;s) = e^{i r \cdot x}   h_H (t) e^{-i r \cdot x} \wt{\cU} (t;s) = \left[ h_H (t) + A \right] \wt{\cU} (t;s) \]
with the operator 
\[ A = \sqrt{1 + \eps^2 (i \nabla - r)^2} - \sqrt{1 - \eps^2 \Delta} \]
We observe that 
\begin{equation}\label{eq:lm1-0} \begin{split}  \sup_\omega \Big| \tr \, \left[ e^{ir \cdot x}  , \cU^* (t;s)  e^{i x \cdot p + \varepsilon \nabla \cdot q} \cU (t;s) \right] \omega \Big|  = \; & \sup_\omega \Big| \tr \, \left[ e^{ir \cdot x}  , \cU^* (t;s)  e^{i x \cdot p + \varepsilon \nabla \cdot q} \cU (t;s) \right] \,  \cU (s;0) \, \omega \, \wt{\cU}^* (s;0) \, \Big| \\
= \; & \sup_\omega \Big| \tr \; \wt{\cU}^* (s;0) \left[ e^{ir \cdot x}  , \cU^* (t;s)  e^{i x \cdot p + \varepsilon \nabla \cdot q} \cU (t;s) \right] \, \cU (s;0) \, \omega \, \Big| \end{split} \end{equation}
where suprema are taken over all trace class operators $\omega$ with $\tr |\omega| \leq 1$. 

For fixed $\omega$ (with $\tr \, |\omega| \leq 1$) and fixed $t \in \bR$, we compute
\[ \begin{split} 
i \eps \partial_s \, \tr \; \wt{\cU}^* (s;0)   &\left[ e^{ir \cdot x}, \cU^* (t;s)  e^{i x \cdot p + \varepsilon \nabla \cdot q} \cU (t;s) \right] \cU (s;0) \, \omega \\ = \; &-\tr \; \wt{\cU}^* (s;0) \, \left[ h_H (s) ,  \left[ e^{ir \cdot x}, \cU^* (t;s)  e^{i x \cdot p + \varepsilon \nabla \cdot q} \cU (t;s) \right] \right] \cU (s;0) \, \omega \\
&- \tr \; \wt{\cU}^* (s;0) \, A \left[ e^{ir \cdot x}, \cU^* (t;s)  e^{i x \cdot p + \varepsilon \nabla \cdot q} \cU (t;s) \right] \cU (s;0) \, \omega \\
&+ \tr \, \wt{\cU}^* (s;0)  \left[ e^{i r\cdot x} , \left[ h_N (s), \cU^* (t;s) e^{i x \cdot p + \varepsilon \nabla \cdot q}  \cU (t;s) \right] \right] \cU (s;0) \, \omega \\
= \; & \tr \; \wt{\cU}^* (s;0) \, \left[ \cU^* (t;s) e^{i x \cdot p + \varepsilon \nabla \cdot q} \cU( t;s) , A \right] e^{ir \cdot x} \, \cU (s;0) \, \omega \end{split} \]
In the last step, we used Jacobi's identity. Integrating from $s$ to $t$, we find
\[ \begin{split} 
\tr \; \wt{\cU}^* (s;0) \, &\left[ e^{i r \cdot x} , \cU^* (t;s) e^{i x \cdot p + \varepsilon \nabla \cdot q} \cU (t;s) \right]  \cU (s;0) \omega \\ = \; & \tr \; \wt{\cU}^* (t;0) \left[ e^{i r\cdot x} , e^{i x \cdot p + \varepsilon \nabla \cdot q}  \right] \cU (t;0) \, \omega \\ &+ \frac{1}{i\eps} \int_s^t d\tau \, \tr \; \wt{\cU}^* (\tau, 0) \left[ \cU^* (t;\tau) e^{i x \cdot p + \varepsilon \nabla \cdot q} \cU (t;\tau) , A \right] e^{i r\cdot x} \, \cU (\tau; 0) \, \omega \end{split} \] 
Since
\[ \left[ e^{ir \cdot x}, e^{i x \cdot p + \varepsilon \nabla \cdot q} \right] = \left( e^{-i \eps r \cdot q /2} - e^{i\eps r \cdot q/2} \right) e^{i x \cdot (p+r) + \eps \nabla \cdot q} \]
we obtain the bound
\begin{equation}\label{eq:lm1-A} \begin{split} \Big| \tr \; \wt{\cU}^* (s;0) &\left[ e^{ir \cdot x}, \cU^* (t;s) e^{i x \cdot p + \varepsilon \nabla \cdot q}   \cU (t;s) \right] \cU (s;0) \, \omega \Big| \\ \leq \; &C \eps |r| |q| + \frac{1}{\eps} \int_s^t d\tau \, \left| \tr \; \wt{\cU}^* (\tau,0) \left[ \cU^* (t;\tau) e^{i x \cdot p + \varepsilon \nabla \cdot q} \cU (t;\tau) , A \right] e^{i r\cdot x} \, \cU (\tau; 0) \, \omega \right|  \end{split} \end{equation}
valid for any trace class operator $\omega$ with $\tr \, |\omega| \leq 1$. To estimate the second term on the r.h.s. of (\ref{eq:lm1-A}), we rewrite the commutator
\begin{equation}\label{eq:recom} \begin{split} 
\big[ \cU^* (t;s) &e^{i x \cdot p + \varepsilon \nabla \cdot q}   \cU (t;s) , A \big] \\ = \; &C \int_0^\infty d\kappa \, \sqrt{\kappa} \, \frac{1}{\kappa+1 + \eps^2 (i\nabla - r)^2} \left[ \cU^* (t;s) e^{i x \cdot p + \varepsilon \nabla \cdot q}   \cU (t;s) , \eps^2 (i\nabla - r)^2 \right] \frac{1}{\kappa + 1 + \eps^2 (i\nabla - r)^2} \\ &- C \int_0^\infty d\kappa \, \sqrt{\kappa} \, \frac{1}{\kappa+1 - \eps^2 \Delta} \left[ \cU^* (t;s) e^{i x \cdot p + \varepsilon \nabla \cdot q}   \cU (t;s) , \left( - \eps^2 \Delta \right) \right] \frac{1}{\kappa + 1 - \eps^2 \Delta} \\
 = \; &C \int_0^\infty d\kappa \, \sqrt{\kappa} \, \frac{\eps (i\nabla -r)}{\kappa+1 + \eps^2 (i\nabla - r)^2} \left[ \cU^* (t;s) e^{i x \cdot p + \varepsilon \nabla \cdot q}   \cU (t;s) , \eps \nabla \right] \frac{1}{\kappa + 1 + \eps^2 (i\nabla - r)^2} \\  &+C \int_0^\infty d\kappa \, \sqrt{\kappa} \, \frac{1}{\kappa+1 + \eps^2 (i\nabla - r)^2} \left[ \cU^* (t;s) e^{i x \cdot p + \varepsilon \nabla \cdot q}   \cU (t;s) , \eps \nabla \right] \frac{\eps (i\nabla -r)}{\kappa + 1 + \eps^2 (i\nabla - r)^2} \\ 
&- C \int_0^\infty d\kappa \, \sqrt{\kappa} \, \frac{i \eps \nabla}{\kappa+1 - \eps^2 \Delta} \left[ \cU^* (t;s) e^{i x \cdot p + \varepsilon \nabla \cdot q}   \cU (t;s) , \eps \nabla \right] \frac{1}{\kappa + 1 - \eps^2 \Delta} \\ 
&- C \int_0^\infty d\kappa \, \sqrt{\kappa} \, \frac{1}{\kappa+1 - \eps^2 \Delta} \left[ \cU^* (t;s) e^{i x \cdot p + \varepsilon \nabla \cdot q}   \cU (t;s) , \eps \nabla \right] \frac{i\eps \nabla}{\kappa + 1 - \eps^2 \Delta} \end{split} \end{equation}
for an appropriate constant $C \in \mathbb{C}$. Recombining the terms on the r.h.s. of (\ref{eq:recom}), we obtain 
\begin{equation}\label{eq:lm1-comm} \begin{split} 
\big[ &\cU^* (t;s) e^{i x \cdot p + \varepsilon \nabla \cdot q}   \cU (t;s) , A \big] \\ = \; &C \int_0^\infty d\kappa \, \sqrt{\kappa} \, \left( \frac{\eps (i\nabla - r)}{\kappa + 1 + \eps^2 (i\nabla -r)^2} - \frac{i\eps \nabla}{\kappa + 1 - \eps^2 \Delta} \right) \left[ \cU^* (t;s) e^{i x \cdot p + \varepsilon \nabla \cdot q}   \cU (t;s) , \eps \nabla \right] \frac{1}{\kappa+1+ \eps^2 (i\nabla - r)^2} \\ &+ C \int_0^\infty d\kappa \, \sqrt{\kappa} \, \frac{i\eps \nabla}{\kappa+1 - \eps^2 \Delta} \, \left[ \cU^* (t;s) e^{i x \cdot p + \varepsilon \nabla \cdot q}   \cU (t;s) , \eps \nabla \right] \, \left( \frac{1}{\kappa+1 + \eps^2 (i\nabla - r)^2} - \frac{1}{\kappa+1- \eps^2 \Delta} \right) \\
&+  C \int_0^\infty d\kappa \, \sqrt{\kappa} \, \left( \frac{1}{\kappa + 1 + \eps^2 (i\nabla -r)^2} - \frac{1}{\kappa + 1 - \eps^2 \Delta} \right) \left[ \cU^* (t;s) e^{i x \cdot p + \varepsilon \nabla \cdot q}   \cU (t;s) , \eps \nabla \right] \frac{\eps (i\nabla - r)}{\kappa+1+ \eps^2 (i\nabla - r)^2} \\ &+ C \int_0^\infty d\kappa \, \sqrt{\kappa} \, \frac{1}{\kappa+1 - \eps^2 \Delta} \left[ \cU^* (t;s) e^{i x \cdot p + \varepsilon \nabla \cdot q}   \cU (t;s) , \eps \nabla \right] \, \left( \frac{\eps(i\nabla -r)}{\kappa+1 + \eps^2 (i\nabla - r)^2} - \frac{i\eps \nabla}{\kappa+1- \eps^2 \Delta} \right)
\end{split} \end{equation}
We have the bounds $\| (\kappa+q +\eps^2 (i\nabla -r)^2)^{-1} \| \leq (1+\kappa)^{-1}$, $\| \eps (i\nabla -r)/(\kappa+q +\eps^2 (i\nabla -r)^2) \| \leq (1+\kappa)^{-1/2}$. Furthermore, we easily find 
\[ \left\| \frac{\eps (i\nabla -r)}{\kappa+1 + \eps^2 (i\nabla - r)^2} - \frac{i\eps \nabla}{\kappa+1 -\eps^2 \Delta} \right\|   \leq C \left( \frac{\eps |r|}{\kappa+1} + \frac{\eps^2 r^2}{(\kappa+1)^{3/2}} \right) \leq C \frac{\eps (1 + r^2)}{1+\kappa} \]
and, similarly,  
\[ \left\| \frac{1}{\kappa+1 + \eps^2 (i\nabla - r)^2} - \frac{1}{\kappa+1 -\eps^2 \Delta} \right\|  \leq \frac{C \eps (1+r^2)}{(1+\kappa)^{3/2}} \]
for the norms of the operators appearing in (\ref{eq:lm1-comm}). We can  insert (\ref{eq:lm1-comm}) into (\ref{eq:lm1-A}) and, for each one of the terms appearing on the r.h.s. of (\ref{eq:lm1-comm}), for each $\kappa \in [0 ; \infty)$ and for each time $\tau$, we can replace $\omega$ with a new appropriate trace class operator $\wt{\omega}_{\tau,\kappa}$. For example, for the first term on the r.h.s. of (\ref{eq:lm1-comm}), we define 
\[ \wt{\omega}_{\kappa,\tau} = \frac{C (\kappa+1)^{5/2}}{\eps (1+r^2)} \, \frac{1}{\kappa +1+ \eps^2 (i\nabla - r)^2} \, \cU^* (\tau;0) \, \omega \, \cU^* (\tau; 0) \left( \frac{\eps (i\nabla - r)}{\kappa + 1 + \eps^2 (i\nabla -r)^2} - \frac{i\eps \nabla}{\kappa + 1 - \eps^2 \Delta} \right) \]
where the constant $C > 0$ is chosen so, that $\tr \, |\wt{\omega}_{\kappa,\tau}| \leq \tr \, |\omega| \leq 1$. Integrating over $\kappa$, we arrive at 
\begin{equation}\label{eq:lm1-wtomega} \begin{split} \Big| \tr \, \wt{\cU}^* (s;0) &\left[ e^{ir \cdot x}, \cU^* (t;s) e^{i x \cdot p + \varepsilon \nabla \cdot q}   \cU (t;s) \right] \cU (s;0) \omega \Big| \\ \leq \; &C \eps |r| |q| + C (1+r^2) \int_s^t d\tau \, \sup_{\wt{\omega}} \left| \tr \,  \left[ \cU^* (t;\tau) e^{i x \cdot p + \varepsilon \nabla \cdot q} \cU (t;\tau) , \eps \nabla \right] \wt{\omega} \right| \end{split}
\end{equation} 
where the supremum is taken again over all $\wt{\omega}$ with $\tr  \, |\wt{\omega}| \leq 1$. The estimate (\ref{eq:lm1-wtomega}) holds true for all trace class operators $\omega$ with $\tr \, |\omega| \leq 1$. {F}rom (\ref{eq:lm1-0}), we conclude that 
\begin{equation}\label{eq:lm1-gron1} \begin{split}  \sup_{\omega, r \in \bR^3} \frac{1}{1+r^2} \, \Big| \tr \, & \left[ e^{ir \cdot x}, \cU^* (t;s) e^{i x \cdot p + \varepsilon \nabla \cdot q}   \cU (t;s) \right] \omega \Big| \\ \leq \; &C \eps |q| + C  \int_s^t d\tau \, \sup_{\omega} \left| \tr \,  \left[ \cU^* (t;\tau) e^{i x \cdot p + \varepsilon \nabla \cdot q} \cU (t;\tau) , \eps \nabla \right] \omega \right| \end{split}
\end{equation} 

Similarly as in (\ref{eq:lm1-0}), we observe that 
\begin{equation}\label{eq:lm1-0'} \begin{split} 
\sup_\omega \, \Big| \tr \; \left[ \cU^* (t;s)  e^{i x \cdot p + \varepsilon \nabla \cdot q} \cU (t;s) , \eps \nabla \right] \omega \, \Big| = \; &\sup_\omega  \Big| \tr \; \left[ \cU^* (t;s)  e^{i x \cdot p + \varepsilon \nabla \cdot q} \cU (t;s) , \eps \nabla \right] \cU (s;0) \, \omega \, \cU^* (s;0) \Big|
 \\ = \; & \sup_\omega  \Big| \tr \; \cU^* (s;0) \left[ \cU^* (t;s)  e^{i x \cdot p + \varepsilon \nabla \cdot q} \cU (t;s) , \eps \nabla \right] \cU (s;0) \, \omega \,  \Big| \end{split} \end{equation}
where suprema are taken over all trace class operators $\omega$ with $\tr \, |\omega| \leq 1$. For a fixed $\omega$ (with $\tr \, |\omega| \leq 1$) and a fixed $t \in \bR$, we compute
\[ \begin{split}
i \eps \partial_s \tr \; \cU^* (s;0) & \left[ \cU^* (t;s)  e^{i x \cdot p + \varepsilon \nabla \cdot q} \cU (t;s) , \eps \nabla \right] \cU (s;0) \, \omega  \\ & \hspace{2cm} =  \tr \; \cU^* (s;0) \left[ \cU^* (t;s)  e^{i x \cdot p + \varepsilon \nabla \cdot q} \cU (t;s) , \left[ h_H (s) ,  \eps \nabla \right] \right] \cU (s;0) \, \omega
\end{split} \]
Since
\[ \big[ h_H (s) , \eps \nabla \big] = \eps \nabla (V * \rho_s) (x) = \int dk \, \eps k \, \widehat{V} (k) 
\, \widehat{\rho}_s (k) e^{ik \cdot x} \]
we obtain, integrating from $s$ to $t$, 
\[ \begin{split} 
\tr \; \cU^* (s;0) & \left[ \cU^* (t;s)  e^{i x \cdot p + \varepsilon \nabla \cdot q} \cU (t;s) , \eps \nabla \right] \cU (s;0) \, \omega   \\ = \; & \tr \; \cU^* (t;0) \left[ e^{i x \cdot p + \varepsilon \nabla \cdot q} , \eps \nabla \right] \cU (t;0) \, \omega \\ &-i \int_s^t d\tau \int dk \, k \, \widehat{V} (k) \widehat{\rho}_s (k)  \, \tr \; \cU^* (\tau;0) \left[ \cU^* (t;\tau)  e^{i x \cdot p + \varepsilon \nabla \cdot q} \cU (t;\tau) , e^{ik \cdot x} \right] \cU (\tau;0) \, \omega \end{split} \]
With 
\[ \left[ e^{i x \cdot p + \varepsilon \nabla \cdot q} , \eps \nabla \right] = \eps p \, e^{i x \cdot p + \varepsilon \nabla \cdot q} \]
and with the assumption (\ref{ass:lemma_smallr}), we conclude that 
\[ \begin{split} 
\Big| \, \tr \; \cU^* (s;0) &\left[ \cU^* (t;s)  e^{i x \cdot p + \varepsilon \nabla \cdot q} \cU (t;s) , \eps \nabla \right] \cU (s;0) \, \omega \Big| \\ & \hspace{2cm} \leq \eps |p| + C \int_s^t  d\tau \sup_{\omega,k \in \bR^3} \frac{1}{1+k^2} \, \left| \tr \;  \left[ \cU^* (t;\tau)  e^{i x \cdot p + \varepsilon \nabla \cdot q} \cU (t;\tau) , e^{ik \cdot x} \right] \omega \right| \end{split} \]
By (\ref{eq:lm1-0'}), we find
\begin{equation}\label{eq:lm1-gron2}
\begin{split} 
\sup_\omega & \left| \tr \; \left[ \cU^* (t;s)  e^{i x \cdot p + \varepsilon \nabla \cdot q} \cU (t;s) , \eps \nabla \right] \omega \right| \\ & \hspace{1cm} \leq \eps |p| + C \int_s^t d\tau \sup_{\omega, k \in \bR^3} \frac{1}{1+k^2} \left| \tr \;  \left[ \cU^* (t;\tau)  e^{i x \cdot p + \varepsilon \nabla \cdot q} \cU (t;\tau) , e^{ik \cdot x} \right] \omega \right| \end{split} \end{equation}
where, as usual, suprema are taken over trace class operators $\omega$, with $\tr \, |\omega| \leq 1$. 
Combining (\ref{eq:lm1-gron1}) with (\ref{eq:lm1-gron2}) and applying Gronwall's lemma, we arrive at
\[ \sup_{\omega, k \in \bR^3} \frac{1}{1+k^2} \left| \tr \;  \left[ \cU^* (t;s)  e^{i x \cdot p + \varepsilon \nabla \cdot q} \cU (t;s) , e^{ik \cdot x} \right] \omega \right| \leq C (|p|+|q|) \eps e^{C|t-s|} \]
and
\[ \sup_{\omega} \left| \tr \;  \left[ \cU^* (t;s)  e^{i x \cdot p + \varepsilon \nabla \cdot q} \cU (t;s) , \eps \nabla \right] \omega \right| \leq C (|p|+|q|) \eps e^{C|t-s|} \]
This completes the proof of the lemma.
\end{proof}

We also need bounds involving second commutators of evolved observables of the form $\cU^* (t;s) \exp (i x \cdot p + \eps \nabla \cdot q) \cU (t;s)$ with position or momentum operators; these are shown in the next lemma.
\begin{lemma}
\label{lemma:double}
Assume that $V \in L^1\left( \mathbb{R}^3\right)$ satisfies
\begin{equation}\label{eq:lm2-assV}
\int d p \; ( 1+|p|^4)  | \widehat{V}(p) | < \infty. 
\end{equation}
Then there exist $C>0$ such that
\[ \begin{split} 
\sup_{i,j, \omega} \Big| \tr \left[ x_i, \left[ x_j, \mathcal{U}^* (t;s) e^{i x \cdot p + \varepsilon \nabla \cdot q}\mathcal{U}(t;s) \right] \right] \omega \, \Big| &\leq C \varepsilon^2 (|p|+|q|) (1+|p|+|q|) e^{C|t-s|},\\
\sup_{i,j, \omega} \Big| \tr \left[ \eps \nabla_i, \left[ x_j,\mathcal{U}^*(t;s)e^{i x \cdot p + \varepsilon \nabla \cdot q}\mathcal{U}(t;s) \right] \right] \omega \, \Big| &\leq C \varepsilon^2 (|p|+|q|) (1+|p|+|q|)  e^{C|t-s|}, \\
\sup_{i,j, \omega} \Big| \tr \left[ \eps \nabla_i, \left[ \eps \nabla_j, \mathcal{U}^*(t;s)e^{i x \cdot p + \varepsilon \nabla \cdot q}\mathcal{U}(t;s) \right] \right] \omega \, \Big| &\leq C \varepsilon^2 (|p|+|q|) (1+|p|+|q|) e^{C|t-s|},
\end{split} \]
where suprema are taken over all $i,j = 1,2,3$ and over all trace class operators $\omega$ on $L^2 \left( \mathbb{R}^3 \right) $ with $\tr \, |\omega| \leq 1$. 
\end{lemma}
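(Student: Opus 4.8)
The plan is to run the Gronwall argument of Lemma~\ref{lemma:smallr} one level higher, for double rather than single commutators. Abbreviate $O_{t,s} = \cU^*(t;s)\, e^{i x \cdot p + \eps \nabla \cdot q}\, \cU(t;s)$, a unitary operator with $i\eps\,\partial_s O_{t,s} = [h_H(s), O_{t,s}]$ by \eqref{eq:Uts} and $O_{t,t} = e^{i x \cdot p + \eps \nabla \cdot q}$. For $A, A' \in \{x_1, x_2, x_3, \eps\nabla_1, \eps\nabla_2, \eps\nabla_3\}$ and a fixed trace-class $\omega$ with $\tr|\omega| \le 1$, I would differentiate $s \mapsto \tr\, \cU^*(s;0)\, [A, [A', O_{t,s}]]\, \cU(s;0)\, \omega$ as in \eqref{eq:lm1-0'}: the contribution of $[h_H(s), \,\cdot\,]$ drops out after conjugating by $\cU(\,\cdot\,;0)$, and the Jacobi identity leaves, as $i\eps$ times the derivative, the trace against $\cU(s;0)\,\omega\,\cU^*(s;0)$ of
\[ \big[[A, h_H(s)], [A', O_{t,s}]\big] + \big[[A', h_H(s)], [A, O_{t,s}]\big] + \big[[A, [A', h_H(s)]], O_{t,s}\big]. \]
Integrating from $s$ to $t$ expresses $\tr\, [A,[A',O_{t,s}]]\, \omega$ as a boundary term at $s=t$ plus the time integral of these three families.

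The boundary term $[A, [A', e^{i x \cdot p + \eps \nabla \cdot q}]]$ is read off from the canonical commutation relations, $[x_j, e^{i x \cdot p + \eps \nabla \cdot q}] = -\eps q_j\, e^{i x \cdot p + \eps \nabla \cdot q}$ and $[\eps\nabla_j, e^{i x \cdot p + \eps \nabla \cdot q}] = i\eps p_j\, e^{i x \cdot p + \eps \nabla \cdot q}$: the double commutator is a product of two c-numbers of size $O(\eps(|p|+|q|))$ times a unitary, so its trace against $\omega$ is at most $C\eps^2(|p|+|q|)^2$, which is the source of the quadratic growth in $|p|+|q|$ in the claimed bound.

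In the time integral I would reduce the generator: $[x_j, h_H(s)] = [x_j, \sqrt{1-\eps^2\Delta}]$, $[\eps\nabla_j, h_H(s)] = \eps\,(\partial_j(V*\rho_s))$, while at the second level $[\eps\nabla_i, [x_j, \sqrt{1-\eps^2\Delta}]] = [x_i, [\eps\nabla_j, \sqrt{1-\eps^2\Delta}]] = 0$, the operator $[x_i, [x_j, \sqrt{1-\eps^2\Delta}]]$ is $\eps^2$ times a Schwartz-class Fourier multiplier (analogous to the $f_{ij}$ of the proof of Theorem~\ref{thm:semi}, whose Fourier transform decays rapidly), and $[\eps\nabla_i, [\eps\nabla_j, V*\rho_s]] = \eps^2\,(\partial_i\partial_j(V*\rho_s))$. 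The relativistic factor $[x_j, \sqrt{1-\eps^2\Delta}]$ is handled exactly as the operator $A$ in \eqref{eq:recom}--\eqref{eq:lm1-comm}: writing $[x_j, \sqrt{1-\eps^2\Delta}] = c\int_0^\infty \kappa^{1/2}\, (\kappa + 1 - \eps^2\Delta)^{-1}\, [x_j, -\eps^2\Delta]\, (\kappa + 1 - \eps^2\Delta)^{-1}\, d\kappa$ with $[x_j, -\eps^2\Delta] = 2\eps\,(\eps\nabla_j)$, and commuting $[A', O_{t,s}]$ through the resolvents, every such term becomes a combination of commutators $[\eps\nabla, \,\cdot\,]$ dressed by operators whose norm is $O(\eps)$ times a power of $(1+\kappa)^{-1}$ large enough to make the $\kappa$-integral converge (each commutator with a position operator contributing one further power of $\eps$). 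The Fourier-multiplier factors are inserted via their rapidly decaying Fourier transforms, turning their commutators with $O_{t,s}$ into position-translations and hence again into commutators $[\eps\nabla, \,\cdot\,]$; the potential factors are expanded as $\partial_j(V*\rho_s) = \int dk\, i k_j\, \widehat V(k)\, \widehat\rho_s(k)\, e^{ik\cdot x}$ and $\partial_i\partial_j(V*\rho_s) = -\int dk\, k_i k_j\, \widehat V(k)\, \widehat\rho_s(k)\, e^{ik\cdot x}$, with $\|e^{ik\cdot x}\| = 1$ and $|\widehat\rho_s(k)| \le 1$.

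Carrying out these reductions, each of the three families is bounded either by $\eps$ (or $\eps^2$) times a first-commutator quantity $\sup_\omega|\tr\, [\eps\nabla, O_{t,\tau}]\, \omega|$ or $\sup_{\omega}(1+k^2)^{-1}|\tr\, [e^{ik\cdot x}, O_{t,\tau}]\, \omega|$ --- both $O(\eps(|p|+|q|)\, e^{C|t-\tau|})$ by Lemma~\ref{lemma:smallr} --- or by $\eps$ times a second-commutator quantity of the type being estimated, possibly carrying one extra factor $e^{ik\cdot x}$ weighted by $(1+k^2)^{-1}$; the $k$-integrals converge because of $\int (1+|k|^4)|\widehat V(k)|\, dk < \infty$ in \eqref{eq:lm2-assV} (relative to Lemma~\ref{lemma:smallr}, two extra powers of $|k|$ appear, one from a further commutator of $e^{ik\cdot x}$ with $\eps\nabla$, one from the $(1+k^2)$ weight there). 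The self-coupling of the last family is absorbed by a Gronwall inequality for the supremum over $A, A'$, over $\omega$ with $\tr|\omega| \le 1$, and over all momenta weighted by $(1+k^2)^{-1}$ per factor $e^{ik\cdot x}$; this gives the asserted bound $C\eps^2(|p|+|q|)(1+|p|+|q|)\, e^{C|t-s|}$, the linear-in-$(|p|+|q|)$ part coming from the time integral and the quadratic part from the $s=t$ boundary term. The main obstacle is the bookkeeping: tracking, through the iterated Jacobi identities and resolvent insertions, the precise powers of $\eps$, of $|p|+|q|$, and of the $k$-weights, and checking that the family of auxiliary quantities carrying the Gronwall is closed --- the relativistic dispersion being what forbids replacing $[x_j, \sqrt{1-\eps^2\Delta}]$ by a multiple of $\eps\nabla_j$ (as would be possible for $-\eps^2\Delta$) and forces the resolvent representation throughout.
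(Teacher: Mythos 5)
Your plan reproduces the paper's own strategy: conjugate by $\cU(s;0)$, apply the Jacobi identity to peel off the double commutator, treat the boundary term at $s=t$ via the CCR (giving the quadratic $\eps^2(|p|+|q|)^2$), represent $[x,\sqrt{1-\eps^2\Delta}]$ by the same resolvent integral as in \eqref{eq:recom}--\eqref{eq:lm1-comm}, Fourier-expand $\nabla(V*\rho_s)$ and $\nabla^2(V*\rho_s)$, reduce lower-order terms to the single-commutator bounds of Lemma~\ref{lemma:smallr}, and close a Gronwall estimate over the three families $[x,[x,\cdot]]$, $[\eps\nabla,[x,\cdot]]$, $[\eps\nabla,[\eps\nabla,\cdot]]$ --- exactly what the paper does, with the same source of the linear-in-$(|p|+|q|)$ piece from the time integral. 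The one place where you propose a slightly different mechanism is the term $\big[[x_i,[x_j,\sqrt{1-\eps^2\Delta}]],O_{t,s}\big]$: you would treat $[x_i,[x_j,\sqrt{1-\eps^2\Delta}]]$ as $\eps^2$ times a Fourier multiplier, inserted via its Fourier transform (analogous to $f_{ij}$ in Section~5), and turn the commutator with $O_{t,s}$ into momentum-translations; the paper instead writes $[x_i,\tfrac{i\eps\nabla_j}{\sqrt{1-\eps^2\Delta}}]$ directly as a resolvent integral and reduces to $[\eps\nabla,O_{t,s}]$ (see \eqref{eq:lm2-step1-4}). Your route is workable but needs a bit more care than you indicate: the multiplier is not Schwartz (it decays only like $|\xi|^{-1}$), and the bound \eqref{eq:fij-bd} on $\widehat f_{ij}$, which suffices for the $k$-integral without weight in Section~5, gives a log-divergent integral once you pick up the extra factor $|k|$ from converting $e^{k\cdot\eps\nabla}$ into $[\eps\nabla,\cdot]$; one would have to sharpen \eqref{eq:fij-bd} (which is possible, since $f_{ij}$ is smooth and its derivatives of order $\ge 3$ are integrable) or simply fall back to the paper's resolvent representation, which sidesteps the issue. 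Apart from this, the proposal is a faithful sketch of the paper's argument.
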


\begin{proof}
Recall that $\cU (t;s)$ is the dynamics generated by the time-dependent Hartree Hamiltonian $h_H (t) = \sqrt{1-\eps^2 \Delta} + (V*\rho_t) (x)$, i.e.
\[ i\eps \partial_t \cU (t;s) = h_H (t) \cU (t;s) \]
with $\cU (s;s) = 1$ for all $s \in \bR$. To prove the lemma, we start by noticing that
\begin{equation}\label{eq:lm2-step1-0} \begin{split} \sup_{i,j,\omega} \Big| \tr \; &\left[ x_i , \left[ x_j, \cU^* (t;s) e^{ip \cdot x + \eps \nabla \cdot q} \cU (t;s) \right] \right] \omega \Big|  \\ = \; &  \sup_{i,j,\omega} \Big| \tr \;  \left[ x_i , \left[ x_j, \cU^* (t;s) e^{ip \cdot x + \eps \nabla \cdot q} \cU (t;s) \right] \right] \cU(s;0) \, \omega \, \cU^* (s;0)  \Big| \\ = \; &  \sup_{i,j,\omega} \Big| \tr \; \cU^* (s;0)  \left[ x_i , \left[ x_j, \cU^* (t;s) e^{ip \cdot x + \eps \nabla \cdot q} \cU (t;s) \right] \right] \cU(s;0) \, \omega  \Big| \end{split} \end{equation}
For fixed $\omega$ with $\tr \, |\omega| \leq 1$, and for fixed $t \in \bR$, we compute
\[ \begin{split} 
i\eps \partial_s  \tr \; \cU^* (s;0) &\left[ x_i , \left[ x_j , \cU^* (t;s) e^{ip \cdot x + \eps \nabla \cdot q} \cU (t;s) \right] \right] \cU (s;0) \, \omega  \\ = \; &\tr \; \cU^* (s;0) \left[ h_H (s), \left[ x_i, \left[ x_j , \cU^* (t;s) e^{ip \cdot x + \eps \nabla \cdot q} \cU (t;s) \right] \right] \right] \cU (s;0) \, \omega \\ &- \tr \; \cU^* (s;0) \left[ x_i, \left[ x_j ,  \left[ h_H (s), \cU^* (t;s) e^{ip \cdot x + \eps \nabla \cdot q} \cU (t;s) \right] \right] \right] \cU (s;0) \, \omega \\
= \; & \tr \; \cU^* (s;0) \left[ \left[ h_H (s) , x_i \right] , \left[ x_j, \cU^* (t;s) e^{ip \cdot x + \eps \nabla \cdot q} \cU (t;s) \right] \right] \cU (s;0) \, \omega \\ &+   \tr \; \cU^* (s;0) \left[  x_i , \left[ \left[ h_H (s), x_j \right] , \cU^* (t;s) e^{ip \cdot x + \eps \nabla \cdot q} \cU (t;s) \right] \right] \cU (s;0) \, \omega \end{split} \]
With
\begin{equation}\label{eq:comm-hx} \left[ h_H (s) , x_i \right] = \frac{i \eps^2 \nabla_i}{\sqrt{1-\eps^2 \Delta}} = \eps \frac{i\eps \nabla_i}{\sqrt{1-\eps^2 \Delta}} \end{equation}
we obtain  
\[\begin{split}  i \eps \partial_s \tr \; \cU^* (s;0) &\left[ x_i , \left[ x_j ,  \cU^* (t;s) e^{ip \cdot x + \eps \nabla \cdot q} \cU (t;s) \right]  \right] \cU (s;0) \, \omega \\ = \; & \eps \, \tr \; \cU^* (s;0) \, \left[ \frac{i\eps \nabla_i}{\sqrt{1-\eps^2 \Delta}}, \left[ x_j,  \cU^* (t;s) e^{ip \cdot x + \eps \nabla \cdot q} \cU (t;s) \right]  \right] \cU (s;0) \, \omega \\ &+ \eps \, \tr \; \cU^* (s;0) \, \left[ \frac{i\eps \nabla_j}{\sqrt{1-\eps^2 \Delta}}, \left[ x_i,  \cU^* (t;s) e^{ip \cdot x + \eps \nabla \cdot q} \cU (t;s) \right]  \right] \cU (s;0) \, \omega \\
&+ \eps \, \tr \; \cU^* (s;0) \left[ \left[ \frac{i\eps \nabla_j}{\sqrt{1-\eps^2 \Delta}} , x_i \right] , \cU^* (t;s) e^{ip \cdot x + \eps \nabla \cdot q} \cU (t;s) \right] \cU (s;0) \, \omega \end{split} \]
Integrating from $s$ to $t$, we find 
\begin{equation}\label{eq:lm2-step1} \begin{split} 
\tr \; \cU^* (s;0) &\left[ x_i , \left[ x_j , \cU^* (t;s) e^{ip \cdot x + \eps \nabla \cdot q} \cU (t;s) \right] \right] \cU (s;0) \, \omega \\ = \; & \tr \; \cU^* (t;0) \left[ x_i, \left[ x_j , e^{ip \cdot x + \eps \nabla \cdot q} \right]  \right] \cU (t;0) \, \omega \\ &-i \int_s^t d\tau \, \tr \; \cU^* (\tau ;0) \left[ \frac{i\eps \nabla_i}{\sqrt{1-\eps^2 \Delta}}, \left[ x_j, \cU^* (t;\tau) e^{ip \cdot x + \eps \nabla \cdot q} \cU (t;\tau) \right] \right] \cU(\tau;0) \, \omega \\ &-i \int_s^t d\tau \, \tr \; \cU^* (\tau ;0) \left[ \frac{i\eps \nabla_j}{\sqrt{1-\eps^2 \Delta}}, \left[ x_i, \cU^* (t;\tau) e^{ip \cdot x + \eps \nabla \cdot q} \cU (t;\tau) \right] \right] \cU(\tau;0) \, \omega \\
&-i \int_s^t d\tau \, \tr \; \cU^* (\tau ;0) \left[ \left[ \frac{i\eps \nabla_j}{\sqrt{1-\eps^2 \Delta}}, x_i \right] , \cU^* (t;\tau) e^{ip \cdot x + \eps \nabla \cdot q} \cU (t;\tau) \right] \cU (\tau ;0) \, \omega
\end{split} \end{equation}
To estimate the first term on the r.h.s. of (\ref{eq:lm2-step1}), we notice that
\[ \left[ x_i , \left[ x_j , e^{ip \cdot x + \eps \nabla \cdot q} \right] \right] = \eps^2 q_i q_j e^{ip \cdot x + \eps \nabla \cdot q}  \]
Hence
\begin{equation}\label{eq:lm2-step1-1} \Big| \tr \; \cU^* (t;0)   \left[ x_i , \left[ x_j , e^{ip \cdot x + \eps \nabla \cdot q} \cU (t;\tau) \right] \right] \cU(t;0) \omega \Big| \leq \eps^2 q^2 \end{equation}
for all trace class operators $\omega$ with $\tr \, |\omega| \leq 1$ and for all $i,j = 1,2,3$. 

To bound the second term on the r.h.s. of (\ref{eq:lm2-step1}), we write
\[ \frac{\eps \nabla_i}{\sqrt{1-\eps^2 \Delta}} = C \int_0^\infty d\kappa \, \sqrt{\kappa} \frac{\eps \nabla_i}{(\kappa+1-\eps^2 \Delta)^2} \]
for an appropriate constant $C \in \mathbb{C}$. Therefore, we have
\begin{equation}\label{eq:lm2-step12} \begin{split} 
\tr\; &\cU^* (\tau;0) \left[ \frac{i\eps \nabla_i}{\sqrt{1-\eps^2\Delta}} , \left[ x_j , \cU^* (t;\tau) e^{ip \cdot x + \eps \nabla \cdot q}  \cU(t;\tau) \right] \right] \cU (\tau;0) \, \omega \\
= \; &C \int_0^\infty d\kappa \, \sqrt{\kappa} \, \tr \; \cU^* (\tau;0) \frac{1}{\kappa + 1 -\eps^2 \Delta} \left[ \eps^2 \Delta , \left[ x_j,  \cU^* (t;\tau) e^{ip \cdot x + \eps \nabla \cdot q}  \cU(t;\tau) \right] \right] \frac{\eps \nabla_i}{(\kappa+1 - \eps^2 \Delta)^2} \, \cU (\tau ;0) \, \omega \\
&+ C \int_0^\infty d\kappa \, \sqrt{\kappa} \, \tr \; \cU^* (\tau;0) \frac{1}{\kappa + 1 -\eps^2 \Delta} \left[ \eps \nabla_i , \left[ x_j,  \cU^* (t;\tau) e^{ip \cdot x + \eps \nabla \cdot q}  \cU(t;\tau) \right] \right] \frac{1}{\kappa+1 - \eps^2 \Delta} \, \cU (\tau ;0) \, \omega \\
&+ C \int_0^\infty d\kappa \, \sqrt{\kappa} \, \tr \; \cU^* (\tau;0) \frac{\eps \nabla_i}{(\kappa + 1 -\eps^2 \Delta)^2} \left[ \eps^2 \Delta , \left[ x_j,  \cU^* (t;\tau) e^{ip \cdot x + \eps \nabla \cdot q}  \cU(t;\tau) \right] \right] \frac{1}{\kappa+1 - \eps^2 \Delta} \, \cU (\tau ;0) \, \omega \end{split} \end{equation}
Expanding
\[ \begin{split} &\left[ \eps^2 \Delta , \left[ x_j , \cU^* (t;\tau) e^{ip \cdot x + \eps \nabla \cdot q}  \cU(t;\tau) \right] \right] \\ & \hspace{1cm}  = \sum_{m=1}^3 \left\{ \eps \nabla_m \left[ \eps \nabla_m , \left[ x_j , \cU^* (t;\tau) e^{ip \cdot x + \eps \nabla \cdot q}  \cU(t;\tau) \right] \right] + \left[ \eps \nabla_m , \left[ x_j , \cU^* (t;\tau) e^{ip \cdot x + \eps \nabla \cdot q}  \cU(t;\tau) \right] \right] \eps \nabla_m \right\} \end{split} 
\]
in the first and in the third term on the r.h.s. of (\ref{eq:lm2-step12}) and using everywhere the bounds
\[ \begin{split} \left\| \frac{1}{(\kappa + 1 - \eps^2 \Delta)^{n}} \right\| &\leq \frac{1}{(\kappa +1)^{n}} ,
\qquad \left\| \frac{\eps \nabla_i}{(\kappa + 1 - \eps^2 \Delta)^{n}} \right\| \leq \frac{C}{(\kappa +1)^{n-1/2}} , \qquad  \left\| \frac{\eps^2 \, \nabla_i \, \nabla_j}{(\kappa + 1 - \eps^2 \Delta)^{n}} \right\| \leq \frac{C}{(\kappa +1)^{n-1}} \end{split} \]
for $n=1,2$ and $i,j =1,2,3$, we conclude that, for any $\omega$ with $\tr \, |\omega| \leq 1$, 
\begin{equation}\label{eq:lm2-step1-2}
\begin{split} \Big| \tr \; \cU^* (\tau;0) &\left[ \frac{i\eps \nabla_i}{\sqrt{1-\eps^2 \Delta}} , \left[ x_j, \cU^* (t;\tau) e^{ip \cdot x + \eps \nabla \cdot q}  \cU(t;\tau) \right] \right] \cU(\tau;0) \omega \Big|  \\ &\hspace{3cm} \leq C \sup_{i,j,\wt{\omega}} \left| \tr \; \left[ \eps \nabla_i , \left[ x_j , \cU^* (t;\tau) e^{ip \cdot x + \eps \nabla \cdot q}  \cU(t;\tau) \right] \right] \wt{\omega} \right|
\end{split} 
\end{equation}
where the supremum is taken over all $\wt{\omega}$ with $\tr \, |\wt{\omega}| \leq 1$. 

The third term on the r.h.s. of (\ref{eq:lm2-step1}) can be estimated analogously. As for the fourth term, we can proceed similarly, using the representation 
\[ \left[ x_i , \frac{i\eps \nabla_j}{\sqrt{1-\eps^2 \Delta}} \right] = C \eps \delta_{ij} \int_0^\infty d\kappa \, \sqrt{\kappa} \, \frac{1}{(\kappa+1-\eps^2 \Delta)^2} + C \eps \int_0^\infty d\kappa \, \sqrt{\kappa} \, \frac{\eps \nabla_i \eps \nabla_j}{(\kappa+1-\eps^2 \Delta)^3} \]
We find, using Lemma \ref{lemma:smallr},  
\begin{equation}\label{eq:lm2-step1-4} 
\begin{split}  \Big| \tr \; \cU^* (\tau;0) &\left[ \left[ x_i , \frac{i\eps \nabla_j}{\sqrt{1-\eps^2 \Delta}} \right] , \cU^* (t;\tau) e^{ip \cdot x + \eps \nabla \cdot q}  \cU(t;\tau) \right]  \cU (\tau;0) \, \omega \Big| \\ &\hspace{2cm} \leq C \eps \sup_{i, \omega} \left| \tr \left[ \eps \nabla_i , \cU^* (t;\tau) e^{ip \cdot x + \eps \nabla \cdot q}  \cU(t;\tau) \right]  \omega \right| \leq C \eps^2 (|p| + |q|) e^{C|t-\tau|} \end{split} \end{equation}
 
Combining (\ref{eq:lm2-step1-1}), (\ref{eq:lm2-step1-2}) and (\ref{eq:lm2-step1-4}) we conclude from (\ref{eq:lm2-step1}) that
\[ \begin{split} \Big| \tr \; \cU^* (s;0) &\left[ x_i , \left[ x_j, \cU^* (t;\tau) e^{ip \cdot x + \eps \nabla \cdot q}  \cU(t;\tau) \right] \right] \cU (s;0)  \omega \Big| \\ \leq \; & \eps^2 |q|^2 + C \eps^2 (|p| + |q|)  e^{C |t-s|} + C \int_s^t d\tau \, \sup_{i,j,\omega} \left| \tr \left[ \eps \nabla_i , \left[ x_j , \cU^* (t;\tau) e^{ip \cdot x + \eps \nabla \cdot q}  \cU(t;\tau)  \right] \right] \omega \right|   \end{split} \]
{F}rom (\ref{eq:lm2-step1-0}), we find
\begin{equation}\label{eq:lm2-step1-f}
\begin{split} \sup_{i,j,\omega} \Big| &\tr \; \cU^* (s;0) \left[ x_i , \left[ x_j, \cU^* (t;\tau) e^{ip \cdot x + \eps \nabla \cdot q}  \cU(t;\tau) \right] \right] \cU (s;0)  \omega \Big| \\ \leq \; & C \eps^2 (|p|+|q|) (|p| + |q| +1)  e^{C |t-s|} + C \int_s^t d\tau \, \sup_{i,j,\omega} \left| \tr \left[ \eps \nabla_i , \left[ x_j , \cU^* (t;\tau) e^{ip \cdot x + \eps \nabla \cdot q}  \cU(t;\tau)  \right] \right] \omega \right|   \end{split} \end{equation}
where suprema are taken over all $i,j =1,2,3$ and all trace class operators $\omega$ with $\tr \, |\omega| \leq 1$. 

Next, we bound the growth of the integrand on the r.h.s. of (\ref{eq:lm2-step1-f}); the goal is to close the estimates and apply Gronwall's lemma. As above, we observe that
\begin{equation}\label{eq:lm2-step2-0} 
\begin{split} 
\sup_{\omega,i,j} \Big| \tr \, &\left[ \eps \nabla_i, \left[ x_j , \cU^* (t;s) e^{ip \cdot x + \eps \nabla \cdot q}  \cU(t;s)  \right] \right] \omega \Big| \\ &\hspace{3cm} = \sup_{\omega,i,j} \Big| \tr \, \cU^* (s;0) \left[ \eps \nabla_i, \left[ x_j , \cU^* (t;s) e^{ip \cdot x + \eps \nabla \cdot q}  \cU(t;s)  \right]\right] \cU (s;0) \, \omega \Big| \end{split} \end{equation}
For fixed $\omega$ with $\tr \, |\omega| \leq 1$ and fixed $t \in \bR$, we compute 
\[ \begin{split} 
i\eps \partial_s \tr \; \cU^* (s;0) &\left[ \eps \nabla_i , \left[ x_j , \cU^* (t;s) e^{ip \cdot x + \eps \nabla \cdot q}  \cU(t;s)  \right] \right] \cU (s;0) \omega  \\
= \; &\tr \; \cU^* (s;0) \left[ \left[ h_H(s) , \eps \nabla_i \right], \left[ x_j,  \cU^* (t;s) e^{ip \cdot x + \eps \nabla \cdot q}  \cU(t;s)  \right] \right] \cU (s;0) \, \omega \\ &+ \tr \; \cU^* (s;0) \left[ \eps \nabla_i , \left[ \left[ h_H(s) , x_j \right],  \cU^* (t;s) e^{ip \cdot x + \eps \nabla \cdot q}  \cU(t;s)  \right] \right] \cU (s;0) \, \omega 
\end{split} \]
With
\begin{equation}\label{eq:comm-hnab}
\left[ h_H(s) , \eps \nabla_i \right] = \eps \nabla (V* \rho_s) (x) = \eps \int dk \, k \widehat{V} (k) \widehat{\rho}_s (k) e^{ik \cdot x} \end{equation}
and (\ref{eq:comm-hx}), we obtain, integrating from $s$ to $t$,  
\begin{equation}\label{eq:lm2-step2-1} \begin{split} 
\tr \; \cU^* (s;0) &\left[ \eps \nabla_i, \left[ x_j ,  \cU^* (t;s) e^{ip \cdot x + \eps \nabla \cdot q}  \cU(t;s)  \right] \right] \cU (s;0) \omega  \\ = \;&
\tr \; \cU^* (t;0) \left[ \eps \nabla_i , \left[ x_j ,  e^{ip \cdot x + \eps \nabla \cdot q} \right] \right] \cU (t;0) \, \omega \\ &-i \int_s^t d\tau \int dk \, k \, \widehat{V} (k) \widehat{\rho}_\tau (k) \, \tr\; \cU^* (\tau;0) \left[ e^{ik \cdot x}, \left[ x_j ,  \cU^* (t;\tau) e^{ip \cdot x + \eps \nabla \cdot q}  \cU(t;\tau) \right] \right] \cU (\tau;0) \, \omega \\
 &-i \int_s^t d\tau \, \tr \; \cU^* (\tau;0) \left[ \frac{\eps \nabla_j}{\sqrt{1-\eps^2 \Delta}} , \left[ \eps \nabla_i ,  \cU^* (t;\tau) e^{ip \cdot x + \eps \nabla \cdot q}  \cU(t;\tau) \right]  \right] \cU (\tau;0) \, \omega  \end{split} \end{equation}
Using 
\[ \left[ \eps \nabla_i, \left[ x_j ,  e^{ip \cdot x + \eps \nabla \cdot q}  \right] \right] = \eps^2 p_i q_j e^{ip \cdot x + \eps \nabla \cdot q} \, , \]
the representation
\[ \left[ e^{ik \cdot x}, \left[ x_j, \cU^* (t;\tau) e^{ip \cdot x + \eps \nabla \cdot q}  \cU(t;\tau) \right] \right] = \int_0^1 d\kappa \, e^{i\kappa k \cdot x} \left[ ik \cdot x , \left[ x_j, \cU^* (t;\tau) e^{ip \cdot x + \eps \nabla \cdot q}  \cU(t;\tau) \right] \right] e^{i (1-\kappa) k \cdot x} \, , \]
the assumption (\ref{eq:lm2-assV}) on the potential $V$ and proceeding as in (\ref{eq:lm2-step12}) to control the second term on the r.h.s. of (\ref{eq:lm2-step2-1}), we arrive at
\begin{equation}\label{eq:lm2-step2-f} \begin{split} 
\sup_{i,j, \omega} \Big| \tr \; &\left[ \eps \nabla_i , \left[ x_j ,  \cU^* (t;s) e^{ip \cdot x + \eps \nabla \cdot q}  \cU(t;s) \right] \right] \omega \Big| \\ \leq \; &\eps^2 |p| |q| + \int_s^t d\tau \sup_{i,j,\omega} \left| \tr \; \left[ x_i, \left[ x_j ,  \cU^* (t;\tau) e^{ip \cdot x + \eps \nabla \cdot q}  \cU(t;\tau) \right] \right] \omega \right| \\ &+ \int_s^t d\tau \sup_{i,j,\omega} \left| \tr \; \left[ \eps \nabla_i, \left[ \eps \nabla_j ,  \cU^* (t;\tau) e^{ip \cdot x + \eps \nabla \cdot q}  \cU(t;\tau) \right] \right] \omega \right|  \end{split} \end{equation}
where suprema are taken over $i,j =1,2,3$ and all $\omega$ with $\tr \, |\omega| \leq 1$. 

To close the estimate, we have to control the growth of the last integrand on the r.h.s. of (\ref{eq:lm2-step2-f}). To this end, we notice first of all that
\begin{equation}\label{eq:lm2-step3-0} \begin{split} 
\sup_{i,j,\omega} \Big| \tr\; &\left[ \eps \nabla_i , \left[  \eps \nabla_j ,  \cU^* (t;s) e^{ip \cdot x + \eps \nabla \cdot q}  \cU(t;s) \right] \right] \omega \Big| \\ &\hspace{2cm} = \sup_{i,j,\omega} \Big| \tr \; \cU^* (s;0) \left[ \eps \nabla_i , \left[  \eps \nabla_j ,  \cU^* (t;s) e^{ip \cdot x + \eps \nabla \cdot q}  \cU(t;s) \right] \right] \cU (s;0) \, \omega \Big| \end{split} \end{equation}
For fixed $\omega$ with $\tr \, |\omega| \leq 1$ and $t \in \bR$, we have
\[ \begin{split} i\eps \partial_s \tr \; \cU^* (s;0) &\left[ \eps \nabla_i, \left[ \eps \nabla_j , \cU^* (t;s) e^{ip \cdot x + \eps \nabla \cdot q}  \cU(t;s) \right] \right] \cU(s;0) \, \omega  \\ = \; &\tr \; \cU^* (s;0) \left[ \left[ h_H (s) , \eps \nabla_i \right], \left[ \eps \nabla_j ,  \cU^* (t;s) e^{ip \cdot x + \eps \nabla \cdot q}  \cU(t;s) \right] \right] \cU (s;0)\, \omega \\ &+ \tr \; \cU^* (s;0) \left[ \eps \nabla_i, \left[ \left[ h_H (s), \eps \nabla_j \right] , \cU^* (t;s) e^{ip \cdot x + \eps \nabla \cdot q}  \cU(t;s) \right] \right] \cU (s;0) \; \omega \end{split} \]
With (\ref{eq:comm-hnab}) we find
\begin{equation}\label{eq:lm2-step3-1} \begin{split} 
\tr \; \cU^* (s;0) &\left[ \eps \nabla_i, \left[ \eps \nabla_j , \cU^* (t;s) e^{ip \cdot x + \eps \nabla \cdot q}  \cU(t;s) \right] \right] \cU (s;0) \, \omega \\
= \; & \tr \; \cU^* (t;0) \left[ \eps \nabla_i, \left[ \eps \nabla_j, e^{ip \cdot x + \eps \nabla \cdot q} \right] \right] \cU (t;0) \, \omega \\ &-i \int_s^t d\tau \int dk \, k_i \, \widehat{V} (k)  \widehat{\rho}_\tau (k) \, \tr \; \cU^* (\tau;0) \left[ e^{ik \cdot x} , \left[ \eps \nabla_j, \cU^* (t;\tau) e^{ip \cdot x + \eps \nabla \cdot q}  \cU(t;\tau) \right] \right] \cU (\tau;0) \, \omega \\
&-i\int_s^t d\tau \, \int dk \, k_j \, \widehat{V} (k) \widehat{\rho}_\tau (k) \, \tr \; \cU^* (\tau ;0) \left[ \eps \nabla_i, \left[ e^{ik\cdot x}, \cU^* (t;\tau) e^{ip \cdot x + \eps \nabla \cdot q}  \cU(t;\tau) \right] \right] \cU (\tau;0) \,\omega \end{split} \end{equation}
Using
\[ \left[ \eps \nabla_i, \left[ \eps \nabla_j ,  e^{ip \cdot x + \eps \nabla \cdot q} \right] \right] = \eps^2 p_i p_j e^{ip \cdot x + \eps \nabla \cdot q} \]
the first term on the r.h.s. of (\ref{eq:lm2-step3-1}) can be bounded by
\begin{equation}\label{eq:lm2-step3-11} \Big| \tr \; \cU^* (t;0) \left[ \eps \nabla_i, \left[ \eps \nabla_j, e^{ip \cdot x + \eps \nabla \cdot q} \right] \right] \cU (t;0) \, \omega  \Big| \leq \eps^2 |p|^2 \end{equation}
for all $\omega$ with $\tr \, |\omega| \leq 1$. With the representation
\[ \left[ e^{ik \cdot x}, \left[ \eps \nabla_i , \cU^* (t;\tau) e^{ip \cdot x + \eps \nabla \cdot q}  \cU(t;\tau) \right] \right] = \int_0^1 d\kappa \, e^{i\kappa k\cdot x} \left[ ik\cdot x, \left[ \eps \nabla_i , \cU^* (t;\tau) e^{ip \cdot x + \eps \nabla \cdot q}  \cU(t;\tau) \right] \right] e^{i(1-\kappa) k\cdot x} \]
we can bound the integrand in the second term on the r.h.s. of (\ref{eq:lm2-step3-1}) by
\begin{equation}\label{eq:lm2-step3-12} \begin{split} 
\Big|  \tr \; \cU^* (\tau;0) &\left[ e^{ik \cdot x} , \left[ \eps \nabla_j, \cU^* (t;\tau) e^{ip \cdot x + \eps \nabla \cdot q}  \cU(t;\tau) \right] \right] \cU (\tau;0) \, \omega \Big| \\
&\leq C |k| \sup_{i,j,\omega} \Big| \tr\; \left[ \eps \nabla_j, \left[ x_i , \cU^* (t;\tau) e^{ip \cdot x + \eps \nabla \cdot q}  \cU(t;\tau) \right] \right] \omega \Big| \end{split}  \end{equation}
where the supremum is taken over all $i,j =1,2,3$ and all $\omega$ with $\tr \, |\omega| \leq 1$ (here we  used the fact that $[x_i, [ \eps \nabla_j , A]] = [ \eps, \nabla_j, [x_i, A]]$ for every $A$, since $[\eps \nabla_j, x_i ] = \eps \delta_{ij}$ commutes with all operators). Writing instead 
\[ \begin{split} 
&\left[ \eps \nabla_i , \left[ e^{ik\cdot x},  \cU^* (t;\tau) e^{ip \cdot x + \eps \nabla \cdot q}  \cU(t;\tau) \right] \right] \\ &\hspace{2cm} = \left[ e^{ik\cdot x} , \left[ \eps \nabla_i , \cU^* (t;\tau) e^{ip \cdot x + \eps \nabla \cdot q}  \cU(t;\tau) \right] \right] + i\eps k_i  \left[ e^{ik \cdot x},  \cU^* (t;\tau) e^{ip \cdot x + \eps \nabla \cdot q}  \cU(t;\tau) \right] \end{split} \]
and using Lemma \ref{lemma:smallr}, the integrand in the third term on the r.h.s. of (\ref{eq:lm2-step3-1}) can be estimated by 
\begin{equation} \label{eq:lm2-step3-13} 
\begin{split} 
\Big| \tr \; \cU^* (\tau ;0) &\left[ \eps \nabla_i, \left[ e^{ik\cdot x}, \cU^* (t;\tau) e^{ip \cdot x + \eps \nabla \cdot q}  \cU(t;\tau) \right] \right] \cU (\tau;0) \omega \Big| \\ &\hspace{2cm} \leq C |k| \sup_{i,j,\omega} \left| \tr \; \left[ \eps \nabla_i, \left[ x_j , \cU^* (t;\tau) e^{ip \cdot x + \eps \nabla \cdot q}  \cU(t;\tau) \right] \right] \omega \right| 
+ C \eps^2 |k| (1 + k^2) e^{C|t-\tau|} \end{split} 
\end{equation}
Inserting (\ref{eq:lm2-step3-11}), (\ref{eq:lm2-step3-12}) and (\ref{eq:lm2-step3-13}) in (\ref{eq:lm2-step3-1}) we conclude, with the assumption (\ref{eq:lm2-assV}), that
\[ \begin{split} 
\Big| \tr \; \cU^* (s;0) &\left[ \eps \nabla_i, \left[ \eps \nabla_j , \cU^* (t;s) e^{ip \cdot x + \eps \nabla \cdot q}  \cU(t;s) \right] \right] \cU (s;0) \, \omega \Big| \\ & \hspace{1cm} \leq \eps^2 |p|^2 + C \int_s^t d\tau \, \sup_{i,j,\omega} \left| \tr \;  \left[ \eps \nabla_i, \left[ x_j , \cU^* (t;\tau) e^{ip \cdot x + \eps \nabla \cdot q}  \cU(t;\tau) \right] \right] \omega \right| \end{split} \]
Hence, we arrive at
\[ \begin{split}  \sup_{i,j,\omega} \Big| \tr \;  &\left[ \eps \nabla_i, \left[ \eps \nabla_j , \cU^* (t;s) e^{ip \cdot x + \eps \nabla \cdot q}  \cU(t;s) \right] \right] \, \omega \Big| \\ & \hspace{2cm} \leq \eps^2 |p|^2 + C \int_s^t d\tau \, \sup_{i,j,\omega} \left| \tr \;  \left[ \eps \nabla_i, \left[ x_j , \cU^* (t;\tau) e^{ip \cdot x + \eps \nabla \cdot q}  \cU(t;\tau) \right] \right] \omega \right| \end{split} \]
Combining the last bound with (\ref{eq:lm2-step1-f}) and (\ref{eq:lm2-step2-f}) and applying Gronwall's lemma, we obtain
\[ \begin{split} 
\sup_{i,j,\omega} \Big| \tr \;  \left[ \eps \nabla_i, \left[ \eps \nabla_j , \cU^* (t;s) e^{ip \cdot x + \eps \nabla \cdot q}  \cU(t;s) \right] \right] \, \omega \Big| & \leq C \eps^2 (|p|+|q|)  (1+|p|+|q|) e^{C|t-s|} \\
\sup_{i,j,\omega} \Big| \tr \;  \left[ \eps \nabla_i, \left[ x_j , \cU^* (t;s) e^{ip \cdot x + \eps \nabla \cdot q}  \cU(t;s) \right] \right] \, \omega \Big| & \leq C \eps^2 (|p|+|q|) (1+|p|+|q|) e^{C|t-s|} \\
\sup_{i,j,\omega} \Big| \tr \;  \left[ x_i, \left[ x_j , \cU^* (t;s) e^{ip \cdot x + \eps \nabla \cdot q}  \cU(t;s) \right] \right] \, \omega \Big| & \leq C \eps^2 (|p|+|q|) (1+|p|+|q|) e^{C|t-s|} 
\end{split} \]
as claimed.
\end{proof}

The propagation of regularity for solutions of the relativistic Vlasov equation plays an important role in our analysis. 

\begin{proposition}
\label{prop:regest}
Assume that 
\begin{align}
\int d p \; | \widehat{V}(p) | ( 1+|p|^2) < \infty.
\end{align}
Let $W_t$ be the solution of the Vlasov equation \eqref{eq:Vlasov} with initial data $W_0$. For $k \in \{ 1,2,3,4,5,6\}$, there exists a constant $C>0$, that depends an $\|W_0\|_{H^2_4}$ but not on the higher Sobolev norms, such that
\begin{equation}\label{eq:propreg}
\|W_t\|_{H^k_2}\leq C e^{C |t|} \|W_0\|_{H_2^k}.
\end{equation}
\end{proposition}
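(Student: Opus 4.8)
The plan is to pass to the Lagrangian picture. Under the assumption $\int(1+|p|^2)|\widehat V(p)|\,dp<\infty$, the force $-\nabla(V*\rho_t)$ is bounded and Lipschitz in $x$, so the relativistic Vlasov equation \eqref{eq:Vlasov} generates the characteristic flow $\Phi_t(x,v)=(X_t(x,v),V_t(x,v))$ solving $\dot X_t=V_t(1+V_t^2)^{-1/2}$, $\dot V_t=-\nabla(V*\rho_t)(X_t)$ (compare \eqref{eq:cases}), with $W_t=W_0\circ\Phi_t^{-1}$. The drift field is divergence free, so both $\Phi_t$ and $\Phi_t^{-1}$ preserve Lebesgue measure on $\bR^3\times\bR^3$; hence $\|W_t\|_{L^q}=\|W_0\|_{L^q}$ for every $q\in[1,\infty]$, and in particular $\|\rho_t\|_1\le\|W_t\|_1=\|W_0\|_1\le C\|W_0\|_{H^0_2}$. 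Consequently, for $j=0,1,2$,
\[ \|\nabla^j(V*\rho_t)\|_\infty\le C\|\rho_t\|_1\int|p|^j|\widehat V(p)|\,dp\le C \]
uniformly in $t$, with $C$ depending only on $\|W_0\|_{H^2_4}$. Also, since $|\dot X_s|\le 1$ and $|\dot V_s|\le\|\nabla(V*\rho_s)\|_\infty\le C$, one has the displacement bound $|\Phi_t(z)-z|+|\Phi_t^{-1}(z)-z|\le C(1+|t|)$.

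\noindent Next I would establish, by induction on $m\in\{1,\dots,k\}$, that
\[ Q_t^{(m)}:=\max_{1\le|\alpha|\le m}\ \sup_z\big(|\partial^\alpha\Phi_t(z)|+|\partial^\alpha\Phi_t^{-1}(z)|\big)\le C\,e^{C|t|}. \]
The case $m=1$ is exactly \eqref{eq:nablaxvXV}; it uses only the $j=2$ bound above together with the fact that $v\mapsto v(1+v^2)^{-1/2}$ has bounded derivatives of every order. For $m\ge2$, differentiating the integral equations for $X_t,V_t$ and applying the Fa\`a di Bruno formula isolates the top-order contributions $(\nabla g)(V_s)\,\partial^\alpha V_s$ and $(\nabla^2(V*\rho_s))(X_s)\,\partial^\alpha X_s$, whose prefactors are bounded uniformly in $s$, while every remaining term involves derivatives $\partial^{\alpha'}\Phi_s$ with $|\alpha'|\le m-1$ (controlled by induction) multiplied by $\nabla^j(V*\rho_s)$ with $3\le j\le m+1$. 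The latter factors I would control through the convolution identity $\nabla^j(V*\rho_s)=\nabla^2V*\nabla_x^{\,j-2}\rho_s$, which gives
\[ \|\nabla^j(V*\rho_s)\|_\infty\le\Big(\int|p|^2|\widehat V(p)|\,dp\Big)\|\nabla_x^{\,j-2}\rho_s\|_1\le C\big(Q_s^{(j-2)}\big)^{j-2}\|W_0\|_{W^{j-2,1}}\le C\,e^{C|s|}, \]
where in the middle inequality one writes $\nabla_x^{\,j-2}\rho_s=\int\nabla_x^{\,j-2}(W_0\circ\Phi_s^{-1})\,dv$, expands by Fa\`a di Bruno and uses measure preservation, and where the last step uses $Q_s^{(j-2)}\le Ce^{C|s|}$ (induction, since $j-2\le m-1$) together with the elementary embedding $\|f\|_{L^1(\bR^3\times\bR^3)}\le C\|f\|_{H^0_2}$ (valid because $(1+x^2)^{-1}(1+v^2)^{-1}\in L^2(\bR^3\times\bR^3)$ and $(1+x^2)(1+v^2)\le C(1+x^2+v^2)^2$). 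Feeding these bounds into the two coupled Gr\"onwall inequalities for $\sup_z|\partial^\alpha X_t|$ and $\sup_z|\partial^\alpha V_t|$ closes the induction.

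\noindent With $Q_t^{(k)}\le Ce^{C|t|}$ in hand, I would conclude by estimating the weighted norm directly. Expanding $\partial^\beta(W_0\circ\Phi_t^{-1})$ by Fa\`a di Bruno yields, for $|\beta|\le k$,
\[ |\partial^\beta W_t(z)|\le C\big(1+Q_t^{(k)}\big)^{k}\sum_{|\gamma|\le k}\big|(\partial^\gamma W_0)(\Phi_t^{-1}(z))\big|; \]
squaring, multiplying by $(1+z^2)^2$, integrating, and changing variables $z\mapsto\Phi_t(z)$ (measure preserving) turns the weight into $(1+|\Phi_t(z)|^2)^2$, which by the displacement bound is $\le C(1+t^2)^2(1+z^2)^2$. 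Hence
\[ \|W_t\|_{H^k_2}\le C\big(1+Q_t^{(k)}\big)^{k}(1+t^2)\,\|W_0\|_{H^k_2}\le C\,e^{C|t|}\,\|W_0\|_{H^k_2}, \]
which is \eqref{eq:propreg}; the polynomial factor and the power $k$ collapse into the final $Ce^{C|t|}$, and the constant enters (through $\|W_0\|_1$, the first-order flow bound, and the low-order densities) via $\|W_0\|_{H^2_4}$, the dependence on $\|W_0\|_{H^k_2}$ being linear. An alternative, equivalent route is a direct energy estimate: differentiate \eqref{eq:Vlasov}, pair with $(1+z^2)^2\partial^\beta W_t$ and integrate by parts; the relativistic transport term then contributes only lower order (because $v(1+v^2)^{-1/2}$ and all its derivatives are bounded), and the force terms are controlled by the same bounds on $\nabla^j(V*\rho_t)$, with Gr\"onwall and induction on $k$ closing the estimate.

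\noindent The main obstacle is the mismatch between the weak regularity demanded of the potential --- essentially two derivatives, encoded by $\int(1+|p|^2)|\widehat V|<\infty$ --- and the high order $k\le6$ of the Sobolev norm one propagates: the force $\nabla(V*\rho_t)$ is only $C^1$ a priori, so every derivative of the force beyond the second must be generated by transferring derivatives onto $\rho_t$, hence onto the characteristic flow and onto $W_0$, via the identity $\nabla^j(V*\rho_t)=\nabla^2V*\nabla_x^{j-2}\rho_t$. The delicate bookkeeping is to carry out this transfer so that all time dependence stays inside a single exponential $e^{C|t|}$ --- which is why the Gr\"onwall estimates are nested by the order of differentiation rather than run all at once --- and so that the weights are handled consistently: weight $2$ on the propagated norms $\|W_t\|_{H^k_2}$, with the stronger norm $\|W_0\|_{H^2_4}$ entering only at the level of bounding $\rho_t$ and its low-order $x$-derivatives in $L^1_x$, the precise exponents being dictated by the one-, three-, and six-dimensional integrability of the polynomial weights that appear.
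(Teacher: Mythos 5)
Your proposal follows the same route as the paper, which itself simply defers to the non-relativistic argument of \cite[Prop.~B.1]{BPSS}: one passes to the characteristic flow $\Phi_t$, controls its derivatives by nested Gr\"onwall estimates, and propagates regularity of $W_t = W_0\circ\Phi_t^{-1}$ by Fa\`a di Bruno together with measure preservation and the displacement bound for the weight. You also correctly identify the single new relativistic ingredient, namely that $v\mapsto v(1+v^2)^{-1/2}$ is smooth with bounded derivatives of every order, which is exactly what the paper's one-sentence proof points to.

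There is, however, a gap in the bookkeeping of constants, and it is precisely the delicate part of the statement. You assert $Q_t^{(m)}\le C e^{C|t|}$ with $\|W_0\|_{W^{j-2,1}}$ (for $j$ up to $m+1$) absorbed into $C$; but for $m\ge 4$ this makes $C$ depend on $\|W_0\|_{W^{m-1,1}}$, i.e.\ on higher Sobolev norms than $\|W_0\|_{H^2_4}$, which contradicts the stated dependence of the constant in the proposition. The correct structure is to keep the exponential rate governed only by $\|\nabla^2(V*\rho_s)\|_\infty$ and the low-order flow derivatives (these are controlled by $\|W_0\|_{H^2_4}$), and to track the higher Sobolev norms of $W_0$ entering the multiplicative constants of $Q_t^{(m)}$ for $m\ge 4$ \emph{linearly}, through the inhomogeneous terms of the Gr\"onwall inequalities. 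Once this is done, for $k\le5$ the Fa\`a di Bruno expansion of $\partial^\beta(W_0\circ\Phi_t^{-1})$ yields a bound linear in $\sum_{j\le k}\|W_0\|_{H^j_2}\le C\|W_0\|_{H^k_2}$, closing the argument; but for $k=6$ the term $\partial^\gamma W_0 \cdot \partial^4\Phi_t^{-1}\,(\partial\Phi_t^{-1})^2$ with $|\gamma|=3$ produces a contribution of order $\|W_0\|_{H^3_2}(1+\|W_0\|_{H^3_2})$, i.e.\ quadratic in an intermediate Sobolev norm. Reducing $\|W_0\|_{H^3_2}^2$ to $C\|W_0\|_{H^6_2}$ with $C$ depending only on $\|W_0\|_{H^2_4}$ requires an interpolation estimate between the weighted Sobolev norms, which your sketch does not supply --- you simply assert that the dependence on $\|W_0\|_{H^k_2}$ is linear. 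This is consistent with the paper's phrasing, which treats $k\le5$ as the ``easy'' range and $k=6$ as needing a (brief) extension; the extension is precisely this extra step.
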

\begin{proof}
As explained in the proof of \cite[Prop. B.1]{BPSS}, propagation of regularity follows if we can establish regularity of the flow $(x,v) \to (X_t (x,v), V_t (x,v))$ defined by Newton's equations
\[ \dot{X}_t (x,v) = \frac{V_t (x,v)}{\sqrt{1+ V^2_t (x,v)}}, \qquad \dot{V}_t (x,v) = - \nabla \left( V * \rho_t \right) (X_t (x,v)) \]
Using regularity of the vectorfield $\bR^3 \ni z \to z/\sqrt{1+z^2} \in \bR^3$, the proof of (\ref{eq:propreg}) can be easily reduced to the non-relativistic case handled in \cite[Prop. B.1]{BPSS} for $k \leq 5$. The arguments can be easily extended to the case $k=6$.
\end{proof}

Finally, we also need to propagate some commutator bounds along Hartree dynamics. 
We proceed here similarly as in \cite[Proposition C.1]{BPSS}. 
\begin{proposition}
Assume
\begin{equation}\label{eq:prop-assV}
\int d p \; |\widehat{V}(p)| (1+|p|^2) < \infty.
\end{equation}
Let $\omega_{N,t}$ be the solution of the relativistic Hartree equation
\[ i \varepsilon \partial_t \omega_{N,t}= \left[ \sqrt{1-\varepsilon^2\Delta} + \left( V * \rho_t\right), \omega_{N,t} \right] \]
with initial data $\omega_{N,t=0}=\omega_N$. Then there exists a constant $C>0$, such that
\[ \begin{split}  
\| [x, \omega_{N,t}]\|_{\mathrm{HS}} &\leq C e^{C |t|} \left( \| [x, \omega_{N}] \|_{\mathrm{HS}} +  \| [\varepsilon\nabla, \omega_{N}]\|_{\mathrm{HS}}\right) \\
\| [\varepsilon\nabla, \omega_{N,t}]\|_{\mathrm{HS}} &\leq C e^{C |t|} \left( \| [x, \omega_{N}]\|_{\mathrm{HS}} +  \| [\varepsilon\nabla, \omega_{N}]\|_{\mathrm{HS}}\right).
\end{split} \]
Moreover
\[ \begin{split} 
\| [x, \omega_{N,t}]\|_{\mathrm{tr}} &\leq C e^{C |t|} \left( \| [x, \omega_{N}]\|_{\mathrm{tr}} +  \| [\varepsilon\nabla, \omega_{N}]\|_{\mathrm{tr}}\right)\\
\| [\varepsilon\nabla, \omega_{N,t}]\|_{\mathrm{tr}} &\leq C e^{C |t|} \left( \| [x, \omega_{N}]\|_{\mathrm{tr}} +  \| [\varepsilon\nabla, \omega_{N}]\|_{\mathrm{tr}}\right).
\end{split} \]
\end{proposition}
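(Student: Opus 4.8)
The plan is to differentiate each of the four commutators in time, use the Jacobi identity to peel off the action of the full Hartree Hamiltonian $h_H(t)$ (which generates a unitary conjugation and hence leaves every operator ideal norm invariant), and then close a coupled pair of Gronwall inequalities for $\|[x,\omega_{N,t}]\|$ and $\|[\varepsilon\nabla,\omega_{N,t}]\|$, where $\|\cdot\|$ stands for either the Hilbert--Schmidt or the trace norm; the two cases are handled by literally the same argument. First I would use $i\varepsilon\partial_t\omega_{N,t}=[h_H(t),\omega_{N,t}]$ together with Jacobi's identity to get
\[ i\varepsilon\partial_t[x_i,\omega_{N,t}] = [h_H(t),[x_i,\omega_{N,t}]] - \big[[h_H(t),x_i],\omega_{N,t}\big]. \]
Since $V*\rho_t$ is a multiplication operator it commutes with $x$, so by \eqref{eq:comm-hx} we have $[h_H(t),x_i]=\varepsilon\, i\varepsilon\nabla_i(1-\varepsilon^2\Delta)^{-1/2}$, and the inhomogeneous term is exactly $\varepsilon$ times a commutator of $\omega_{N,t}$ with the bounded operator $i\varepsilon\nabla_i(1-\varepsilon^2\Delta)^{-1/2}$. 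Passing to the evolution $\mathcal{U}(t;s)$ from \eqref{eq:Uts} and using Duhamel's formula together with the invariance of the ideal norm under conjugation by $\mathcal{U}$ (the factor $\varepsilon$ cancelling the $1/\varepsilon$ in Duhamel), I obtain
\[ \|[x_i,\omega_{N,t}]\| \le \|[x_i,\omega_N]\| + \int_0^t \big\|\,\big[\,i\varepsilon\nabla_i(1-\varepsilon^2\Delta)^{-1/2},\,\omega_{N,s}\,\big]\,\big\|\,ds. \]

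Next I would treat $[\varepsilon\nabla_i,\omega_{N,t}]$ analogously. Now the relativistic kinetic energy commutes with $\varepsilon\nabla$, and by \eqref{eq:comm-hnab} one has $[h_H(t),\varepsilon\nabla_i]=\varepsilon\nabla_i(V*\rho_t)(x)=\varepsilon\int dk\, ik_i\widehat V(k)\widehat\rho_t(k)e^{ik\cdot x}$, so Duhamel gives
\[ \|[\varepsilon\nabla_i,\omega_{N,t}]\| \le \|[\varepsilon\nabla_i,\omega_N]\| + \int_0^t ds \int dk\,|k|\,|\widehat V(k)|\,|\widehat\rho_s(k)|\,\big\|[e^{ik\cdot x},\omega_{N,s}]\big\|. \]
I would then insert $|\widehat\rho_s(k)|=N^{-1}|\tr\, e^{-ik\cdot x}\omega_{N,s}|\le 1$ (because $0\le\omega_{N,s}\le1$ and $\tr\,\omega_{N,s}=N$) and the elementary ideal-norm bound $\|[e^{ik\cdot x},\omega]\|\le|k|\,\|[x,\omega]\|$, which follows from $[e^{ik\cdot x},\omega]=\int_0^1 e^{i\lambda k\cdot x}[ik\cdot x,\omega]e^{i(1-\lambda)k\cdot x}\,d\lambda$; with the assumption \eqref{eq:prop-assV} this yields $\|[\varepsilon\nabla_i,\omega_{N,t}]\| \le \|[\varepsilon\nabla_i,\omega_N]\| + C\int_0^t \|[x,\omega_{N,s}]\|\,ds$.

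The one step where the relativistic dispersion genuinely differs from \cite{BPSS} is the bound $\|[i\varepsilon\nabla_i(1-\varepsilon^2\Delta)^{-1/2},\omega_{N,s}]\|\le C\|[\varepsilon\nabla,\omega_{N,s}]\|$: in the non-relativistic case $[x_i,-\varepsilon^2\Delta/2]=-\varepsilon^2\nabla_i$ exactly, whereas here one picks up the nonlocal operator $\varepsilon\nabla_i(1-\varepsilon^2\Delta)^{-1/2}$. I would handle this exactly as in the proof of Lemma \ref{lemma:double}, writing $\varepsilon\nabla_i(1-\varepsilon^2\Delta)^{-1/2}=C\int_0^\infty d\kappa\,\sqrt{\kappa}\,\varepsilon\nabla_i(\kappa+1-\varepsilon^2\Delta)^{-2}$, commuting $\omega_{N,s}$ through the resolvents to produce $[\varepsilon^2\Delta,\omega_{N,s}]=\sum_m\{\varepsilon\nabla_m[\varepsilon\nabla_m,\omega_{N,s}]+[\varepsilon\nabla_m,\omega_{N,s}]\varepsilon\nabla_m\}$ plus one direct factor $[\varepsilon\nabla_i,\omega_{N,s}]$, and then bounding everything with $\|(\kappa+1-\varepsilon^2\Delta)^{-1}\|\le(1+\kappa)^{-1}$ and $\|\varepsilon\nabla_j(\kappa+1-\varepsilon^2\Delta)^{-1}\|\le C(1+\kappa)^{-1/2}$; sandwiching $[\varepsilon\nabla,\omega_{N,s}]$ by these bounded operators only multiplies the ideal norm, and the remaining $\kappa$-integral converges. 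This gives $\|[x_i,\omega_{N,t}]\|\le\|[x_i,\omega_N]\|+C\int_0^t\|[\varepsilon\nabla,\omega_{N,s}]\|\,ds$, and summing with the estimate for $[\varepsilon\nabla_i,\omega_{N,t}]$ and invoking Gronwall's lemma delivers all four bounds. The main technical point to be careful about is not a conceptual obstacle but the unboundedness of $x$ and $\nabla$ (and the fact that trace-class membership of the commutators must be propagated, not just their norm): I would make the formal manipulations rigorous by first replacing $x$ and $\varepsilon\nabla$ by bounded regularizations such as $x(1+\delta x^2)^{-1}$ and $\varepsilon\nabla(1-\delta\varepsilon^2\Delta)^{-1}$, running the estimates uniformly in $\delta$, and removing the cutoff at the end, as is standard in this context.
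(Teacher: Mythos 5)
Your proposal is correct and follows essentially the same route as the paper's own proof: Jacobi's identity, the commutators $[h_H(t),x]=\varepsilon\,i\varepsilon\nabla(1-\varepsilon^2\Delta)^{-1/2}$ and $[h_H(t),\varepsilon\nabla]=\varepsilon\nabla(V*\rho_t)$, a resolvent representation of the square root to reduce the first to $[\varepsilon\nabla,\omega_{N,s}]$, a Fourier expansion together with $\|[e^{ik\cdot x},\omega]\|\le|k|\|[x,\omega]\|$ to reduce the second to $[x,\omega_{N,s}]$, and Gronwall to close. The only cosmetic difference is the integral representation used for $(1-\varepsilon^2\Delta)^{-1/2}$ (you borrow the form $C\int d\kappa\sqrt\kappa\,\varepsilon\nabla_i(\kappa+1-\varepsilon^2\Delta)^{-2}$ from Lemma \ref{lemma:double}, whereas the paper uses $C\int d\nu\,\nu^{-1/2}(\nu+1-\varepsilon^2\Delta)^{-1}$ and then splits $[i\varepsilon\nabla(1-\varepsilon^2\Delta)^{-1/2},\omega]$ by the Leibniz rule), and your remark on regularizing $x$ and $\varepsilon\nabla$ is a sound point the paper leaves implicit.
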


\begin{proof}
Let $h_H(t)= \sqrt{1-\varepsilon^2\Delta} + \left( V * \rho_t\right) (x)$ and $\mathcal{U}(t;s)$ be the unitary evolution generated by $h_H(t)$, i.e. 
\begin{align*}
i \varepsilon \partial_t \mathcal{U}(t;s) = h_H(t) \mathcal{U}(t;s),
\end{align*}
with $\mathcal{U}(s;s)=1$ for all $s \in \mathbb{R}$. We compute 
\[ \begin{split} 
	i \varepsilon \partial_t \mathcal{U}^*(t;0)[x,\omega_{N,t}]\mathcal{U}(t;0) = \; & - \mathcal{U}^*(t;0) \left[ h_H(t), [x, \omega_{N,t}]\right] \mathcal{U}(t;0) + \mathcal{U}^*(t;0) \left[ x, [h_H(t), \omega_{N,t}] \right] \mathcal{U}(t;0)\\
= \; & \mathcal{U}^*(t;0) \left[ [h_H(t), x], \omega_{N,t}\right] \mathcal{U}(t;0)\\
= \; & \varepsilon \, \mathcal{U}^*(t;0) \left[ \frac{i \varepsilon\nabla}{\sqrt{1-\varepsilon^2\Delta}}, \omega_{N,t}\right] \mathcal{U}(t;0),
\end{split} \]
where we used the Jacobi's identity. We have 
\begin{equation}\label{eq:prop1} \begin{split} 
i \varepsilon \partial_t& \mathcal{U}^*(t;0)[x,\omega_{N,t}]\mathcal{U}(t;0)\\ 
= \; & \varepsilon \, \mathcal{U}^*(t;0) i \varepsilon\nabla \left[ \frac{1}{\sqrt{1-\varepsilon^2\Delta}}, \omega_{N,t}\right] \mathcal{U}(t;0)+ \varepsilon \, \mathcal{U}^*(t;0)  \left[ i \varepsilon\nabla, \omega_{N,t}\right]\frac{1}{\sqrt{1-\varepsilon^2\Delta}} \mathcal{U}(t;0).
\end{split} \end{equation}
We write, for an appropriate constant $C \in \bR$,  
\[ \begin{split} 
\left[ \frac{1}{\sqrt{1-\varepsilon^2\Delta}}, \omega_{N,t}\right] = \; &C  \int_0^\infty \frac{d \nu}{\sqrt{\nu}} \left[ \frac{1}{\nu+1-\varepsilon^2\Delta}, \omega_{N,t}\right]\\
= \; &C  \int_0^\infty \frac{d \nu}{\sqrt{\nu}}\frac{1}{\nu+1-\varepsilon^2\Delta} \left[ -\varepsilon^2 \Delta, \omega_{N,t}\right]\frac{1}{\nu+1-\varepsilon^2\Delta} \\
= \; &C  \int_0^\infty \frac{d \nu}{\sqrt{\nu}}\frac{i \varepsilon\nabla}{\nu+1-\varepsilon^2\Delta} \left[ i \varepsilon \nabla, \omega_{N,t}\right]\frac{1}{\nu+1-\varepsilon^2\Delta}  \\
	&+C \int_0^\infty \frac{d \nu}{\sqrt{\nu}}\frac{1}{\nu+1-\varepsilon^2\Delta} \left[ i \varepsilon\nabla, \omega_{N,t}\right]\frac{i \varepsilon\nabla}{\nu+1-\varepsilon^2\Delta}.
\end{split}\]
Inserting in (\ref{eq:prop1}) and integrating over time, we find 
\[ \begin{split}
	[x,\omega_{N,t}]
	= \; & \mathcal{U}(t;0) \, [x,\omega_{N}] \, \mathcal{U}^*(t;0)\\
	&+i C \int_0^t d \tau \int_0^\infty \frac{d \nu}{\sqrt{\nu}}  \,  \mathcal{U}(t; \tau) \frac{i \varepsilon\nabla}{\nu+1-\varepsilon^2\Delta} \left[ i \varepsilon \nabla, \omega_{N,\tau}\right]\frac{1}{\nu+1-\varepsilon^2\Delta}\, \mathcal{U}^*(t; \tau)  \\
	&+i C \int_0^t d \tau \int_0^\infty \frac{d \nu}{\sqrt{\nu}}  \, \mathcal{U}(t; \tau) \frac{1}{\nu+1-\varepsilon^2\Delta} \left[ i \varepsilon\nabla, \omega_{N,\tau}\right]\frac{i \varepsilon\nabla}{\nu+1-\varepsilon^2\Delta}\, \mathcal{U}^*(t;\tau)\\
	&+i \int_0^t d \tau \; \mathcal{U}^*(t;\tau)  \left[ i \varepsilon\nabla, \omega_{N,\tau}\right]\frac{1}{\sqrt{1-\varepsilon^2\Delta}} \, \mathcal{U}(t;\tau).
\end{split} \]
Hence 
\begin{equation}\label{app:bound1}
\begin{split} 
\| [x,\omega_{N,t}]\|_{\mathrm{HS}} \leq \; & \|[x,\omega_{N,t}]\|_{\mathrm{HS}} + C \int_0^t d \tau  \int_0^\infty \frac{d \nu}{\sqrt{\nu}} \frac{1}{\nu+1} \, \| [ \varepsilon \nabla, \omega_{N,\tau}] \|_{\mathrm{HS}} + \int_0^t d \tau \; \|[ \varepsilon\nabla, \omega_{N,\tau}]\|_{\mathrm{HS}} \\
\leq \; & \|[x,\omega_{N,t}]\|_{\mathrm{HS}}+ C \int_0^t d \tau \; \| [ \varepsilon \nabla, \omega_{N,\tau}] \|_{\mathrm{HS}}. 
\end{split}\end{equation}
Next, we observe that 
\[ \begin{split}
i \varepsilon \partial_t \mathcal{U}^*(t;0)[\varepsilon\nabla,\omega_{N,t}]\mathcal{U}(t;0) = \; & - \mathcal{U}^*(t;0) \left[ h_H(t), [\varepsilon\nabla, \omega_{N,t}]\right] \mathcal{U}(t;0) + \mathcal{U}^*(t;0) \left[ \varepsilon \nabla, [h_H(t), \omega_{N,t}] \right] \mathcal{U}(t;0)\\
= \; & \mathcal{U}^*(t;0) \left[ [h_H(t), \varepsilon\nabla], \omega_{N,t}\right] \mathcal{U}(t;0)\\
= \; & \varepsilon \, \mathcal{U}^*(t;0) \left[ \nabla \left(V * \rho_t\right), \omega_{N,t}\right] \mathcal{U}(t;0)
\end{split} \] 
Expanding the interaction in a Fourier integral, using $|\widehat{\rho}_t (k)| \leq \| \rho_t \|_1 = 1$ and the assumption (\ref{eq:prop-assV}), we obtain 
\begin{equation}\label{ap:bound2}
\begin{split}
\| [\varepsilon\nabla,\omega_{N,t}]\|_{\mathrm{HS}} \leq \; & \| [\varepsilon\nabla, \omega_N]\|_{\mathrm{HS}} + \int d k \; |k|^2 |\widehat{V}(k)|  |\widehat{\rho}(k)| \int_0^t d \tau \;  \| [ x, \omega_{N,\tau}] \|_{\mathrm{HS}} \\
\leq \; & \| [\varepsilon\nabla, \omega_N]\|_{\mathrm{HS}} + C \int_0^t d \tau \; \| [ x, \omega_{N,\tau}] \|_{\mathrm{HS}}.
\end{split} \end{equation}
Combining now \eqref{app:bound1} and \eqref{ap:bound2}, we find, applying Gronwall's lemma,
\begin{align*}
	&\| [x, \omega_{N,t}]\|_{\mathrm{HS}} + \| [\varepsilon\nabla, \omega_{N,t}]\|_{\mathrm{HS}} \leq C e^{C |t|} \left( \| [x, \omega_{N}]\|_{\mathrm{HS}} +  \| [\varepsilon\nabla, \omega_{N}]\|_{\mathrm{HS}}\right).
\end{align*}
The estimates for the trace norms of the commutators can be shown in the same way. 
\end{proof}


\begin{thebibliography}{10}

\bibitem{AKN} L. Amour, M. Khodja and J. Nourrigat.
The semiclassical limit of the time dependent Hartree-Fock equation: the Weyl symbol of the solution.
\emph{Anal. PDE} {\bf 6} (2013), no. 7, 1649--1674. 

\bibitem{AKN2} L. Amour, M. Khodja and J. Nourrigat.
The classical limit of the Heisenberg and time dependent Hartree-Fock equations: the Wick symbol of the solution. {\it Math. Res. Lett.} {\bf 20} (2013), no. 1, 119--139.

\bibitem{APPP} A. Athanassoulis, T. Paul, F. Pezzotti and M. Pulvirenti. Strong Semiclassical Approximation of Wigner Functions for the Hartree Dynamics. \emph{Rend. Lincei Mat. Appl.} \textbf{22}  (2011), 525--552.

%\bibitem{B}
%V.~{Bach}. {Error bound for the {H}artree-{F}ock energy %of atoms and
%  molecules}. \emph{Comm. Math. Phys.} \textbf{147} %(1992), no.~3, 527--548.

\bibitem{BBPPT} V. Bach, S. Breteaux, S. Petrat, P. Pickl, T. Tzaneteas. Kinetic energy estimates for the accuracy of the time-dependent Hartree-Fock approximation with Coulomb interaction. Preprint arXiv:1403.1488.
  
\bibitem{BJPSS} N.~{Benedikter}, V.~Jaksic, M.~{Porta}, C. Saffirio and B.~{Schlein}. Mean-field Evolution of Fermionic Mixed States. {\it Comm. Pure Appl. Math.} {\bf 69} (2016), no. 12, 2250–-2303. 

\bibitem{BPSS} N.~{Benedikter}, M.~{Porta}, C. Saffirio and B.~{Schlein}. From the Hartree dynamics to the Vlasov equation. {\it Arch. Rat. Mech. Anal.} 

\bibitem{BPS1} N.~{Benedikter}, M.~{Porta} and B.~{Schlein}. {Mean-field evolution of fermionic systems}. \emph{Comm. Math. Phys.} {\bf 331} (2014), 1087--1131.

\bibitem{BPS2} N.~{Benedikter}, M.~{Porta} and B.~{Schlein}. {Mean-field dynamics of fermions with relativistic dispersion}. \emph{J. Math. Phys.} {\bf 55} (2014), 021901.
 
 
\bibitem{D} R.~L.~{Dobrushin}. {Vlasov equations}. \emph{Functional Analysis and Its Applications.} {\bf 13} (1979), no.~2, 115--123.


\bibitem{EESY}
A.~{Elgart}, L.~{Erd{\H{o}}s}, B.~{Schlein} and H.-T.~{Yau}. {Nonlinear {H}artree equation as the mean field limit of weakly
  coupled fermions}. \emph{J. Math. Pures Appl. (9)} \textbf{83} (2004), no.~10,
  1241--1273.
  
  
\bibitem{LP}
P.-L. Lions and T. Paul. Sur les mesures de Wigner. {\it Rev. 
Mat. Iberoamericana} {\bf 9} (1993), 553--618. 


\bibitem{NS}
H.~{Narnhofer} and G.~L.~{Sewell}. {Vlasov hydrodynamics of a quantum
  mechanical model}. \emph{Comm. Math. Phys.} \textbf{79} (1981), no.~1, 9--24.
 

\bibitem{PP} S.~{Petrat} and P.~{Pickl}. {A new method and a new scaling for deriving fermionic mean-field dynamics}. {\tt arXiv:1409.0480}.

\bibitem{PP09} F. Pezzotti and M. Pulvirenti. Mean-field limit and Semiclassical Expansion of a Quantum Particle System. {\it Ann. H. Poincar\'e} {\bf 10} (2009), no. 1, 145--187.

\bibitem{Sp}
H.~{Spohn}. {On the {V}lasov hierarchy}, \emph{Math. Methods Appl. Sci.} \textbf{3}
  (1981), no.~4, 445--455.
\end{thebibliography}
\end{document}